\newcommand*{\bR}{\mathbb{R}}
\newcommand*{\bN}{\mathbb{N}}
\newcommand*{\bE}{\mathbb{E}}
\newcommand*{\bV}{\mathbb{V}}
\newcommand*{\cN}{\mathcal{N}}
\newcommand*{\cP}{\mathcal{P}}
\newcommand*{\Cov}{\mathbb{C}\mathrm{ov}}
\newcommand*\quark{\mathpalette\quark@{.5}}
\newcommand*\quark@[2]{\mathbin{\vcenter{\hbox{\scalebox{#2}{$\; \m@th#1\bullet \;$}}}}}
\newcommand{\mylabel}[2]{#2\def\@currentlabel{#2}\label{#1}}
\newcommand*{\rd}{\mathrm{d}}
\DeclarePairedDelimiterX{\infdivx}[2]{(}{)}{#1\;\delimsize\|\;#2}
\DeclareMathOperator{\Id}{Id}
\definecolor{darkgreen}{rgb}{0,0.4,0}
\newcommand*{\todo}[1]{\bgroup\color{red}TODO:~#1\egroup}
\newcommand*{\pk}[1]{\bgroup\color{orange}HL:~#1\egroup}
\newcommand*{\ik}[1]{\bgroup\color{cyan}IK:~#1\egroup}
\newcommand*{\dw}[1]{\bgroup\color{violet}DW:~#1\egroup}
\newcommand*{\cls}[1]{\bgroup\color{blue}ClS:~#1\egroup}
\DeclarePairedDelimiter\abs{\lvert}{\rvert}%
\DeclarePairedDelimiter\mynorm{\lVert}{\rVert}%
\let\oldabs\abs
\def\abs{\@ifstar{\oldabs}{\oldabs*}}
\let\oldnorm\mynorm
\def\mynorm{\@ifstar{\oldnorm}{\oldnorm*}}
\newcommand{\SPD}[1]{\mathbf{S}_{++}^{#1}}
\newcommand{\SPSD}[1]{\mathbf{S}_{+}^{#1}}
\newcommand{\mix}{\mathsf{mix}}
\newcommand{\strat}{\mathsf{str}}
\newcommand{\mytt}[1]{\textnormal{\texttt{#1}}}
\newcommand{\II}{\mytt{II}}
\newcommand{\IM}{\mytt{IM}}
\newcommand{\MI}{\mytt{MI}}
\newcommand{\MM}{\mytt{MM}}
\newcommand{\BPF}{\mytt{BPF}}
\newcommand{\EnKF}{\mytt{EnKF}}
\newcommand{\QMC}{\mytt{QMC}\text{-}}
\newcommand{\TQMC}{\mathsf{TQMC}}
\newcommand{\WEnKF}{\textsf{WEnKF}}
\newcommand{\mi}{m^{(i)}_{t}}
\newcommand{\mic}{m^{(i)}_{\mathsf{c},t}}
\newcommand{\mip}{m^{(i)}_{\mathsf{p},t}}
\newcommand{\Qi}{\Sigma_{t}}
\newcommand{\Qic}{\Sigma_{\mathsf{c},t}}
\newcommand{\Qip}{\Sigma_{\mathsf{p},t}}
\newcommand{\qi}{q^{(i)}_{t}}
\newcommand{\qic}{q^{(i)}_{\mathsf{c},t}}
\newcommand{\qip}{q^{(i)}_{\mathsf{p},t}}
\newcommand{\pim}{\pi^{\mathsf{mix}}_{t}}
\newcommand{\posti}{p^{(i)}_{t}}
\newcommand{\postm}{p^{\mathsf{mix}}_{t}}
\newcommand{\qm}{q^{\mathsf{mix}}_{t}}
\newcommand{\qmc}{q^{\mathsf{mix}}_{\mathsf{c},t}}
\newcommand{\qmp}{q^{\mathsf{mix}}_{\mathsf{p},t}}
\newcommand{\genq}{\mathfrak{q}}
\newcommand{\chposti}{\check{p}^{(i)}_{t}}
\newcommand{\chqi}{\check{q}^{(i)}_{t}}
\newcommand{\Zpost}{Z_t^{\mathsf{post}}}
\newcommand{\Zmix}{Z_t^{\mathsf{mix}}}
\newcommand{\ran}{\operatorname{ran}}
\newcommand{\Proj}{\operatorname{Proj}}
\newcommand{\smallgraybox}{\textcolor{lightgray}{\rule[-0.1ex]{0.45em}{1.2ex}}}
\newcommand{\biggraybox}{\textcolor{lightgray}{\rule[0ex]{0.6em}{1.5ex}}}
\newcommand{\emp}{\mathsf{emp}}
\theoremstyle{plain}
\newtheorem{theorem}{Theorem}[section]
\newaliascnt{proposition}{theorem}
\newtheorem{proposition}[proposition]{Proposition}
\newaliascnt{lemma}{theorem}
\newtheorem{lemma}[lemma]{Lemma}
\newaliascnt{corollary}{theorem}
\theoremstyle{definition}
\newaliascnt{definition}{theorem}
\newaliascnt{assumption}{theorem}
\newaliascnt{condition}{theorem}
\newaliascnt{remark}{theorem}
\newtheorem{remark}[remark]{Remark}
\newaliascnt{example}{theorem}
\newtheorem{example}[example]{Example}
\newaliascnt{notation}{theorem}
\newtheorem{notation}[notation]{Notation}
\newaliascnt{conjecture}{theorem}
\newaliascnt{problem}{theorem}
\newaliascnt{openproblem}{theorem}
\crefname{theorem}{Theorem}{Theorems}
\Crefname{theorem}{Theorem}{Theorems}
\crefname{proposition}{Proposition}{Propositions}
\Crefname{proposition}{Proposition}{Propositions}
\crefname{lemma}{Lemma}{Lemmas}
\Crefname{lemma}{Lemma}{Lemmas}
\crefname{corollary}{Corollary}{Corollaries}
\Crefname{corollary}{Corollary}{Corollaries}
\crefname{definition}{Definition}{Definitions}
\Crefname{definition}{Definition}{Definitions}
\crefname{assumption}{Assumption}{Assumptions}
\Crefname{assumption}{Assumption}{Assumptions}
\crefname{condition}{Condition}{Conditions}
\Crefname{condition}{Condition}{Conditions}
\crefname{remark}{Remark}{Remarks}
\Crefname{remark}{Remark}{Remarks}
\crefname{example}{Example}{Examples}
\Crefname{example}{Example}{Examples}
\crefname{algorithm}{Algorithm}{Algorithms}
\Crefname{algorithm}{Algorithm}{Algorithms}
\crefname{notation}{Notation}{Notations}
\Crefname{notation}{Notation}{Notations}
\crefname{conjecture}{Conjecture}{Conjectures}
\Crefname{conjecture}{Conjecture}{Conjectures}
\crefname{problem}{Problem}{Problems}
\Crefname{problem}{Problem}{Problems}
\crefname{openproblem}{Open Problem}{Open Problems}
\Crefname{openproblem}{Open Problem}{Open Problems}
\newcommand{\norm}[1]{\lVert #1 \rVert}
\newcommand{\Norm}[1]{\left\Vert #1 \right\Vert}
\numberwithin{equation}{section}
\numberwithin{figure}{section}
\numberwithin{table}{section}
\newcommand*{\arXiv}[1]{\bgroup\color{blue}\href{https://arxiv.org/abs/#1}{arXiv:#1}\egroup}
\NewDocumentCommand{\doi}{m}
 {
  \group_begin:
  \tl_set:Nn \l_tmpa_tl {#1}

  \tl_replace_all:Nnn \l_tmpa_tl {\_} {\c_underscore_str}
  \regex_replace_once:nnN { \s+ \z } {} \l_tmpa_tl
  \regex_replace_once:nnN { [\.,] \z } {} \l_tmpa_tl

  \tl_replace_all:Nnn \l_tmpa_tl {<} {\c_percent_str 3C}
  \tl_replace_all:Nnn \l_tmpa_tl {>} {\c_percent_str 3E}
  \tl_replace_all:Nnn \l_tmpa_tl {(} {\c_percent_str 28}
  \tl_replace_all:Nnn \l_tmpa_tl {)} {\c_percent_str 29}
  \tl_replace_all:Nnn \l_tmpa_tl {:} {\c_percent_str 3A}
  \tl_replace_all:Nnn \l_tmpa_tl {;} {\c_percent_str 3B}

  % IMPORTANT: use \c_colon_str in the scheme
  \href{https\c_colon_str//doi.org/\tl_use:N \l_tmpa_tl}
       {\textcolor{blue}{doi:\texttt{\detokenize{#1}}}}
  \group_end:
 }
\newcommand*{\email}[1]{\bgroup\color{blue}\href{mailto:#1}{#1}\egroup}
\setlist[enumerate]{nosep}
\setlist[itemize]{nosep}
\renewcommand{\qedsymbol}{$\blacksquare$}
\renewenvironment{proof}[1][\proofname]{\noindent{\bfseries\sffamily #1.} }{\hfill\qedsymbol\medskip}
\let\oldtitle\title
\renewcommand{\title}[1]{\oldtitle{#1}\newcommand{\theshorttitle}{#1}}
\newcommand{\shorttitle}[1]{\renewcommand{\theshorttitle}{#1}}
\let\oldauthor\author
\renewcommand{\author}[1]{\oldauthor{#1}\newcommand{\theshortauthor}{#1}}
\newcommand{\shortauthor}[1]{\renewcommand{\theshortauthor}{#1}}
\newcommand{\theabstract}[1]{\par\bgroup\noindent\textbf{\textsf{Abstract.}} #1\egroup}
\newcommand{\thekeywords}[1]{\par\smallskip\bgroup\noindent\textbf{\textsf{Keywords.}}\newcommand{\and}{ $\bullet$ } #1\egroup}
\newcommand{\themsc}[1]{\par\smallskip\bgroup\noindent\textbf{\textsf{2020 Mathematics Subject Classification.}}\newcommand{\and}{ $\bullet$ } #1\egroup}
\newcommand*{\affilref}[1]{\ref{affiliation#1}}
\newcommand*{\affiliation}[3]{
	\footnotetext[#1]{\label{affiliation#2}#3}
}
\setlist{topsep=0.3ex, itemsep=0.3ex}
\title{Mixture-Weighted Ensemble Kalman Filter with Quasi-Monte Carlo Transport}
\shorttitle{Mixture-Weighted EnKF with QMC Transport}
\author{%
	Ilja~Klebanov\textsuperscript{\affilref{FUB}}%
	\and%
	Claudia~Schillings\textsuperscript{\affilref{FUB}}%
	\and%
	Dana~Wrischnig\textsuperscript{\affilref{FUB}}%
}
\date{\today}
\begin{document}
\maketitle
\affiliation{1}{FUB}{Freie Universit{\"a}t Berlin, Arnimallee 6, 14195 Berlin, Germany 
\newline
(\email{ilja.klebanov@fu-berlin.de}, \email{c.schillings@fu-berlin.de}, \email{dana.wrischnig@fu-berlin.de})}

\begin{abstract}\small
\theabstract{The Bootstrap Particle Filter (BPF) and the Ensemble Kalman Filter (EnKF) are two widely used methods for sequential Bayesian filtering: the BPF is asymptotically exact but can suffer from weight degeneracy, while the EnKF scales well in high dimension (typically with localization) yet is exact only in the linear--Gaussian case. We combine these approaches by retaining the EnKF transport step and adding a principled importance-sampling correction.

Our first contribution is a general importance-sampling theory for \emph{mixture} targets and proposals, including variance comparisons between individual- and mixture-based estimators.
We then interpret the stochastic EnKF analysis as sampling from explicit Gaussian-mixture proposals obtained by conditioning on the current or previous ensemble, which leads to six self-normalized IS--EnKF schemes. We embed these updates into a broader class of ensemble-based filters and prove consistency and error bounds, including weight-variance comparisons and sufficient conditions ensuring finite-variance importance weights.

As a second contribution, we construct transported quasi-Monte Carlo (TQMC) point sets for the Gaussian-mixture laws arising in prediction and analysis, yielding TQMC-enhanced variants that can substantially reduce sampling error without changing the filtering pipeline.

Numerical experiments on benchmark models compare the proposed mixture-weighted and TQMC-enhanced filters, showing improved filtering accuracy relative to BPF, EnKF, and the standard weighted EnKF, and that the weighted schemes eliminate the EnKF error plateau often caused by analysis--target mismatch.}
\thekeywords{{Data assimilation}%
\and%
{Sequential Bayesian filtering}%
\and%
{Ensemble Kalman filter}%
\and%
{Importance sampling}%
\and%
{Gaussian mixture}%
\and%
{Transported quasi-Monte Carlo}%

}
\themsc{60G35  % Signal detection and filtering
\and
62M05  % Markov processes: estimation; hidden Markov models
\and
65C35  % Stochastic particle methods
\and
65C05  % Monte Carlo methods
\and
62F15  % Bayesian inference
% \and
% 93E11  % Filtering and adaptive control
% \and
% 62L10  % Sequential statistical analysis

}
\end{abstract}

\section{Introduction}

Sequential Bayesian filtering aims to approximate the filtering distribution \(p(x_t| Y_t)\) for a latent state sequence \((x_t)_{t \in \bN_{0}}\) in $\mathbb{R}^d$ generated by a stochastic state-evolution model, given observations \(Y_t=(y_1,\dots,y_t)\).
Particle methods approximate this distribution by a finite \emph{ensemble} represented as a (possibly weighted) empirical measure,
\[
p(x_t | Y_t) \approx \sum_{i=1}^N w_t^{(i)}\,\delta_{x_t^{(i)}},
\qquad \sum_{i=1}^N w_t^{(i)}=1,
\]
where \(\delta_x\) denotes the Dirac measure at \(x\).

Most sequential filters operate by iterating a two-stage cycle---\emph{prediction} followed by \emph{analysis} (or \emph{update}).
The prediction step is essentially the same across methods: propagate the ensemble \((x_{t-1}^{(i)})_{i=1}^N\) through the dynamics with process noise to obtain the forecast (prior) ensemble \((\hat x_t^{(i)})_{i=1}^N\).
The methods then differ primarily in the analysis step, where the new observation \(y_t\) is incorporated via a Bayesian update.
Two canonical realizations of this paradigm are the Bootstrap Particle Filter (BPF; \citealt{gordon1993novel,doucet2001SMC,chopin2020SMC}) and the Ensemble Kalman Filter (EnKF; \citealt{Evensen1994Sequential,Evensen2003}); \Cref{tab:BPF_vs_EnKF} contrasts their update cycles (see \Cref{sec:filtering} for details).
In the analysis step, the BPF performs no transport \((\tilde x_t^{(i)}=\hat x_t^{(i)})\) but corrects toward the data via likelihood-based reweighting followed by resampling. This yields consistency under broad conditions, yet can suffer from weight degeneracy and large Monte Carlo variance, especially in high dimensions.
The EnKF, by contrast, \emph{transports} the forecast particles to analysis (posterior) states \((\tilde x_t^{(i)})\) using a Kalman-type gain \(K_t\), typically retains equal weights, and omits resampling, an approach that scales well but is exact only in the linear-Gaussian setting and may be biased otherwise.
Both particle filters and ensemble Kalman filters face challenges in high-dimensional problems; in the EnKF, practical applicability can often be achieved through covariance localization, cf.\ \cite{Houtekamer2001sEnKF,Hamill2001DistanceDependentFiltering,Morzfeld2017EnKFcollapse}.

\newcolumntype{M}[1]{>{\centering\arraybackslash}m{#1}}
\begin{table}[t]
\caption{Comparison of the Bootstrap Particle Filter (BPF) and the Ensemble Kalman Filter (EnKF).}
\label{tab:BPF_vs_EnKF}
\centering
\renewcommand{\arraystretch}{1.15}
\begin{tabular}{@{} M{2.8cm} M{2.3cm} M{0.29\linewidth} M{0.29\linewidth} @{}}
    \toprule
    & & \textbf{BPF} & \textbf{EnKF} \\
    \midrule    
    \textbf{PREDICTION} &
    &    
    \multicolumn{2}{c}{%
        \begin{minipage}{0.45\linewidth}\centering
            Propagate the ensemble $(x_{t-1}^{(i)})_{i=1}^N$ through the dynamics with process noise to obtain the forecast (prior) ensemble $(\hat x_t^{(i)})_{i=1}^N$.
        \end{minipage}%
    }
    \\
    \midrule    
    \multirow{2}{*}[-4.5ex]{\textbf{ANALYSIS}}
    & \textbf{Transport}
    & \shortstack{$\tilde x_t^{(i)}=\hat x_t^{(i)}$\\ (no transport)}
    & \shortstack{Prior particles $(\hat x_t^{(i)})_{i=1}^N$\\ are shifted to the\\ posterior states $\tilde x_t^{(i)}$.}
    \\[6ex]    
    & \textbf{Reweight \& Resample}
    & \shortstack{Weight by likelihood,\\ normalize, then resample to\\ obtain equally weighted $x_t^{(i)}$.}
    & \shortstack{$x_t^{(i)}=\tilde x_t^{(i)}$\\ (no reweighting \\ or resampling)}
    \\
    \bottomrule
\end{tabular}
\end{table}
In this work, we leverage the complementary strengths of both paradigms: we retain the EnKF's transport to steer particles toward regions of high posterior density, and we restore a principled correction via importance sampling (IS), optionally followed by resampling, thereby combining the EnKF's computational efficiency with the posterior accuracy of particle methods while mitigating weight degeneracy as well as biases due to nonlinearities and non-Gaussianity, as illustrated by the following diagram:

\begin{figure}[t]
\centering
\begin{tikzcd}[row sep=1.5em, column sep=4.5em]
\textup{\textbf{BPF}} & \textup{\textbf{EnKF}} & \textup{\textbf{Combined}} \\
\big( x_{t-1}^{(i)}, \tfrac{1}{N} \big)_{i=1}^{N} \arrow[d, "\textup{prediction}", right] &
\big( x_{t-1}^{(i)}, \tfrac{1}{N} \big)_{i=1}^{N} \arrow[d, "\textup{prediction}", right] &
\big( x_{t-1}^{(i)}, \tfrac{1}{N} \big)_{i=1}^{N} \arrow[d, "\textup{prediction}", right]
\\
\big( \hat{x}_{t}^{(i)}, \tfrac{1}{N} \big)_{i=1}^{N} \arrow[d, equals] &
\big( \hat{x}_{t}^{(i)}, \tfrac{1}{N} \big)_{i=1}^{N} \arrow[d, "\textup{transport}", right] &
\big( \hat{x}_{t}^{(i)}, \tfrac{1}{N} \big)_{i=1}^{N} \arrow[d, "\textup{transport}", right] \\
\big( \tilde{x}_{t}^{(i)}, \tfrac{1}{N} \big)_{i=1}^{N} \arrow[d, "\textup{reweighting}", right] &
\big( \tilde{x}_{t}^{(i)}, \tfrac{1}{N} \big)_{i=1}^{N} \arrow[dd, equals] &
\big( \tilde{x}_{t}^{(i)}, \tfrac{1}{N} \big)_{i=1}^{N} \arrow[d, "\textup{reweighting}", right]
\\
\big( \tilde{x}_{t}^{(i)}, w_{t}^{(i)} \big)_{i=1}^{N} \arrow[d, "\textup{resampling}", right] &
{} &
\big( \tilde{x}_{t}^{(i)}, w_{t}^{(i)} \big)_{i=1}^{N} \arrow[d, "\textup{resampling}", right] \\
\big( x_{t}^{(i)}, \tfrac{1}{N} \big)_{i=1}^{N} &
\big( x_{t}^{(i)}, \tfrac{1}{N} \big)_{i=1}^{N} &
\big( x_{t}^{(i)}, \tfrac{1}{N} \big)_{i=1}^{N}
\end{tikzcd}
\caption{Schematic update at time $t$ for the bootstrap particle filter (BPF), the ensemble Kalman filter (EnKF), and the combined scheme discussed in this paper.
Conceptually, the combined scheme performs prediction, then transport, then reweighting and resampling. The BPF corresponds to skipping the transport step, while the EnKF corresponds to skipping the reweighting-and-resampling step.}
\label{fig:bpf_enkf_combined_flow}
\end{figure}
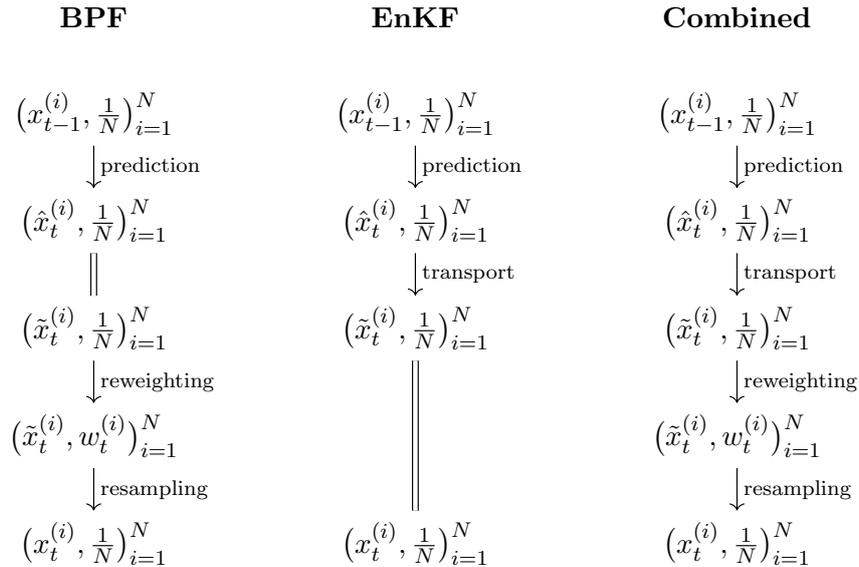

In order to perform an importance-reweighting step, one must specify the proposal distribution explicitly---i.e., the law from which the transported particles \((\tilde x_t^{(i)})\) are actually sampled. A well-established choice is the \emph{weighted EnKF} (WEnKF; \citealt{Papadakis2010DataAssimilation,vanLeeuwen2019ParticleFilters}), which treats each particle separately and takes as proposal the conditional law on the \emph{previous} ensemble,
\[
\qip = p\big(\tilde x_t^{(i)} \, \big| \, (x_{t-1}^{(j)})_{j=1}^N\big),
\]
which admits a closed-form Gaussian expression when the observation function \(h\) is linear and the gain is constructed accordingly.

In this paper we build on that paradigm and extend it in two complementary directions. 
First, beyond conditioning on the previous ensemble, we also work with the \emph{current–ensemble} conditional
\[
\qic = p\big(\tilde x_t^{(i)} \, \big| \, (\hat x_t^{(j)})_{j=1}^N\big),
\]
which is available in closed form even for nonlinear \(h\) \citep{Frei2013BridgingEKFandPF} and thereby supports importance reweighting for arbitrary observation functions. 
Second, we move from per-particle conditionals to \emph{mixture formulations}: the transported ensemble can be viewed as a stratified (balanced) sample from the Gaussian mixtures
\[
\qmp = \frac{1}{N}\sum_{i=1}^N \qip
\qquad\text{or}\qquad
\qmc = \frac{1}{N}\sum_{i=1}^N \qic,
\]
and similarly the Gaussian mixture prior induces a mixture target.
Using these mixtures---rather than only the individual Gaussian components---leads to several novel \emph{mixture-weighted} EnKF schemes that smooth across proposals and can thereby reduce weight variance.

Altogether, these choices (individual vs.\ mixture targets and proposals; conditioning on current vs.\ previous ensembles) yield six self–normalized IS (SNIS) schemes for the EnKF (one of them being WEnKF) that we analyze and compare, both theoretically and numerically.

\medskip

A second main contribution of this paper is to replace the \emph{random} generation of forecast and analysis ensembles by \emph{transported quasi--Monte Carlo} (TQMC) point sets tailored to the Gaussian--mixture laws that arise in EnKF-type methods.
Specifically, both the forecast ensemble \( (\hat x_t^{(i)})_{i=1}^N\) and the transported ensemble \((\tilde x_t^{(i)})_{i=1}^N\) can be viewed as (stratified) samples from explicit Gaussian mixtures.
Building on the transport framework of \citet{klebanov2023transporting}---which provides an essentially closed--form map sending a reference density \(\rho_{\mathsf{ref}}\) to any mixture of its affine transforms---we generate these ensembles by pushing low-discrepancy point sets, e.g.\ randomized quasi--Monte Carlo (RQMC) points, through the corresponding mixture transports.
This substitution preserves the algorithmic structure of BPF/EnKF (the prediction-analysis pipeline and the same target/proposal laws) while reducing sampling error: low-discrepancy point sets cover the state space more uniformly than independent random draws and, for sufficiently regular integrands, yield provably higher convergence rates than standard Monte Carlo (MC) or Markov chain Monte Carlo (MCMC) methodologies; see, e.g., \citet{owen2013mc}.
In our setting this manifests as lower variance of the corresponding empirical estimators and, consequently, more stable and accurate filtering, as confirmed by our numerical experiments.

\paragraph{Contributions.}
We make the following contributions.

\begin{enumerate}[label=(C\arabic*),leftmargin=3em]

\item \textbf{Importance sampling with mixture targets and proposals.}
In \Cref{sec:IS_with_mix_targets_and_proposals} we develop a unified IS framework for \emph{mixture} targets and \emph{mixture} proposals.
It yields several unbiased estimators (including stratified/balanced variants) together with variance comparisons that identify when mixture-based reweighting improves over per-component schemes.
This theory motivates mixture-weighted reweighting in EnKF-type filters.

\item \textbf{Explicit EnKF proposals via current- and previous-ensemble conditionals.}
We cast the stochastic EnKF analysis step as sampling from explicit Gaussian proposals.
For linear observations we introduce a \emph{previous-ensemble} gain $K_t^{\mathsf p}$, independent of the forecast noises, so that the \emph{previous-ensemble} conditional
$p(\tilde x_t^{(i)} \mid (x_{t-1}^{(j)})_{j=1}^N)$ is \emph{exactly} Gaussian, addressing the mean-field caveat in the common WEnKF construction.
For general (possibly nonlinear) $h$ we formalize the \emph{current-ensemble} conditional
$p(\tilde x_t^{(i)} \mid (\hat x_{t}^{(j)})_{j=1}^N)$, which enables principled importance reweighting beyond the linear setting of the WEnKF.

\item \textbf{Mixture-weighted EnKF schemes and consistency.}
% \label{item:contribution_mixture_weighted_WEnKF_schemes}
Combining (i) individual vs.\ mixture targets, (ii) individual vs.\ mixture proposals, and (iii) conditioning on current vs.\ previous ensembles yields six self-normalized IS-EnKF variants (including WEnKF as a special case).
We prove consistency for all six schemes under a standard finite-variance condition on the IS weights (\Cref{thm:convergence}).

\item \textbf{Weight-variance optimality and finite-variance conditions.}
Let $\sigma_t^2$ denote the squared coefficient of variation of the (unnormalized) importance weights at time $t$
(equivalently, the variance of the weights after rescaling them to have mean one), which controls the SNIS error bounds in our convergence proof.
We show that $\sigma_t$ is minimized (among our schemes) by using mixture targets together with mixture proposals (\Cref{thm:weight_variance_inequality}).
Moreover, for linear $h$ and under bounded drift $f$, we establish the key condition $\sigma_t<\infty$ required for consistency for two of the schemes considered (\Cref{thm:IS_weight_bound}).

\item \textbf{TQMC for Gaussian mixtures.}
We propose replacing random sampling in both prediction and analysis by \emph{transported quasi-Monte Carlo} (TQMC) for Gaussian mixtures (\Cref{sec:QMC_for_EnKF}).
This preserves the BPF/EnKF pipeline while lowering sampling error and stabilizing moment and gain estimates.
We obtain TQMC-enhanced variants of BPF, EnKF, and their mixture-weighted variants, with substantial accuracy gains in low-to-moderate dimensions for consistent methods, while performing at least comparably in higher-dimensional tests.

\item \textbf{Numerical evidence.}
Across benchmarks, we find that mixture-based reweighting reduces weight degeneracy and improves estimation accuracy relative to per-component reweighting (including WEnKF-style baselines).
Moreover, in regimes where the plain EnKF exhibits an $N$-independent error plateau due to analysis--target mismatch, the weighted schemes exhibit the expected decay with $N$.

\end{enumerate}

\paragraph{Outline.}
After discussing related work and introducing notation in \Cref{sec:Related_Work,sec:Notation}, \Cref{sec:IS_with_mix_targets_and_proposals} develops a general importance-sampling theory for \emph{mixture} targets and proposals, including variance comparisons between individual- and mixture-based estimators (\Cref{thm:mixture_IS_schemes_comparison}). \Cref{sec:filtering} formulates the filtering setting and reviews the BPF/EnKF prediction--analysis pipeline, interpreting the stochastic EnKF analysis step as sampling from explicit Gaussian proposals obtained by ensemble conditioning and an appropriate gain construction.
Building on this, \Cref{sec:EnKF_reweighting_schemes} introduces six self-normalized IS--EnKF schemes, summarized in \Cref{table:EnKF_reweighting_options} and \Cref{algo:importance_sampling_strategies}. \Cref{sec:generalization_localized_EnKF} embeds these methods into a broader class of ensemble-based filters---including localized and inflated variants---via a generic proposal mechanism, and \Cref{sec:convergence} establishes consistency and error bounds, including weight-variance comparisons (\Cref{thm:weight_variance_inequality}) and finite-variance conditions (\Cref{thm:IS_weight_bound}). Next, \Cref{sec:QMC_for_EnKF} presents transported quasi-Monte Carlo (TQMC) constructions for the Gaussian-mixture laws governing prediction and analysis, yielding TQMC-enhanced variants summarized in \Cref{algo:TQMC_EnKF}. Finally, \Cref{sec:Numerics} compares the proposed mixture-weighted and TQMC-enhanced filters in numerical experiments on benchmark models, and \Cref{sec:Conclusion} concludes.

\section{Related Work}
\label{sec:Related_Work}

The Ensemble Kalman Filter (EnKF)~\citep{Evensen1994Sequential,Evensen2003} is widely used for sequential Bayesian estimation in high-dimensional systems. While computationally efficient, its Gaussian-linear assumptions lead to biased estimates in nonlinear or non-Gaussian settings. This has motivated a range of approaches that combine EnKF with particle filtering ideas.

Hybrid methods introduce importance weighting into the EnKF~\citep{Frei2013BridgingEKFandPF,ParticleKalman}, but may suffer from weight degeneracy unless resampling is used. Transform-based approaches employ optimal transport~\citep{reich2013ensemble}, reducing degeneracy at higher computational cost. Gaussian mixture formulations~\citep{Stordal2011Bridging,Frei2011MixtureEnKF} aim to capture non-Gaussian features but scale less favorably in high dimensions. 

Localization strategies have been proposed to improve scalability~\citep{Robert2017LocalizedEnKPF,Chen2020LWEnKF}, while further hybrid variants target specific regimes such as moderately non-Gaussian priors~\citep{Grooms2021HybridEnKPF} or Lagrangian data assimilation~\citep{Slivinski2015HybridLagrangian}. On the theoretical side, convergence and stability results have been established for ensemble Kalman particle filters~\citep{DelMoral2021AnalysisEnKPF} and optimal transport formulations of the EnKF~\citep{Taghvaei2019OTEnKF}. Comparative studies such as ~\citep{Pasetto2012EnKFvsPF} further highlight trade-offs between EnKF and particle filter methods.

The present work builds on these developments by introducing mixture-based importance weighting strategies for the EnKF. By combining ensemble transport with principled reweighting using mixture proposals, we extend weighted variants of the EnKF~\citep{Papadakis2010DataAssimilation,vanLeeuwen2019ParticleFilters}, improving robustness in nonlinear and non-Gaussian settings while maintaining computational scalability.
In practice, however, scalability relies on covariance localization (and often inflation). While we do not include localization in our experiments, our theoretical framework also covers localized variants, as discussed in \Cref{sec:generalization_localized_EnKF}.

\section{Notation}
\label{sec:Notation}

Throughout the manuscript we make two standard abuses of notation.
First, we do not distinguish between a probability distribution on $\bR^{d}$
and its density with respect to Lebesgue measure: if a random variable
$x$ has density $p$, we write both $x \sim p$ and $p(x)$ for its density at
$x\in\bR^{d}$.
Second, we reuse the letter $p$ for different probability densities, with the
arguments indicating which law is meant; for instance, $p(x)$ denotes the
density of $x$, while $p(y | x)$ denotes the conditional density of $y$
given $x$.

We write $I_{d}$ for the $d\times d$ identity matrix, $\ran A$ for the range of a matrix $A$ and $\SPD{d}$ and $\SPSD{d}$ for the sets of symmetric positive definite and symmetric positive semidefinite matrices $A \in \bR^{d\times d}$, respectively.
For $C\in\SPD{d}$ and $z\in\bR^{d}$ we use the induced norm
\[
|z|_{C}
\coloneqq
\bigl(z^{\top} C^{-1} z\bigr)^{1/2}
=
\bigl\|C^{-1/2}z\bigr\|_{2}.
\]
For integrals over $\bR^{d}$ we use the shorthand
\[
\int \psi \coloneqq \int_{\bR^{d}} \psi(x)\,\mathrm{d}x.
\]
For random vectors $x$ in $\bR^{d_x}$, $y$ in $\bR^{d_y}$ and $z$ in $\bR^{d_z}$ with finite second moments,
$\bE[x]$ and $\bV[x]$ denote expectation and variance of $x$ (we only use $\bV[x]$ when $d_{x}=1$), while
the (cross\nobreakdash-)covariance matrices are defined as
\[
\Cov[x] \coloneqq \bE\bigl[(x-\bE[x])(x-\bE[x])^{\top}\bigr],
\qquad
\Cov[x,y] \coloneqq \bE\bigl[(x-\bE[x])(y-\bE[y])^{\top}\bigr],
\]
with $\Cov[x \, | \, z],\, \Cov[x,y \, | \, z]$ denoting the corresponding conditional (cross-)covariances.
If \allowbreak$g\colon\bR^{d}\to\bR$ is integrable under the probability density $p \colon \bR^{d} \to \bR$, we use the additional notation
\[
\bE_{p}[g] \coloneqq \int p\, g = \bE[g(x)],
\qquad
x\sim p.
\]
Given vectors $x^{(i)}\in\bR^{d_x}$ and $y^{(i)}\in\bR^{d_y}$, $i=1,\dots,N$, with sample means
$
\bar x \coloneqq \frac{1}{N}\sum_{i=1}^{N} x^{(i)},
\,
\allowbreak \bar y \coloneqq \frac{1}{N}\sum_{i=1}^{N} y^{(i)}
$,
we define the empirical (cross-)covariance by
\[
\Cov^{\emp}\bigl[(x^{(i)})_{i=1}^{N},(y^{(i)})_{i=1}^{N}\bigr]
\coloneqq
\frac{1}{N-1}\sum_{i=1}^{N} \bigl(x^{(i)}-\bar x\bigr)\bigl(y^{(i)}-\bar y\bigr)^{\top}\in\bR^{d_x\times d_y},
\]
and abbreviate $
\Cov^{\emp}\bigl[(x^{(i)})_{i=1}^{N}\bigr]
\coloneqq
\Cov^{\emp}\bigl[(x^{(i)})_{i=1}^{N},(x^{(i)})_{i=1}^{N}\bigr]
$.

\section{Importance Reweighting with Mixture Targets and Proposals}
\label{sec:IS_with_mix_targets_and_proposals}

When estimating expected values $I=\bE_p[g]$, but direct sampling from the target distribution $p$ is not possible, importance sampling (IS) provides a strategy to instead sample from a different \emph{proposal} distribution $q$ with $p \ll q$, and to compensate for the discrepancy between $p$ and $q$ by employing importance weights $w(x)=p(x)/q(x)$. This leads to the IS estimator
\[
\hat I_{\mathrm{IS}}
=\frac1N\sum_{i=1}^N w(x^{(i)})\,g(x^{(i)}),
\qquad
x^{(i)} \stackrel{\textup{i.i.d.}}{\sim} q.
\]
In the setting common to Bayesian statistics, where $p$ is only known up to a normalization constant, one employs self-normalized IS (SNIS), where the weights above are renormalized to sum to one \citep{owen2013mc,rubinstein2016simulation}. 

Multiple importance sampling (MIS) considers a single target $p$ and several proposals $(q_i)_{i=1}^N$, which gives rise to a range of different IS strategies \citep{veach1995optimally}. Among others, one can employ \emph{random–mixture} sampling, drawing $x^{(i)}\stackrel{\mathrm{i.i.d.}}{\sim} q_{\mix}:=N^{-1}\sum_i q_i$, or \emph{deterministic/stratified} sampling, drawing one point from each $q_i$.

In our setting, both the target and the proposal densities are equally weighted mixtures,
\begin{equation}
\label{equ:target_and_proposal_equal_mixtures}
p_{\mix} = \frac{1}{N} \sum_{i=1}^{N} p_i,
\qquad
q_{\mix} = \frac{1}{N} \sum_{i=1}^{N} q_i.
\end{equation}
This section focuses on deriving and comparing several different unbiased importance sampling estimators based on the following heuristic approximation of integrals by empirical averages:
\begin{equation}
	\label{equ:IS_standard_derivation}
	\bE_{p_{\mix}}[g]
	=
	\int p_{\mix} \, g
	=
	N^{-1} \sum_{i=1}^{N} \int \check{p}_{i}\, g
	=
	N^{-1} \sum_{i=1}^{N} \int \frac{\check{p}_{i}\, g}{\check{q}_{i}} \check{q}_{i}
	\ \approx\ 
	N^{-1} \sum_{i=1}^{N} \frac{\check{p}_{i}\, g}{\check{q}_{i}} (x_{i}),
\end{equation}	
where either $\check{p}_{i} = p_{i}$ or $\check{p}_{i} = p_{\mix}$, either $\check{q}_{i} = q_{i}$ or $\check{q}_{i} = q_{\mix}$, and, for each $i$, $x_{i} \sim \check{q}_{i}$ is an independent draw from $\check{q}_{i}$.
This leads to the following five unbiased importance sampling estimators (first letter stands for individual ($\mytt{I}$) or mixture ($\mytt{M}$) target $\check{p}_{i}$, the second one for individual or mixture proposal $\check{q}_{i}$)
\begin{align*}
\II(g) &\coloneqq \frac{1}{N} \sum_{i=1}^{N} \frac{p_i(x_i)}{q_i(x_i)} \, g(x_i),
&
\MI(g) &\coloneqq \frac{1}{N} \sum_{i=1}^{N} \frac{p_{\mix}(x_i)}{q_i(x_i)} \, g(x_i),
&
&x_{i} \stackrel{\textup{indep.}}{\sim} q_{i};
\\
\IM(g) &\coloneqq \frac{1}{N} \sum_{i=1}^{N} \frac{p_i(x_i)}{q_{\mix}(x_i)} \, g(x_i),
&
\MM(g) &\coloneqq \frac{1}{N} \sum_{i=1}^{N} \frac{p_{\mix}(x_i)}{q_{\mix}(x_i)} \, g(x_i),
&
&x_{i} \stackrel{\textup{i.i.d.}}{\sim} q_{\mix};
\\
&&
\MM^{\strat}(g) &\coloneqq \frac{1}{N} \sum_{i=1}^{N} \frac{p_{\mix}(x_i)}{q_{\mix}(x_i)} \, g(x_i),
&
&x_{i} \stackrel{\textup{indep.}}{\sim} q_{i}.
\end{align*}
The last estimator, $\MM^{\strat}$, differs from $\MM$ in the way the points $x_{i}$ are drawn from $q_{\mix}$---instead of i.i.d.\ samples from $q_{\mix}$, one sample is chosen from each component.
This procedure is referred to as stratified, balanced or deterministic sampling, and the resulting estimator $\MM^{\strat}$ is also unbiased with a provably smaller variance than $\MM$ \citep[Proposition~5.5.1]{rubinstein2016simulation}.

\begin{remark}
In principle, one could also define a stratified version $\IM^{\strat}$ of the estimator $\IM$.
However, replacing the i.i.d.\ samples $x_i \sim q_{\mix}$ by stratified samples $x_j \sim q_j$ leads to a double-sum estimator in \eqref{equ:IS_standard_derivation} of the form
\[
\bE_{p_{\mix}}[g]
=
N^{-1} \sum_{i=1}^{N} \int \frac{p_{i}\, g}{q_{\mix}} \, q_{\mix}
=
N^{-2} \sum_{i=1}^{N} \sum_{j=1}^{N} \int \frac{p_{i}\, g}{q_{\mix}} \, q_{j}
\ \approx\
N^{-2} \sum_{i=1}^{N} \sum_{j=1}^{N} \frac{p_{i}\, g}{q_{\mix}}(x_j),
\]
This requires evaluating all pairs $(i,j)$ and thus has quadratic complexity.
In contrast, the $\MM^{\strat}$ estimator remains linear, which is why we do not pursue $\IM^{\strat}$ here.
\end{remark}

\begin{theorem}
\label{thm:mixture_IS_schemes_comparison}
Let $p_{i},q_{i} \colon \bR^{d} \to \bR$, $i=1,\dots,N$ be probability density functions, let $p_{\mix},q_{\mix}$ given by \eqref{equ:target_and_proposal_equal_mixtures} and let $g \in L^2(p_{\mix})$.
Assume that $p_{\mix} \ll q_{i}$ for each $i=1,\dots,N$.
Then $\II(g), \MI(g), \IM(g), \MM(g), \MM^{\strat}(g)$ are unbiased estimators of $I = \mathbb{E}_{p_{\mix}}[g]$ satisfying
\begin{equation}
    \label{equ:mixture_targets_and_proposals_variance_inequalities}
    \bV[\MM^{\strat}(g)]
    \leq
    \bV[\MM(g)]	\leq \min( \bV[\IM(g)] , \bV[\MI(g)] ).
\end{equation}
Moreover, apart from \eqref{equ:mixture_targets_and_proposals_variance_inequalities}, there is no further general variance inequality relating these five estimators.
\end{theorem}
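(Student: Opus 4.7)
The plan is to break the statement into three parts: unbiasedness of all five estimators, the chain \eqref{equ:mixture_targets_and_proposals_variance_inequalities}, and a sharpness argument. Unbiasedness is the warm-up: every estimator has the form $N^{-1}\sum_i \check p_i(x_i)g(x_i)/\check q_i(x_i)$ with independent $x_i \sim \check q_i$, so a term-by-term computation gives either $\int p_i g$ or $\int p_{\mix} g = I$, and averaging yields $I$ in all five cases.

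For the variance chain I would use three distinct tools. The inequality $\bV[\MM^{\strat}] \leq \bV[\MM]$ is the classical stratified sampling bound: with $Y(x) := p_{\mix}(x)g(x)/q_{\mix}(x)$ and $J$ uniform on $\{1,\dots,N\}$, the law of total variance gives
\[
\bV_{q_{\mix}}[Y] \;=\; \tfrac{1}{N}\sum_i \bV_{q_i}[Y] \;+\; \bV_J\bigl[\bE_{q_J}[Y]\bigr],
\]
and dividing by $N$ yields the claim. For $\bV[\MM] \leq \bV[\MI]$, where the two estimators sample from different laws, I would rewrite both variances as explicit integrals and apply the pointwise AM--HM (Jensen) inequality $q_{\mix}^{-1}(x) \leq N^{-1}\sum_i q_i^{-1}(x)$ against the nonnegative integrand $p_{\mix}^2 g^2$. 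The main obstacle is the remaining inequality $\bV[\MM] \leq \bV[\IM]$: both estimators now sample from the \emph{same} law $q_{\mix}$ but use different integrands, so no pointwise Jensen step is available. Here I would set $F_i := p_i g/q_{\mix}$, use the identities $\bV[\IM] = N^{-2}\sum_i \bV[F_i]$ and $\bV[\MM] = N^{-3}\sum_{i,j}\Cov[F_i,F_j]$, and bound the covariances via Cauchy--Schwarz and AM--GM, $\Cov[F_i,F_j] \leq \tfrac12(\bV[F_i]+\bV[F_j])$, so that $\sum_{i,j}\Cov[F_i,F_j] \leq N\sum_i \bV[F_i]$. This covariance reformulation is the one place a careful choice of representation is essential.

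For sharpness, by transitivity the only pairs not constrained by \eqref{equ:mixture_targets_and_proposals_variance_inequalities} are $\II$ versus the other four estimators and $\IM$ versus $\MI$. Two regimes suffice. Taking $p_i = q_i$ for every $i$ forces $\bV[\II] = 0$ while, for a suitable nonconstant $g$, the four mixture-based variances remain strictly positive; conversely, holding $p_i \equiv p$ but choosing each $q_i$ very narrowly concentrated away from $p$'s mass sends $\bV[\II] \to \infty$ while the mixture-based variances stay controlled by integrals against the spread-out $q_{\mix}$. This settles $\II$ against each of $\MM^{\strat},\MM,\IM,\MI$. For the $\IM$-vs-$\MI$ pair, the degenerate case $p_1 = \dots = p_N$ collapses $\IM$ to $\MM$ (forcing $\bV[\IM] \leq \bV[\MI]$), while $q_1 = \dots = q_N$ collapses $\MI$ to $\MM$ (forcing $\bV[\MI] \leq \bV[\IM]$); slight perturbations of these produce strict inequalities in both directions. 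This part is conceptually routine but bookkeeping-heavy, since each constructed example must also remain consistent with the established bounds in \eqref{equ:mixture_targets_and_proposals_variance_inequalities}.
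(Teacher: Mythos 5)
Your treatment of unbiasedness and of the chain \eqref{equ:mixture_targets_and_proposals_variance_inequalities} is correct and essentially matches the paper: the stratification bound is the same law-of-total-variance argument; your pointwise AM--HM step for $\bV[\MM(g)]\le\bV[\MI(g)]$ is exactly the paper's Titu/Engel inequality specialized to equal numerators; and your covariance bound $\Cov[F_i,F_j]\le\tfrac12(\bV[F_i]+\bV[F_j])$ for $\bV[\MM(g)]\le\bV[\IM(g)]$ is a valid repackaging of the paper's pointwise Jensen argument (``variance of an average is at most the average of the variances'').

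The sharpness part, however, has a genuine gap. The premise of your first regime is false: with $p_i=q_i$ for all $i$, $\II(g)=\frac1N\sum_i g(x_i)$ with $x_i\sim q_i$ is the stratified plain Monte Carlo estimator, whose variance $\frac{1}{N^2}\sum_i\bV_{q_i}[g]$ is generically positive for nonconstant $g$; it is not forced to vanish. Worse, each of your two regimes collapses $\II$ onto precisely one of the estimators you need to separate it from. If $p_i=q_i$ for all $i$, then $p_{\mix}=q_{\mix}$ and $\II$ and $\MM^{\strat}$ become \emph{identical} random variables (same draws $x_i\sim q_i$, same weights $\equiv1$), so no strict inequality between them can arise in that regime --- which also contradicts your claim that the four mixture-based variances stay strictly positive while $\bV[\II(g)]=0$. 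Likewise, if $p_i\equiv p$ then $\II=\MI$ identically, so the second regime cannot exhibit $\bV[\II(g)]>\bV[\MI(g)]$, and $\bV[\MI(g)]\to\infty$ together with $\bV[\II(g)]$ rather than staying ``controlled''. The constructions that actually work are different: to make $\II$ degenerate while the others are not (the paper's \Cref{ex:MI_ge_IM_ge_MMstr_ge_II}) one chooses $p_i\neq q_i$ with $(p_i/q_i)\,g$ constant for each $i$; for the reverse direction (the paper's \Cref{ex:II_ge_IM_ge_MI}) one arranges $p_{\mix}=q_{\mix}$ with $g\equiv1$, which makes $\MM$ and $\MM^{\strat}$ deterministic while $\II$, $\IM$, $\MI$ remain random because the individual pairs $(p_i,q_i)$ are misaligned. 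Your $\IM$-versus-$\MI$ discussion is fine in spirit, but the $\II$ comparisons as proposed do not go through.
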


\begin{proof}
The unbiasedness of the estimators is straightforward (cf.\ \eqref{equ:IS_standard_derivation}).
The first inequality in \eqref{equ:mixture_targets_and_proposals_variance_inequalities} is an application of the law of total variance,
\[
\bV[\MM(g)]
=
\bV[\MM^{\strat}(g)]
\;+\;
\frac{1}{N}\,
\bV_{j \sim \mathrm{Unif}(\{1,\dots,N\})}
\left[
\bE_{q_j}\!\left[
\frac{p_{\mix}}{q_{\mix}}\, g
\right]
\right],
\]
for details see \citep[Proposition~5.5.1]{rubinstein2016simulation}.
\newline		
In order to compare $\MI(g)$ with $\MM(g)$, consider their variances:
\begin{align*}
    %			\bV[\II(g)] &= \frac{1}{N^{2}} \sum_{i=1}^{N} \bigg( \mathbb{E}_{q_i} \left[ \left( \frac{p_i \,g }{q_i} \right)^2 \right] - I^2 \bigg),
    %			&
    %			\bV[\IM(g)] &= \frac{1}{N^{2}} \sum_{i=1}^{N} \bigg( \mathbb{E}_{q_i} \left[ \left( \frac{p_i \,g }{q_{\mix}} \right)^2 \right] - I^2 \bigg),
    %			\\
    \bV[\MI(g)] &= \frac{1}{N^{2}} \sum_{i=1}^{N} \bigg( \mathbb{E}_{q_i} \left[ \left( \frac{p_{\mix} \,g }{q_i} \right)^2 \right] - I^2 \bigg),
    &
    \bV[\MM(g)] &= \frac{1}{N} \bigg( \mathbb{E}_{q_{\mix}} \left[ \left( \frac{p_{\mix} \, g}{q_{\mix}} \right)^2 \right] - I^2 \bigg).
\end{align*}
Then, $\bV[\MM(g)] \leq \bV[\MI(g)] $ follows from
\begin{equation*}
    %	\label{equ:variance_comparison_using_Titus_lemma}
    \mathbb{E}_{q_{\mix}} \left[ \left( \frac{p_{\mix}\, g}{q_{\mix}} \right)^2 \right]
    =
    \frac{1}{N}	\int  \frac{ (\sum_{i}  p_{\mix} g)^{2}}{ \sum_{i} q_{i}} 
    \leq
    \frac{1}{N} \int  \sum_{i=1}^{N}  \frac{(p_{\mix} g)^{2}}{q_{i}}
    =
    %	\frac{1}{N} \sum_{i=1}^{N} \int  q_{i}  \frac{(p_{i} g)^{2}}{q_{i}^{2}}
    %	=
    \frac{1}{N} \sum_{i=1}^{N} \mathbb{E}_{q_i} \left[ \left( \frac{p_{\mix}\, g}{q_i} \right)^2 \right],
\end{equation*}
where the inequality above is a pointwise application of Titu's lemma (also known as Cauchy–Schwarz inequality in Engel form or Sedrakyan's inequality):
\begin{equation}
    \label{equ:titus_lemma}
    \frac{(\sum_{i} a_i)^{2}}{\sum_{i} b_i}
    \leq
    \sum_{i} \frac{a_i^2}{b_i}
    \qquad
    \text{whenever } a_{i} \in \bR,\, 
    b_i > 0 \text{ for all } i.
\end{equation}
\newline
In order to compare $\MM(g)$ with $\IM(g)$, we denote $I_{i} \coloneqq \bE_{p_{i}}[g] = \int p_{i}\, g$ and derive
\begin{align*}
    \bV[\MM(g)]
    &=
    \frac{1}{N} \, \bE_{q_{\mix}} \bigg[ \Big( \frac{p_{\mix}\, g}{q_{\mix}} - I \Big)^{2} \bigg]
    \\
    &=
    \frac{1}{N} \, \bE_{q_{\mix}} \bigg[ \Big( \frac{1}{N} \sum_{i=1}^{N} \frac{p_{i}\, g}{q_{\mix}} - I_{i} \Big)^{2} \bigg]
    \\
    &\leq
    \frac{1}{N} \, \bE_{q_{\mix}} \bigg[ \frac{1}{N} \sum_{i=1}^{N} \Big( \frac{p_{i}\, g}{q_{\mix}} - I_{i} \Big)^{2} \bigg]
    \\
    &=
    \frac{1}{N^{2}} \sum_{i=1}^{N} \bE_{q_{\mix}} \bigg[ \Big( \frac{p_{i}\, g}{q_{\mix}} - I_{i} \Big)^{2} \bigg]
    \\
    &=
    \bV[\IM(g)],
\end{align*}
where the inequality is a (pointwise) application of Jensen's inequality to the convex function $x \mapsto x^{2}$.
The discrete counterexamples\footnote{For simplicity these counterexamples are given on the discrete two-point space $X=\{0,1\}$, but they can be embedded without difficulty into the setting of \Cref{thm:mixture_IS_schemes_comparison}.} collected in \Cref{sec:counterexamples_variance_inequalities} show that no additional general ordering between the variances of $\II, \MI, \IM, \MM$ and $\MM^{\strat}$ can hold.

\end{proof}

\begin{remark}
Clearly,
\begin{enumerate}[label = (\roman*)]
    \item
    \label{item:rem_all_targets_coincide}
    if all targets coincide, $p_{1} = \cdots = p_{N}$, then $\II = \MI$ and $\IM = \MM$;
    \item
    if all proposals coincide, $q_{1} = \cdots = q_{N}$, then $\II = \IM$ and $\MI = \MM$;
    \item
    if each proposal exactly matches its target, $q_{i} = p_{i}$ for all $i$, then $\II = \MI = \MM^{\strat}$ reduce to the stratified plain Monte--Carlo estimator (without importance reweighting).
\end{enumerate}
Case~\ref{item:rem_all_targets_coincide} corresponds to classical \emph{multiple importance sampling} (MIS) with a single target $p$ and multiple proposals $(q_i)_{i=1}^{N}$: here $\II=\MI$ is the standard na\"{\i}ve multi-proposal estimator, while $\IM=\MM$ and $\MM^{\strat}$ are the (random vs.\ stratified) mixture-sampling variants with weights $p/q_{\mix}$ \citep{veach1995optimally,Owen2000SafeIS,owen2013mc}.
Our framework additionally allows \emph{mixture targets}, and these results extend partially to the case of \emph{unequal mixture weights} and \emph{unequal per--component sample counts}, both of which are common in MIS.
\end{remark}

\begin{remark}
In this section we have assumed normalized target densities, so all importance sampling (IS) schemes are used in their standard form and yield unbiased estimators.
In our later applications the targets are Bayesian posteriors known only up to a normalizing constant, so we employ the corresponding self-normalized IS estimators.
These are generally biased but asymptotically unbiased and consistent, so the variance comparisons in this section should be understood as describing the
large-sample behavior of the corresponding self--normalized estimators.
Further comparisons of the self-normalized schemes in terms of their weights are given in \Cref{thm:weight_variance_inequality} in the context of filtering.
\end{remark}

\section{The Filtering Problem: BPF and EnKF}
\label{sec:filtering}

We consider the filtering problem for a state-space model, where we observe a sequence of measurements \( (y_t)_{t=1}^T \) in $\bR^{m}$ generated by an underlying, unobserved state sequence \( (x_t)_{t=0}^T \) in $\bR^{d}$.
For $t =1,\dots,T$, the dynamics and observation models are given by:
\begin{alignat}{2}
\label{equ:filtering_perturbed_dynamics}
x_{t}
&=
f(x_{t-1}) + \eta_{t},&
\qquad
\eta_{t}
&\sim
\mathcal{N}(0, Q),
\\
\label{equ:filtering_perturbed_observation}
y_t
&=
h(x_t) + \epsilon_t,&
\epsilon_t
&\sim
\mathcal{N}(0, R),
\end{alignat}
as illustrated in \Cref{fig:filtering_problem_dynamics_observations},
where we assume \( x_0 \sim p_0 \) and the i.i.d.\ noise sequences \( (\eta_t) \) and \( (\epsilon_t) \) to be independent, and $Q \in \SPD{d}$ and $R \in \SPD{m}$ to be symmetric and strictly positive definite covariance matrices.
Here, \( f \colon \bR^{d} \to \bR^{d} \) drives the underlying dynamics, and \( h \colon \bR^{d} \to \bR^{m}\) is the observation function.
We assume both $f$ and $h$ to be continuous and we often write $H$ in place of $h$ whenever it is linear, as is common in the literature.
We denote by \( Y_t \coloneqq (y_1, \ldots, y_t) \) the observed data up to time \( t \in \bN \) and introduce the conditional distributions
\begin{equation}
\label{equ:filtering_prior_and_posterior}
\pi_{t}^{\mathsf{prior}} = p(x_{t}|Y_{t-1}),
\qquad
p_{t}^{\mathsf{post}} = p(x_{t}|Y_{t}),
\qquad
t \in \bN.
\end{equation}
The objective of the filtering problem is to determine the \emph{filtering distribution} $p_{t}^{\mathsf{post}}$, which is typically intractable in high-dimensional systems.
Instead of maintaining an explicit density function, it is usually approximated using a finite set of (possibly weighted) samples, or \emph{particles}, $(x_{t}^{(i)})_{i=1}^{N}$ forming an ensemble representation in the Monte Carlo sense.
This ensemble is evolved through time, with prediction and update steps ensuring that it remains a reasonable approximation of the true posterior:
\begin{enumerate}
\item \textbf{Prediction Step}: Propagate the distribution forward using the system dynamics.
\item \textbf{Analysis Step}: Incorporate the new observation using Bayes' theorem to update the distribution.
\end{enumerate}
In the prediction step, propagating the particles is straightforward:
simply apply the dynamics function \( f \) and add independent noise to each particle:
\begin{equation}
\label{equ:prediction_step}
\hat{x}_{t}^{(i)} = f(x_{t-1}^{(i)}) + \eta_{t}^{(i)},
\qquad
\eta_{t}^{(i)} \sim \mathcal{N}(0, Q).
\end{equation}
This step produces an ensemble of particles $(\hat{x}_{t}^{(i)})_{i=1}^{N}$ that represents the prior distribution $\pi_{t}^{\mathsf{prior}}$ at the next time step.
For the update (or analysis) step, where the observation \( y_{t} \) is used to adjust the particle ensemble to reflect the posterior distribution $p_{t}^{\mathsf{post}}$, the methodologies differ based on the filtering approach.
In this paper, we concentrate on two common methods, the bootstrap particle filter (BPF; \citealt{gordon1993novel}) and the ensemble Kalman filter (EnKF; \citealt{Evensen1994Sequential,Evensen2003}), summarized in \Cref{algo:BPF_and_EnKF}:

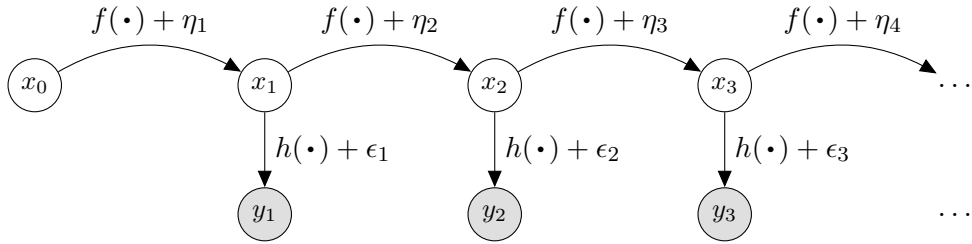
\begin{figure}[t]
\centering
\begin{tikzpicture}[scale=1, transform shape]
    \node[latent] (X0) {$x_0$};
    \node[latent, right=6em of X0] (X1) {$x_1$};
    \node[latent, right=6em of X1] (X2) {$x_2$};
    \node[latent, right=6em of X2] (X3) {$x_3$};
    \node[right=6em of X3] (dotsX) {$\dots$};
    
    \node[obs, below=6ex of X1] (Y1) {$y_1$};
    \node[obs, below=6ex of X2] (Y2) {$y_2$};
    \node[obs, below=6ex of X3] (Y3) {$y_3$};
    \node[right=6em of Y3] (dotsY) {$\dots$};
    
    % Connections
    \path[->]
    (X0) edge[bend left] node[above] {$f(\quark) + \eta_1$} (X1)
    (X1) edge[bend left] node[above] {$f(\quark) + \eta_2$} (X2)
    (X2) edge[bend left] node[above] {$f(\quark) + \eta_3$} (X3)
    (X3) edge[bend left] node[above] {$f(\quark) + \eta_4$} (dotsX)
    
    (X1) edge[->] node[right] {$h(\quark) + \epsilon_1$} (Y1)
    (X2) edge[->] node[right] {$h(\quark) + \epsilon_2$} (Y2)
    (X3) edge[->] node[right] {$h(\quark) + \epsilon_3$} (Y3);
\end{tikzpicture}
\caption{State-space model with underlying dynamics (top row) and observations (bottom row).}
\label{fig:filtering_problem_dynamics_observations}
\end{figure}

\paragraph{Bootstrap Particle Filter (BPF).}
Each particle\footnote{To align notation with the EnKF, we simply relabel the BPF forecast particles as $\tilde{x}_{t}^{(i)}$. This is entirely unnecessary for the BPF itself, but it makes the resampling step \eqref{equ:systematic_resampling} identical across the BPF and the (weighted) EnKF variants discussed later. No algorithmic change is implied.}
$\tilde{x}_{t}^{(i)} \coloneqq \hat{x}_{t}^{(i)}$ is reweighted according to its likelihood with respect to the new observation,
$w_{t}^{(i)} \propto p(y_t | \tilde{x}_{t}^{(i)})$
with
$\sum_{i=1}^N w_{t}^{(i)} = 1$,
followed by a systematic\footnote{We use \emph{systematic} rather than \emph{multinomial} resampling because, conditional on the weights, it remains unbiased but has lower conditional variance \citep[Section~3.4]{Doucet2011tutorialPF}.}
resampling step \citep[Section~3.4]{Doucet2011tutorialPF} to mitigate weight degeneracy, that is, a new set of $N$ equally weighted particles $(x_{t}^{(i)})_{i=1}^{N}$ is drawn from the distribution $\sum_{i=1}^{N} w_{t}^{(i)} \, \delta_{\tilde{x}_{t}^{(i)}}$ defined by the current set of weighted particles:
Draw \(u_{1}\sim \mathsf{Unif}(0,N^{-1})\) and set \(u_{i}\coloneqq u_{1} + (i-1)/N\) for \(i=1,\dots,N\) as well as
\begin{equation}
\label{equ:systematic_resampling}
x_{t}^{(i)} \;=\; \tilde{x}_{t}^{(\mathsf{ind}(i))},
\qquad
\mathsf{ind}(i) \;=\; \min\biggl\{\, j\in\{1,\dots,N\} : \sum_{\ell=1}^{j} w_{t}^{(\ell)} \ge u_{i} \,\biggr\}.
\end{equation}

\paragraph{Ensemble Kalman Filter (EnKF).}
The forecasted (prior) particles $(\hat x_t^{(i)})_{i=1}^N$ are shifted based on the \emph{Kalman gain} $K_{t}$, transporting them to the analysis (posterior) states $\tilde{x}_{t}^{(i)}$ which approximate the posterior without explicitly reweighting,
\begin{equation}
\label{equ:EnKF_update_step}
\tilde{x}_{t}^{(i)} = \hat{x}_{t}^{(i)} + K_{t} \big(y_t + \tilde{\epsilon}_{t}^{(i)} - h(\hat{x}_{t}^{(i)})\big),
\qquad
\tilde{\epsilon}_{t}^{(i)} \stackrel{\mathrm{i.i.d.}}{\sim} \mathcal{N}(0, R)
\end{equation}
Here, we adopted the standard stochastic EnKF with perturbed observations, which preserves ensemble spread by adding artificial observation noise $(\tilde{\epsilon}_{t}^{(i)})_{i=1}^{N}$ \citep[e.g.][]{Evensen1994Sequential,Evensen2003}, while the so-called Kalman gain
$K_{t}$
% \( K_{t} = \hat{C}_{t}^{xy} (\hat{C}_{t}^{y})^{-1} \) is computed 
estimates $\Cov[x_{t},y_{t} | Y_{t-1} ]\, \Cov[y_{t}\, |\, Y_{t-1} ]^{-1}$
from ensemble covariances, see \Cref{sec:Kalman_gain_estimation} below.
Unless a resampling or mutation step follows, we set $x_{t}^{(i)} \coloneqq \tilde{x}_{t}^{(i)}$.

\begin{algorithm}[t]
\caption{Bootstrap Particle Filter (BPF) and Ensemble Kalman Filter (EnKF)}
\label{algo:BPF_and_EnKF}
\begin{algorithmic}[1]
    \State \textbf{Inputs:} ensemble size $N$, horizon $T$, prior $p_0$, dynamics $f$, observation function $h$, covariances $Q,R$, observations $(y_{t})_{t=1}^{T}$,
    choice of method $\in\{ \BPF, \EnKF \}$.
    \State \textbf{Initialize:} $x_{0}^{(i)} \stackrel{\mathrm{i.i.d.}}{\sim} p_0$.
    \For{$t = 1$ to $T$}
    \State \textbf{Prediction step:}
    Propagate each particle forward via \eqref{equ:prediction_step} $\ \leadsto\ (\hat{x}_{t}^{(i)})_{i=1}^{N}$.
    %= f(x_{t-1}^{(i)}) + \eta_{t}^{(i)}$, $\eta_{t}^{(i)} \stackrel{\textup{i.i.d.}}{\sim} \cN(0,Q)$.
    \State \textbf{Kalman gain estimation (not required for $\BPF$):}
    \begin{itemize}
        \item 
        If $h$ is nonlinear, compute $K_{t} = \hat C_t^{xy}(\hat C_t^y)^{-1}$ via \eqref{equ:covariance_estimates_for_general_Kalman_gain}.
        \item 
        If $h$ is linear, compute $K_t = K_{\mathsf p,t}$ via \eqref{equ:previous_based_gain_linear}.
    \end{itemize}        
    \State \textbf{Analysis step:}
    \begin{itemize}
        \item For $\BPF$ set $\tilde{x}_{t}^{(i)} = \hat{x}_{t}^{(i)}$. 
        \item For $\EnKF$ update each particle via \eqref{equ:EnKF_update_step} $\ \leadsto\ (\tilde{x}_{t}^{(i)})_{i=1}^{N}$. 
    \end{itemize}
    
    \State \textbf{Weights and Resampling:}
    \begin{itemize}
        \item For $\BPF$ set $w_t^{(i)} \propto p(y_t\mid \tilde{x}_t^{(i)})$ and normalize; resample via \eqref{equ:systematic_resampling} to get $(x_t^{(i)})_{i=1}^{N}$. 
        \item For $\EnKF$ set $x_{t}^{(i)} = \tilde{x}_{t}^{(i)}$ (no resampling).     
    \end{itemize}   
    \EndFor
\end{algorithmic}
\end{algorithm}

\paragraph{EnKF with Importance Weights.}
The EnKF avoids explicit reweighting by transporting particles directly to regions of higher posterior probability. This makes it computationally efficient and scalable to high-dimensional settings.
However, due to the derivation of the EnKF from the Gaussian conditioning formula, it is only consistent in the case of normally distributed $x_{0}$ and \emph{linear} dynamics $f$ and the observation function $h$ (and only if the noise $\eta_{t}$ and $\varepsilon_{t}$ are normally distributed).
In most other cases EnKF introduces a bias, because it ignores non-Gaussian features
of the forecast distribution and the observation.
One remedy to avoid this bias is by establishing the actual distribution that the points $\tilde{x}_{t}^{(i)}$ are drawn from and using it as the proposal distribution in a subsequent importance reweighting step, as for the well-known weighted EnKF ($\WEnKF$; \citealt{vanLeeuwen2019ParticleFilters,Papadakis2010DataAssimilation}).	
This can be viewed as combining EnKF with BPF and thereby getting the best from both worlds:
The ability of EnKF to transport/shift points to meaningful locations and the consistency of BPF.
It turns out that there is more than one option to establish the proposal distribution in a consistent and unbiased way, which is the topic of this paper.
In fact, the update rule \eqref{equ:EnKF_update_step} implies:

\begin{proposition}[conditional laws as proposals]
\label{prop:proposals_by_conditioning}
Fix $t\in\mathbb N$ and assume $\tilde\epsilon_t^{(i)}$ are independent of $(\hat x_t^{(j)})_{j=1}^N$ and of the process and observation noises in 
\eqref{equ:filtering_perturbed_dynamics},\eqref{equ:filtering_perturbed_observation} and \eqref{equ:prediction_step}.
\vspace{0.5ex}
\begin{enumerate}[label=(\alph*)]
    \item
    \label{item:proposals_by_conditioning_on_current_particles}
    \textbf{Conditioning on current particles.}
    If $K_{t}$ is a function solely of the forecast ensemble $(\hat x_t^{(j)})_{j=1}^N$, then, for each $i=1,\dots,N$,
    \begin{equation}
        \label{equ:qi_current_particles}
        \begin{aligned}
            \begin{split}
                \qic
                \coloneqq
                p(\tilde x_t^{(i)} \,\big|\, (\hat x_t^{(j)})_{j=1}^N)
                % =
                % p(\tilde x_t^{(i)} \,\big|\, \hat x_t^{(i)} )
                =
                \cN(\mic,\Qic),        
                \qquad        
                \mic
                &=
                \hat{x}_{t}^{(i)} + K_{t} \big(y_t - h(\hat{x}_{t}^{(i)})\big),
                \\
                \Qic
                &=
                K_{t} R K_{t}^{\top}.
            \end{split}
        \end{aligned}
    \end{equation}
    
    \item
    \label{item:proposals_by_conditioning_on_previous_particles}
    \textbf{Conditioning on previous particles for linear observations.}
    If $h(x)=Hx$ is linear and $K_t$ is a function solely of $(x_{t-1}^{(j)})_{j=1}^N$ (independent of the forecast noises $\eta_t^{(j)}$),
    then, for each $i=1,\dots,N$,
    \begin{equation}
        \label{equ:qi_previous_particles}
        \begin{aligned}
            \begin{split}
                \qip
                \coloneqq
                p(\tilde x_t^{(i)} \,\big|\, (x_{t-1}^{(j)})_{j=1}^N)
                % =
                % p(\tilde x_t^{(i)} \,\big|\, x_{t-1}^{(i)} )
                =
                \cN(\mip,\Qip),
                \qquad
                \mip
                &=
                f(x_{t-1}^{(i)}) + K_{t} \big(y_t - H f(x_{t-1}^{(i)})\big),
                \\        
                \Qip
                &=
                (\Id - K_tH) Q (\Id - K_tH)^{\!\top} + K_{t} R K_{t}^{\top}.
            \end{split}
        \end{aligned}
    \end{equation}
\end{enumerate}
\end{proposition}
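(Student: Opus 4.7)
The plan is to verify in each case that, under the stated conditioning, $\tilde x_t^{(i)}$ is an affine function of jointly Gaussian noise variables that remain independent of the conditioning ensemble, and then simply read off the mean and covariance of the resulting affine combination from \eqref{equ:EnKF_update_step}.

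For part \ref{item:proposals_by_conditioning_on_current_particles}, I would condition on $(\hat x_t^{(j)})_{j=1}^N$. By hypothesis $K_t$ is a measurable function of this ensemble, so both $K_t$ and $\hat x_t^{(i)}$ are deterministic after conditioning. Since $\tilde\epsilon_t^{(i)}\sim\cN(0,R)$ is independent of $(\hat x_t^{(j)})_{j=1}^N$ (the forecast ensemble depends only on $p_0$ and the process noises, which are independent of the observation perturbations by assumption), the update rule \eqref{equ:EnKF_update_step} reads
\[
\tilde x_t^{(i)} \;=\; \bigl[\hat x_t^{(i)} + K_t\bigl(y_t - h(\hat x_t^{(i)})\bigr)\bigr] + K_t\,\tilde\epsilon_t^{(i)},
\]
whose first bracket is constant under the conditioning and whose second summand is Gaussian with mean zero and covariance $K_t R K_t^{\top}$. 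This yields \eqref{equ:qi_current_particles}.

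For part \ref{item:proposals_by_conditioning_on_previous_particles} the situation is analogous, but with extra bookkeeping because $\hat x_t^{(i)}$ still has randomness after conditioning on $(x_{t-1}^{(j)})_{j=1}^N$. Here the plan is to substitute $\hat x_t^{(i)} = f(x_{t-1}^{(i)}) + \eta_t^{(i)}$ into \eqref{equ:EnKF_update_step}, use $h(x)=Hx$ to collect terms, and obtain
\[
\tilde x_t^{(i)} \;=\; (\Id - K_t H)\,f(x_{t-1}^{(i)}) + K_t\,y_t \;+\; (\Id - K_t H)\,\eta_t^{(i)} + K_t\,\tilde\epsilon_t^{(i)}.
\]
Under the conditioning, $K_t$ and $f(x_{t-1}^{(i)})$ are deterministic, while $\eta_t^{(i)}\sim\cN(0,Q)$ and $\tilde\epsilon_t^{(i)}\sim\cN(0,R)$ remain independent Gaussians---this is exactly where the independence of $K_t$ from the forecast noises is used. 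The mean identity
\[
(\Id - K_t H)\,f(x_{t-1}^{(i)}) + K_t\,y_t
\;=\;
f(x_{t-1}^{(i)}) + K_t\bigl(y_t - H f(x_{t-1}^{(i)})\bigr)
\]
recovers $\mip$, and the covariance of the two independent Gaussian noise terms is additive, giving $(\Id - K_t H) Q (\Id - K_t H)^{\top} + K_t R K_t^{\top}$ as in \eqref{equ:qi_previous_particles}.

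The proof is essentially computational, so I do not expect a real obstacle; the only point requiring care is the measurability/independence bookkeeping. Specifically, one must use the global independence assumptions on the noise sequences to justify that, under the stated conditioning, the residual randomness $\tilde\epsilon_t^{(i)}$ (in part \ref{item:proposals_by_conditioning_on_current_particles}) and the pair $(\eta_t^{(i)},\tilde\epsilon_t^{(i)})$ (in part \ref{item:proposals_by_conditioning_on_previous_particles}) retain their original Gaussian laws. This is exactly the content of the proposition's hypotheses, so the argument reduces to an application of the standard fact that an affine transformation of independent Gaussians is Gaussian.
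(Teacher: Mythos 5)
Your proof is correct and follows essentially the same route as the paper's: condition on the relevant ensemble so that $K_t$ (and, in part (a), $\hat x_t^{(i)}$) become deterministic, substitute $\hat x_t^{(i)}=f(x_{t-1}^{(i)})+\eta_t^{(i)}$ in part (b), and read off the Gaussian law of the remaining affine combination of independent noises. The independence bookkeeping you flag is exactly the point the paper also relies on; no further comment is needed.
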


\begin{proof}
For \ref{item:proposals_by_conditioning_on_current_particles}, note that in \eqref{equ:EnKF_update_step} the only randomness conditional on $(\hat x_t^{(j)})_{j=1}^N$ and $K_t$ is $\tilde\epsilon_t^{(i)}$, linearly mapped by $K_t$, resulting in the stated Gaussian distribution.
For \ref{item:proposals_by_conditioning_on_previous_particles}, plugging $\hat x_t^{(i)}=f(x_{t-1}^{(i)})+\eta_t^{(i)}$ into \eqref{equ:EnKF_update_step} yields
\begin{align*}
    \tilde x_t^{(i)}
    &=
    f(x_{t-1}^{(i)}) + \eta_t^{(i)} + K_t \bigl(y_t + \tilde\epsilon_t^{(i)} - H f(x_{t-1}^{(i)}) - H\eta_t^{(i)}\bigr)
    \\
    &=
    \Big[ f(x_{t-1}^{(i)}) + K_t(y_t - H f(x_{t-1}^{(i)})) \Big]
    +
    \Big[ (I\!-\!K_tH)\eta_t^{(i)} \!+\! K_t \tilde\epsilon_t^{(i)} \Big].
\end{align*}
yielding the stated mean and covariance (where we used independence of $\eta_t^{(i)}$ and $\tilde\epsilon_t^{(i)}$).
\end{proof}

For linear observation functions $h$, the formulas for $\qic$ and $\qip$ appear in \citep{Frei2013BridgingEKFandPF,vanLeeuwen2019ParticleFilters,Papadakis2010DataAssimilation}; \Cref{prop:proposals_by_conditioning} unifies them and makes explicit the exact conditions under which they hold---most notably the permissible construction of $K_t$ from appropriate ensembles and the linearity requirement for $h$ in the previous-ensemble case.

As we will discuss in detail in \Cref{sec:EnKF_reweighting_schemes}, apart from distinguishing between conditioning on current or previous particles, we can also work with individual or mixture proposals, as well as with individual or mixture targets, allowing for more variety in performing importance reweighting, cf.\ \Cref{sec:IS_with_mix_targets_and_proposals}.
In this sense, our work can be seen as an extension of WEnKF, with an additional contribution of incorporating TQMC into this procedure (see \Cref{sec:QMC_for_EnKF}).

\begin{remark}[square-root/ETKF variants]
Deterministic/square-root EnKF updates avoid $\tilde\epsilon_t^{(i)}$ and adjust ensemble anomalies directly. Then the conditional proposal is singular (supported on an affine image of the forecast ensemble), which is less convenient for IS.
% Our methodology therefore uses the stochastic EnKF throughout.	
Our methodology therefore focuses on stochastic EnKF variants, for which the proposals admit densities.
While deterministic EnKF variants are more commonly used in large-scale applications, extending the present framework to transform filters would require handling singular proposals and is left for future work.
Localized stochastic EnKF variants~\cite{Houtekamer2001sEnKF,Hamill2001DistanceDependentFiltering} provide a practically relevant compromise, improving scalability while retaining proposal densities; as discussed in \Cref{sec:generalization_localized_EnKF}, our methodology extends naturally to these variants.
\end{remark}

\subsection{Estimating the Kalman Gain}
\label{sec:Kalman_gain_estimation}	

Let us discuss certain issues concerning the computation of the Kalman gain $K_{t}$, which, as mentioned above, attempts to approximate $\Cov[x_{t},y_{t} | Y_{t-1} ]\, \Cov[y_{t}\, |\, Y_{t-1} ]^{-1}$ .	
Since
\begin{align}
\notag
&\Cov[x_{t},y_{t} | Y_{t-1} ]
=
\Cov[x_{t},h(x_{t})| Y_{t-1} ],
&
&\Cov[y_{t}\, |\, Y_{t-1} ]
=
\Cov[h(x_{t}) | Y_{t-1} ] + R,
\intertext{it is both natural and common to estimate these covariances by}
\label{equ:covariance_estimates_for_general_Kalman_gain}
&\hat{C}_{t}^{xy}
=
\Cov^{\emp} \big[ (\hat{x}_{t}^{(i)})_{i=1}^{N} , (h(\hat{x}_{t}^{(i)}))_{i=1}^{N} \big],
&
&\hat{C}_{t}^{y}
=
\Cov^{\emp} \big[ (h(\hat{x}_{t}^{(i)}))_{i=1}^{N} \big] + R
\intertext{and to use the common Kalman gain $K_{t} = K_{t}^{\mathsf{c}}$ with}
\label{equ:common_Kalman_gain}
&K_{t}^{\mathsf{c}} = \hat{C}_{t}^{xy} (\hat{C}_{t}^{y})^{-1}.
\intertext{For linear $h(x)=Hx$, this Kalman gain reduces to}
\label{equ:Kalman_gain_linear_observation}
&K_{t}^{\mathsf{c}}
=
\hat C_t^{x} H^\top (H \hat C_t^{x} H^\top + R)^{-1},
&
&\hat{C}_{t}^{x}
=
\Cov^{\emp} \big[ (\hat{x}_{t}^{(i)})_{i=1}^{N} \big].	
\end{align}

\paragraph{Interaction issue when conditioning on previous particles.}
If $K_t=\hat C_t^{xy}(\hat C_t^{y})^{-1}$ is computed from the \emph{forecast} ensemble $(\hat x_t^{(i)})_{i=1}^N$ via \eqref{equ:common_Kalman_gain} or \eqref{equ:Kalman_gain_linear_observation}, it depends on the forecast noises $\eta_t^{(i)}$ and is therefore \emph{correlated} with the randomness in the analysis update \eqref{equ:EnKF_update_step}. Moreover, as noted by \citet[Section~3]{Papadakis2010DataAssimilation}, this dependence on the forecast ensemble induces \emph{interaction} between particles, so the Gaussian formula for $\qip$ in \Cref{prop:proposals_by_conditioning}\ref{item:proposals_by_conditioning_on_previous_particles} is only a mean-field approximation; the true conditional law $\qip$ is considerably more complex:
\begin{quote}
In the establishment of this distribution, we have neglected the dependence on the other particles (which appears through the forecast ensemble covariance) by treating the forecast ensemble covariance as a mean-field variable intrinsic to the system, independent of system realizations---an assumption that holds only asymptotically as the number of particles tends to infinity. 
\citep[Section~3]{Papadakis2010DataAssimilation}
\end{quote}

\paragraph{A consistent ``previous-ensemble'' gain for linear observations.}
For a linear observation function $h(x)=Hx$ we propose a Kalman gain $K_{t} = K_{t}^{\mathsf{p}}$ \emph{independent of the current forecast noises}:
\begin{equation}
\label{equ:previous_based_gain_linear}
K_{t}^{\mathsf{p}}
\coloneqq
\hat C^{x,\mathsf{p}}_{t} H^\top \bigl(H \hat C^{x,\mathsf{p}}_{t} H^\top + R\bigr)^{-1},
\qquad
\hat C^{x,\mathsf{p}}_{t}
\coloneqq
\Cov^{\emp} \bigl[(f(x_{t-1}^{(i)}))_{i=1}^N\bigr] + Q,
\end{equation}
where the estimate $\hat C^{x,\mathsf{p}}_{t}$ is based on the law of total covariance,
\begin{equation}
\label{equ:law_of_total_covariance_applied}
\Cov[x_t\mid Y_{t-1}] = \Cov[f(x_{t-1})\mid Y_{t-1}] + Q.
\end{equation}
By construction, $K_{t}^{\mathsf{p}}$ depends only on $(x_{t-1}^{(i)})_{i=1}^N$, hence, using $K_{t} = K_{t}^{\mathsf{p}}$ in \eqref{equ:EnKF_update_step} makes \Cref{prop:proposals_by_conditioning}\ref{item:proposals_by_conditioning_on_previous_particles} exact.
Moreover, $\hat C^{x,\mathsf{p}}_{t} \to \Cov[x_t | Y_{t-1}]$ almost surely (a.s.) as $N\to\infty$ under mild conditions:
\begin{lemma}[strong consistency]
\label{lem:consistency_Cp}
If $(x_{t-1}^{(i)})_{i=1}^N \stackrel{\textup{i.i.d.}}{\sim} p(x_{t-1} | Y_{t-1})$ and $\bE\big[ \|f(x_{t-1})\|^2 \big] < \infty$, then
\[
\hat{C}^{x,\mathsf{p}}_{t} \to \Cov[x_t\mid Y_{t-1}] \qquad \text{a.s. as } N\to\infty .
\]
\end{lemma}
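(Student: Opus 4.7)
The plan is to apply the strong law of large numbers (SLLN) component-wise and combine it with the identity \eqref{equ:law_of_total_covariance_applied} already recorded in the text. By that identity it suffices to prove
\[
\Cov^{\emp}\bigl[(f(x_{t-1}^{(i)}))_{i=1}^N\bigr] \;\xrightarrow{\text{a.s.}}\; \Cov[f(x_{t-1})\mid Y_{t-1}],
\]
because $Q$ is deterministic and both sides of \eqref{equ:law_of_total_covariance_applied} differ from these quantities exactly by $Q$.

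First, I would rewrite the empirical covariance in the ``second-moment minus outer-product-of-means'' form,
\[
\Cov^{\emp}\bigl[(f(x_{t-1}^{(i)}))_{i=1}^N\bigr]
=
\frac{N}{N-1}\Bigl(
\tfrac{1}{N}\sum_{i=1}^{N} f(x_{t-1}^{(i)})\,f(x_{t-1}^{(i)})^{\!\top}
\;-\;
\bar f_N\,\bar f_N^{\!\top}
\Bigr),
\qquad
\bar f_N \coloneqq \tfrac{1}{N}\sum_{i=1}^{N} f(x_{t-1}^{(i)}),
\]
so that the statement reduces to a pair of SLLN assertions for the i.i.d.\ sequence $(f(x_{t-1}^{(i)}))_{i\ge 1}$ together with continuous mapping.

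Next, I would justify those two SLLN applications. Applying the vector-valued SLLN coordinate-wise (which is legitimate because $\bE[\|f(x_{t-1})\|] \le \bE[\|f(x_{t-1})\|^{2}]^{1/2} < \infty$ by the moment hypothesis), I obtain $\bar f_N \to \bE[f(x_{t-1})\mid Y_{t-1}]$ almost surely. For the second-moment term, each entry of $f(x_{t-1}^{(i)})\,f(x_{t-1}^{(i)})^{\!\top}$ is an i.i.d.\ real-valued random variable whose expectation is finite by Cauchy--Schwarz and the assumption $\bE[\|f(x_{t-1})\|^{2}]<\infty$; hence the scalar SLLN gives entry-wise a.s.\ convergence
\[
\tfrac{1}{N}\sum_{i=1}^{N} f(x_{t-1}^{(i)})\,f(x_{t-1}^{(i)})^{\!\top}
\;\xrightarrow{\text{a.s.}}\;
\bE\bigl[f(x_{t-1})\,f(x_{t-1})^{\!\top}\,\big|\,Y_{t-1}\bigr].
\]
Combining these with $N/(N-1)\to 1$ and continuity of $(A,v)\mapsto A - vv^{\!\top}$ yields a.s.\ convergence of $\Cov^{\emp}[(f(x_{t-1}^{(i)}))_{i=1}^N]$ to $\Cov[f(x_{t-1})\mid Y_{t-1}]$; adding $Q$ and invoking \eqref{equ:law_of_total_covariance_applied} concludes the argument.

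There is no real obstacle here: the proof is a routine SLLN-plus-continuous-mapping exercise, and the only subtlety worth flagging in the write-up is that the almost-sure convergence is asserted \emph{conditionally on $Y_{t-1}$}, which is why the i.i.d.\ assumption refers to sampling from the conditional law $p(x_{t-1}\mid Y_{t-1})$ and why the limits are conditional (cross-)moments; since $Y_{t-1}$ is fixed throughout the statement, the conditional SLLN reduces to the unconditional one applied to the regular conditional distribution.
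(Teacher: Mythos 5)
Your proof is correct and follows essentially the same route as the paper's: the strong law of large numbers applied to the sample mean and the empirical second moment of $Z_i = f(x_{t-1}^{(i)})$, combined with the law of total covariance \eqref{equ:law_of_total_covariance_applied} to account for $Q$. The additional details you supply (the explicit $\tfrac{N}{N-1}$ factor, the Cauchy--Schwarz justification of integrability, and the remark about conditioning on $Y_{t-1}$) are all consistent with, and slightly more explicit than, the paper's argument.
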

\begin{proof}
Let $Z_i=f(x_{t-1}^{(i)})$. By the (conditional) strong law of large numbers,
$\bar Z_N \to \bE[Z_1\mid Y_{t-1}]$ and
$N^{-1}\sum_{i=1}^N Z_i Z_i^\top \to \bE[Z_1 Z_1^\top \mid Y_{t-1}]$ a.s., hence
$\Cov^{\emp}[(Z_i)_{i=1}^N]\to \Cov[Z_1\mid Y_{t-1}]$ a.s.
Since $\eta_t$ is independent of $x_{t-1}$, the law of total covariance \eqref{equ:law_of_total_covariance_applied} proves the claim.
\end{proof}

\noindent
Practically, the estimate $\hat{C}^{x,\mathsf{p}}_{t}$ typically yields a stabler gain (less sampling noise): compared with the forecast–ensemble covariance $\hat C_t^{x}=\Cov^{\emp}[(\hat x_t^{(i)})]$ with $\hat x_t^{(i)}=f(x_{t-1}^{(i)})+\eta_t^{(i)}$, the estimator $\hat{C}^{x,\mathsf{p}}_{t}$ uses the known $Q$ exactly and avoids sampling the additive-noise term and random cross-terms, thereby reducing the variance of the covariance estimate.
For this reason, whenever $h$ is linear, we recommend using the previous-ensemble gain $K_{t}^{\mathsf p}$ from \eqref{equ:previous_based_gain_linear} for the EnKF update---even if the proposal in the subsequent IS step is the current-ensemble law $\qic$.

\paragraph{A consistent ``current-ensemble'' gain for nonlinear observations.}
The construction above works only for linear observation functions $h(x)=Hx$.
For nonlinear $h$ it is unclear how to obtain consistent estimators $\hat C_t^{xy}\approx \Cov[x_t,y_t\mid Y_{t-1}]$ and $\hat C_t^{y}\approx \Cov[y_t\mid Y_{t-1}]$ that depend solely on $(x_{t-1}^{(j)})_{j=1}^{N}$ so that \Cref{prop:proposals_by_conditioning}\ref{item:proposals_by_conditioning_on_previous_particles} would apply.
Even if such estimators were available, the conditional law $\qip = p (\tilde x_t^{(i)} \, | \, (x_{t-1}^{(j)})_{j=1}^N )$
in general does not admit a closed-form expression, since the composition $h (f(x_{t-1}^{(i)})+\eta_t^{(i)})$ does not admit a decomposition similar to the one in the proof of \Cref{prop:proposals_by_conditioning}\ref{item:proposals_by_conditioning_on_previous_particles},
making an IS step based on $\qip$ unclear or infeasible without further approximations.
% Consequently, importance reweighting based on $\qip$ becomes substantially less transparent.

By contrast, the \emph{current-ensemble} gain $K_{t}^{\mathsf{c}}=\hat C_t^{xy}(\hat C_t^{y})^{-1}$ from \eqref{equ:common_Kalman_gain} is well defined for general $h$ and, on the full forecast ensemble $(\hat x_t^{(j)})_{j=1}^N$, yields a closed-form Gaussian proposal
$\qic = p(\tilde x_t^{(i)} \, | \, (\hat{x}_{t}^{(j)})_{j=1}^N )$ with mean and covariance as in \Cref{prop:proposals_by_conditioning}\ref{item:proposals_by_conditioning_on_current_particles}.
One of our contributions is to exploit this current-ensemble proposal $\qic$ in the reweighting strategies developed in \Cref{sec:EnKF_reweighting_schemes}.

\begin{remark}
\label{remark:restrictions_weighted_EnKF_schemes}
In summary, both conditionals $\qic$ and $\qip$ from \Cref{prop:proposals_by_conditioning} can serve as proposal distributions for importance sampling and may be paired with either gain $K_t^{\mathsf c}$ or $K_t^{\mathsf p}$. Their applicability comes with the following notes:
\begin{itemize}
\item \textbf{Linear-only:} $K_t^{\mathsf p}$ and the closed–form Gaussian expression for $\qip$ are defined only for linear observation functions $h(x)=Hx$.
\item \textbf{General but noisier:} $K_t^{\mathsf c}$ applies to general (possibly nonlinear) $h$, but its empirical construction is typically less stable.
\item \textbf{WEnKF caveat:} Combining $\qip$ with $K_t^{\mathsf c}$ (the choice made in WEnKF) violates the assumptions of \Cref{prop:proposals_by_conditioning}\ref{item:proposals_by_conditioning_on_previous_particles}; the Gaussian formula \eqref{equ:qi_previous_particles} should then be viewed as an approximation \citep[Sec.~3]{Papadakis2010DataAssimilation}.
\item \textbf{Degeneracy of $\qic$:} The covariance $\Qic$ becomes singular if $m < d$ or, when $K_t^{\mathsf c}$ is used, if $N < d+1$, which renders $\qic$ unusable as an IS proposal.
\end{itemize}
\noindent\textbf{Recommendation.} Whenever available (i.e., for linear $h$), prefer the pair $(\qip,\,K_t^{\mathsf p})$. For nonlinear observation functions, fall back to $(\qic,\,K_t^{\mathsf c})$, keeping the above restrictions in mind.
\end{remark}

\section{Importance Sampling Schemes for the Ensemble Kalman Filter}
\label{sec:EnKF_reweighting_schemes}

In the EnKF framework,
the forecast particles \( \hat{x}_{t}^{(i)} \)
and the 
updated particles \( \tilde{x}_{t}^{(i)} \)
can be viewed as stratified (or balanced/deterministic, see\ \citealt[Section~5.5]{rubinstein2016simulation}) samples from the mixtures
\begin{align}
\label{equ:mixture_prior_approximation}
\pim
&=
\frac{1}{N} \sum_{i=1}^{N} \pi_{t}^{(i)},
&
\pi_{t}^{(i)}
&\coloneqq
\cN( f(x_{t-1}^{(i)}) , Q),
&
&\hat{x}_{t}^{(i)} \stackrel{\textup{indep.}}{\sim} \pi_{t}^{(i)},
\\
\label{equ:mixture_proposal}
\qm
&=
\frac{1}{N} \sum_{i=1}^{N} \qi,
&
\qi
&=
\cN\big( \mi, \Qi \big),
&
&\tilde{x}_{t}^{(i)} \stackrel{\textup{indep.}}{\sim} \qi,
\end{align}
respectively, where $\mi$ and $\Qi$ are specified in \eqref{equ:qi_current_particles} and \eqref{equ:qi_previous_particles} below.
That is, both the approximate prior distribution \( \pim \approx \pi_{t}^{\mathsf{prior}} \) \citep[Equation~(26)]{vanLeeuwen2019ParticleFilters} and the proposal distribution \( \qm \) \citep[Equation~(1)]{Frei2013BridgingEKFandPF} are Gaussian mixtures.
It follows that the (approximate and unnormalized) posterior/target is again a mixture distribution:
\begin{align}
\label{equ:mixture_posterior_approximation}
&&p_{t}^{\mix}
&=
\ell_{t} \, \pim
=
\frac{1}{N} \sum_{i=1}^{N} p_{t}^{(i)},
&&
p_{t}^{(i)}
=
\ell_{t}\, \pi_{t}^{(i)},&&
\intertext{where, for $t \in \bN$,}
\label{equ:likelihood}
&&\ell_t(x)
&\coloneqq
\exp\left(-\tfrac12|y_t - h(x)|_R^2\right) \in (0,1],
&&
x\in \bR^d,
&&
\end{align}
denotes the (unnormalized) likelihood given observation $y_{t}$.
As we are working with \emph{stratified} proposal samples $\tilde{x}_{t}^{(i)} \stackrel{\textup{indep.}}{\sim} \qi$, the self-normalized versions of the estimators $\II$, $\MI$ and $\MM^{\strat}$ from
\Cref{sec:IS_with_mix_targets_and_proposals} are applicable.\footnote{In principle, one could also use $\IM$ and $\MM$ by drawing proposals i.i.d.\ from $\qm$ instead of stratified from $(\qi)_{i=1}^N$. This would replace the EnKF's balanced sampling with i.i.d.\ sampling and thus alter the algorithm; here we keep the EnKF prediction and update steps unchanged. A mixture-sampling analogue of $\MM$ will be discussed in \Cref{sec:QMC_for_EnKF}.}
This way the EnKF update \eqref{equ:EnKF_update_step} remains identical across all three strategies, the only difference lies in the \textit{interpretation}---each particle can either be viewed as an independent sample from \( q_{t}^{(i)} \) or, collectively, as a \textit{stratified} sampling from \( \qm \).

In addition to distinguishing between individual and mixture reweighting, there are at least two options to establish the individual proposals $\qi = \cN\big( \mi, \Qi \big)$, as has been established in \Cref{prop:proposals_by_conditioning}:
One can either condition on the current forecast ensemble $(\hat x_t^{(j)})_{j=1}^N$ via \eqref{equ:qi_current_particles}, or the previous ensemble $(x_{t-1}^{(j)})_{j=1}^N$ via \eqref{equ:qi_previous_particles},
\[
\qi
=
\qic
=
\cN(\mic,\Qic)
\qquad
\text{or}
\qquad
\qi
=
\qip
=
\cN(\mip,\Qip).
\]
Consequently, this results in \( 3 \times 2  = 6 \) different self-normalized importance sampling strategies for $p_{t}^{\mathsf{post}}$ of the form
\begin{equation}
\label{equ:post_approximation_weight_calculation}
p_{t}^{\mathsf{post}}
\approx
\sum_{i=1}^{N} w_{t}^{(i)}\, \delta_{\tilde{x}_{t}^{(i)}},
\qquad
w_{t}^{(i)} = \frac{v_{t}^{(i)}}{\sum_{j} v_{t}^{(j)}},
\qquad
v_{t}^{(i)}
=
\frac{\chposti}{\chqi}(\tilde{x}_{t}^{(i)}),
\end{equation}
summarized in \Cref{table:EnKF_reweighting_options} and \Cref{algo:importance_sampling_strategies} and denoted by
\begin{align*}
\II_{\mathsf{c}},
&&
\MI_{\mathsf{c}},
&&
\MM_{\mathsf{c}}^{\strat},
&&
\II_{\mathsf{p}},
&&
\MI_{\mathsf{p}},
&&
\MM_{\mathsf{p}}^{\strat},
\end{align*}
where
\begin{itemize}
\item 
the first letter corresponds to the choice of \( \chposti \):
\newline
\texttt{I} for ``individual'', $\chposti = \posti$, or \texttt{M} for ``mixture'', $\chposti = \postm$;
\item 
the second letter corresponds to the choice of \( \chqi \):
\newline
\texttt{I} for ``individual'', $\chqi = \qi$, or \texttt{M} for ``mixture'', $\chqi = \qm$;
\item 
the index (\textsf{c} for ``current'' or \textsf{p} for ``previous'') corresponds to the interpretation of the individual proposals $\qi$, cf. \eqref{equ:qi_current_particles} and \eqref{equ:qi_previous_particles} above.
\end{itemize}
Note that $\II_{\mathsf{p}}$ corresponds to the well-known weighted EnKF ($\WEnKF$; \citealt{vanLeeuwen2019ParticleFilters,Papadakis2010DataAssimilation}).

\begin{algorithm}[t]
\caption{Importance Sampling Schemes for the Ensemble Kalman Filter}
\label{algo:importance_sampling_strategies}
\begin{algorithmic}[1]
    \State \textbf{Inputs:} ensemble size $N$, horizon $T$, prior $p_0$, dynamics $f$, observation function $h$, covariances $Q,R$, observations $(y_{t})_{t=1}^{T}$,
    choice of IS scheme $\in\{ \II_{\mathsf{c}},\MI_{\mathsf{c}},\MM_{\mathsf{c}}^{\strat},\II_{\mathsf{p}},\MI_{\mathsf{p}},\MM_{\mathsf{p}}^{\strat}\}$ ($\mathsf{p}$-schemes require linear $h$).
    \State \textbf{Initialize:} Sample $x_{0}^{(i)} \stackrel{\mathrm{i.i.d.}}{\sim} p_0$.
    \For{$t = 1$ to $T$}
    \State \textbf{Prediction step:}
    Propagate each particle forward via \eqref{equ:prediction_step} $\ \leadsto\ (\hat{x}_{t}^{(i)})_{i=1}^{N}$.
    %= f(x_{t-1}^{(i)}) + \eta_{t}^{(i)}$, $\eta_{t}^{(i)} \stackrel{\textup{i.i.d.}}{\sim} \cN(0,Q)$.
    \State \textbf{Kalman gain estimation:}
    \begin{itemize}
        \item 
        If $h$ is nonlinear, compute $K_t = K_{t}^{\mathsf{c}}$ via \eqref{equ:common_Kalman_gain}.
        \item 
        If $h$ is linear, compute $K_t = K_{t}^{\mathsf{p}}$ via \eqref{equ:previous_based_gain_linear}.
    \end{itemize}
    \State \textbf{Build targets and proposals:}
    \begin{itemize}
        \item Define $p_{t}^{(i)} = \ell_{t}\, \pi_{t}^{(i)}$, where $\pi_{t}^{(i)} = \cN( f(x_{t-1}^{(i)}) , Q)$, cf.\ \eqref{equ:mixture_prior_approximation}--\eqref{equ:mixture_posterior_approximation}.       
        \item Define $\qi = \qic$ via \eqref{equ:qi_current_particles} for schemes of the form $\biggraybox\, \biggraybox_{\mathsf{c}}$.
        \item Define $\qi = \qip$ via \eqref{equ:qi_previous_particles} for schemes of the form $\biggraybox\, \biggraybox_{\mathsf{p}}$.
        \item 
        Define $p_{t}^{\mix} = N^{-1} \sum_{i=1}^{N} p_{t}^{(i)}$ and $\qm = N^{-1} \sum_{i=1}^{N} \qi.$
    \end{itemize}
    \State \textbf{Analysis step:}
    update each particle via \eqref{equ:EnKF_update_step} $\ \leadsto\ (\tilde{x}_{t}^{(i)})_{i=1}^{N}$.
    
    \State \textbf{Weights (self-normalized IS):} Set $w_t^{(i)} \propto \frac{\chposti}{\chqi} (\tilde{x}_t^{(i)})$ and normalize, cf.\ \eqref{equ:post_approximation_weight_calculation}, where
    \begin{itemize}
        \item $\chposti = \posti$, $\chqi = \qi$ for schemes of the form $\II_{\smallgraybox}$;
        \item $\chposti = \postm$, $\chqi = \qi$ for schemes of the form $\MI_{\smallgraybox}$;
        \item $\chposti = \postm$, $\chqi = \qm$ for schemes of the form $\MM^{\strat}_{\smallgraybox}$.
    \end{itemize}   
    \State \textbf{Resampling:} Choose $x_{t}^{(i)}$ via \eqref{equ:systematic_resampling}.
    \EndFor
\end{algorithmic}
\end{algorithm}

\begin{table}[t]
\caption{The $3 \times 2 = 6$ importance sampling strategies for the EnKF, determined by whether individual or mixture priors, as well as individual or mixture proposals, are used in the reweighting process, and by whether the proposals are conditioned on current or previous particles.
    Note that $\II_{\mathsf{p}}$ corresponds to the well-known weighted EnKF ($\WEnKF$; \citealt{vanLeeuwen2019ParticleFilters,Papadakis2010DataAssimilation}).
}  
\label{table:EnKF_reweighting_options}
% \small
\centering
\renewcommand{\arraystretch}{2.0}
\begin{tabular}{lcc}
    \toprule
    \tcbox[
    on line,
    tcbox raise base,
    colback=gray!3,
    colframe=gray,
    boxsep=2pt
    ]{
            \begin{minipage}{0.3\textwidth}
                \centering
                \textbf{EnKF Reweighting via}
                \\[0ex]
                $p_{t}^{\mathsf{post}} \approx
                \sum_{i=1}^{N} w_{t}^{(i)}\, \delta_{\tilde{x}_{t}^{(i)}}$
                \\[0ex]
                $w_{t}^{(i)} \propto \frac{\chposti}{\chqi}(\tilde{x}_{t}^{(i)}),$
            \end{minipage}
        }
        &
        \begin{tabular}{@{}c@{}}
            $\biggraybox \, \biggraybox_{\mathsf{c}}:$				
            \textbf{conditioning on}
            \\[-2ex]
            \textbf{current particles:}
            \\[-2ex]				
            \begin{tabular}{@{}c@{}}
                $\mi = \mic$
                \\[-2ex]
                $\Qi = \Qic$
            \end{tabular}
            \ cf.\ \eqref{equ:qi_current_particles}
        \end{tabular}
        &
        \begin{tabular}{@{}c@{}}
            $\biggraybox \, \biggraybox_{\mathsf{p}}:$
            \textbf{conditioning on}
            \\[-2ex]
            \textbf{previous particles:}
            \\[-2ex]
            \begin{tabular}{@{}c@{}}
                $\mi = \mip$
                \\[-2ex]
                $\Qi = \Qip$
            \end{tabular}
            \ cf.\ \eqref{equ:qi_previous_particles}
        \end{tabular}
        \\
        \midrule
        $\II_{\textcolor{lightgray}{\rule[-0.1ex]{0.45em}{1.2ex}}}:\ $					
        \begin{tabular}{@{}c@{}}
            $\chposti = \posti = \ell_{t} \, \cN( f(x_{t-1}^{(i)}) , Q)$\\[-2ex]
            $\chqi = \qi = \cN\big( \mi, \Qi \big)$
        \end{tabular}
        & 
        $\II_{\mathsf{c}}$ 
        &  
        $\II_{\mathsf{p}}$
        \\
        \midrule
        $\MI_{\textcolor{lightgray}{\rule[-0.1ex]{0.45em}{1.2ex}}}:\ $
        \begin{tabular}{@{}c@{}}
            $\chposti = \postm = \frac{1}{N} \sum_{i=1}^{N} \posti$ \\[-2ex]
            $\chqi = \qi = \cN\big( \mi, \Qi \big)$
        \end{tabular}
        & 
        $\MI_{\mathsf{c}}$ 
        &  
        $\MI_{\mathsf{p}}$
        \\
        \midrule
        %			$\MM^{\strat}:\ w_{t}^{(i)} \propto \frac{\pim}{\qm}$
        $\MM^{\strat}_{\smallgraybox}:\ $
        \begin{tabular}{@{}c@{}}
            $\chposti = \postm = \frac{1}{N} \sum_{i=1}^{N} \posti$ \\[-2ex]
            $\chqi = \qm = \frac{1}{N} \sum_{i=1}^{N} \qi$
        \end{tabular}
        & 
        $\MM^{\strat}_{\mathsf{c}}$ 
        &  
        $\MM^{\strat}_{\mathsf{p}}$ 
        \\			
        \bottomrule
    \end{tabular}	
\end{table}

\subsection{Generalization to localized EnKF and other filtering algorithms}
\label{sec:generalization_localized_EnKF}

The reweighting strategy introduced in the previous subsection extends naturally to a broader class of ensemble--based particle filters, including localized variants of the EnKF and related algorithms.
The convergence analysis in the subsequent \Cref{sec:convergence} is formulated within this more general framework.

We continue to work with the state and observation models given by \eqref{equ:filtering_perturbed_dynamics}--\eqref{equ:filtering_perturbed_observation}.
At time step $t$, \Cref{algo:importance_sampling_strategies} can be viewed as drawing analysis samples $\tilde{x}_t^{(i)} \sim \qi$ from proposals $\qi = \qic$ or $\qi = \qip$, as established in \Cref{prop:proposals_by_conditioning}.
These proposals may depend on either $x_{t-1}^{(i)}$ or $\hat{x}_{t}^{(i)}$ as well as on $y_{t}$ and, through the Kalman gain $K_{t}$, on the full ensembles $(x_{t-1}^{(j)})_{j=1}^{N}$ or $(\hat{x}_{t}^{(j)})_{j=1}^{N}$, cf.\ \eqref{equ:qi_current_particles} and \eqref{equ:qi_previous_particles}.
To encompass localized EnKF and other ensemble--based filters, we replace this specific choice by a generic proposal mechanism of the form
\begin{equation}
    \label{equ:generic_proposal}
    \qi
    =
    \genq \bigg( x_{t-1}^{(i)} , \hat{x}_{t}^{(i)} , \frac{1}{N} \sum_{j=1}^{N} \delta_{x_{t-1}^{(j)}} , \frac{1}{N} \sum_{j=1}^{N} \delta_{\hat{x}_{t}^{(j)}} , y_{t}\bigg),
\end{equation}
where
$\genq
\colon
\bR^{d} \times \bR^{d} \times \cP(\bR^{d}) \times \cP(\bR^{d}) \times \bR^{m}
\to \cP_{+}(\bR^{d})$
is an arbitrary function that we call \emph{proposal function}
and $\cP_{+}(\bR^{d})$ denotes the set of probability distributions on $\bR^{d}$ with strictly positive probability densities.
As before, we define the proposal mixture $\qm \coloneqq \frac{1}{N} \sum_{i=1}^N \qi$.
Using the same three importance sampling strategies as in the previous subsection, but now with the generic \emph{proposal function} $\genq$, we obtain the schemes $\II_{\genq}$, $\MI_{\genq}$, and $\MM_{\genq}^\strat$, which are summarized in \Cref{algo:importance_sampling_strategies_general}.

Examples of ensemble-based filtering algorithms to which our reweighting framework can be applied include, in addition to the standard stochastic EnKF discussed above, the covariance--localized EnKF~\citep{Houtekamer2001sEnKF,Hamill2001DistanceDependentFiltering} and EnKF with covariance inflation~\citep{Anderson1999MCfiltering,Whitaker2012Evaluating}, which are most transparently described in the linear observation case $h(x) = Hx$.
In their simplest formulations, both methods amount to modifying the empirical forecast covariance matrix
$\hat{C}_{t}^{x}
=
\Cov^{\emp} \big[ (\hat{x}_{t}^{(i)})_{i=1}^{N} \big]$
via
\[
\hat{C}_{t}^{x,\mathrm{loc}}
\coloneqq
L \circ \hat{C}_{t}^{x},
\qquad
\hat{C}_{t}^{x,\mathrm{infl}} \coloneqq \delta_{t}^{2} \hat{C}_{t}^{x}
\]
where $L$ is a localization matrix whose entries depend on the physical distance between state components and $\circ$ denotes the Schur product, while $\delta_{t} \ge 1$ is an inflation factor.\footnote{The choice of localization and inflation parameters (e.g., length scales or inflation factors) is problem-dependent and typically guided by the spatial correlation structure of the system and the ensemble size, often requiring empirical tuning.}
Consequently, these modified covariance matrices enter the Kalman gain
$K_{t}^{\mathsf{c}}
=
\hat C_t^{x} H^\top (H \hat C_t^{x} H^\top + R)^{-1}$
in \eqref{equ:Kalman_gain_linear_observation}, where $\hat C_t^{x}$ is replaced by either $\hat{C}_{t}^{x,\mathrm{loc}}$ or $\hat{C}_{t}^{x,\mathrm{infl}}$, and thus they directly modify the Gaussian proposals in \eqref{equ:qi_current_particles} and \eqref{equ:qi_previous_particles}.
We do not include localized variants in the numerical experiments, since this would introduce additional tuning choices beyond the focus of the present work. A systematic study of how localization interacts with mixture-based importance weighting in high-dimensional settings is an interesting direction for future work.

\begin{algorithm}[t]
\caption{Importance Sampling Schemes in General Framework}
\label{algo:importance_sampling_strategies_general}
\begin{algorithmic}[1]
    \State \textbf{Inputs:} ensemble size $N$, horizon $T$, prior $p_0$, dynamics $f$, observation function $h$, covariances $Q,R$, observations $(y_{t})_{t=1}^{T}$, proposal function $\genq$,
    choice of IS scheme $\in\{ \II_{\genq},\MI_{\genq},\MM_{\genq}^{\strat}\}$
    \State \textbf{Initialize:} Sample $x_{0}^{(i)} \stackrel{\mathrm{i.i.d.}}{\sim} p_0$.
    \For{$t = 1$ to $T$}
    \State \textbf{Prediction step:}
    Propagate each particle forward via \eqref{equ:prediction_step} $\ \leadsto\ (\hat{x}_{t}^{(i)})_{i=1}^{N}$.
    
    \State \textbf{Targets} $p_{t}^{(i)} = \ell_{t} \cdot \cN( f(x_{t-1}^{(i)}) , Q)$ and $p_{t}^{\mix} = N^{-1} \sum_{i=1}^{N} p_{t}^{(i)}$.
    
    \State \textbf{Proposals} $q_t^{(i)} =  \genq \big( x_{t-1}^{(i)} , \hat{x}_{t}^{(i)} , \frac{1}{N} \sum_{j=1}^{N} \delta_{x_{t-1}^{(j)}} , \frac{1}{N} \sum_{j=1}^{N} \delta_{\hat{x}_{t}^{(j)}} , y_{t}\big)$ and $\qm = \frac{1}{N} \sum_{i=1}^{N} \qi$.
    
    \State \textbf{Analysis step:}
    Draw independent samples $\tilde{x}_t^{(i)} \sim \qi$, $i=1,\dots,N$.    
    
    \State \textbf{Weights (self-normalized IS):} Set $w_t^{(i)} \propto \frac{\chposti}{\chqi} (\tilde{x}_t^{(i)})$ and normalize, cf.\ \eqref{equ:post_approximation_weight_calculation}, where
    \begin{itemize}
        \item $\chposti = \posti$, $\chqi = \qi$ for $\II_{\genq}$;
        \item $\chposti = \postm$, $\chqi = \qi$ for $\MI_{\genq}$;
        \item $\chposti = \postm$, $\chqi = \qm$ for $\MM^{\strat}_{\genq}$.
    \end{itemize}   
    \State \textbf{Resampling:} Choose $x_{t}^{(i)}$ via \eqref{equ:systematic_resampling}.
    \EndFor
\end{algorithmic}
\end{algorithm}

\subsection{Convergence Analysis}
\label{sec:convergence}

To demonstrate that importance sampling estimates introduced in \Cref{algo:importance_sampling_strategies_general} consistently approximate the posterior distribution in a meaningful way,
we introduce the same distance between random probability density functions as \citet[Section~11.3]{SanzAlonso2023IPandDA} did for the analysis of the BPF. For two random probability density functions $p, p'$ on $\mathbb{R}^d$ we define the metric
\[
d(p, p') := \sup_{ \|g\|_\infty \le 1}
\bigg(\mathbb{E}\bigg[\bigg(
\int p \, g - \int p'\, g
\bigg)^2\bigg]\bigg)^{1/2},
\]
where the supremum is taken over all $g$ in the space $C_{b}(\bR^{d})$ of continuous bounded functions on $\bR^{d}$. Note that, in our case, the randomness of the probability densities arises from sampling in the prediction, analysis and resampling step, while we consider the observations $(y_{t})_{t=1}^{T}$ to be fixed.

Throughout the analysis, we frequently condition expectations on the `history' $\sigma$-algebra
\begin{equation}
    \label{eq:history_sigma_algebra}
    \mathcal{F}_t =
    \sigma \big( (x_{t-1}^{(j)})_{j=1}^{N} , (\hat{x}_{t}^{(j)})_{j=1}^{N} \big).
\end{equation}
generated by the resampled particles from the previous analysis step and the current forecast particles.

\begin{theorem}
    \label{thm:convergence}
    Consider the dynamics and observation models given by \eqref{equ:filtering_perturbed_dynamics}--\eqref{equ:filtering_perturbed_observation} with fixed observations $(y_{t})_{t=1}^{T} \in (\bR^{m})^{T}$, continuous observation function $h\colon \bR^d \to \bR^k$
    and any of the sampling schemes $\II_{\genq},\, \MI_{\genq},\, \MM_{\genq}^{\strat}$ defined in \Cref{algo:importance_sampling_strategies_general} for some proposal function $\genq$ of the form \eqref{equ:generic_proposal}.     
    Further, denote $p_{0}^{\mathsf{post}} \coloneqq p_0$, $c_0
    \coloneqq 1$, and, for $t = 1,\dots,T$,
    \begin{align*}
    c_t
    &\coloneqq
    2\bigg(\frac{c_{t-1}}{\Zpost} + \sigma_t  +1\bigg) \in [0,\infty],
    &
    \sigma_t^2
    &\coloneqq
    \mathbb{V}\Biggl[\frac{\check{p}_{t}^{(1)}}{\Zmix \, \check{q}_t^{(1)}}(\tilde{x}_t^{(1)}) \Biggr]
    =
    \mathbb{V}\Biggl[\frac{v_{t}^{(1)}}{\Zmix} \Biggr] \in [0,\infty], 
    \\
    \Zpost
    &\coloneqq    
    \int \ell_t\, \pi_t^{\mathsf{prior}},
    &
    \Zmix
    &\coloneqq
    \int \postm 
    = \int \ell_t\, \pim,
    \end{align*}
    where $\pi_{t}^{\mathsf{prior}}$, $p_{t}^{\mathsf{post}}$, $\pim$, $\postm$, $\posti$, $\ell_t$ and $v_{t}^{(i)}$ are defined in \eqref{equ:filtering_prior_and_posterior}, \eqref{equ:mixture_prior_approximation}, \eqref{equ:mixture_posterior_approximation}, \eqref{equ:likelihood} and \eqref{equ:post_approximation_weight_calculation}, respectively.
    Then the particle approximation of the filtering distribution $p_{t}^{\mathsf{post}}$ by the ensemble $(x_{t}^{(i)})_{i=1}^{N}$ satisfies
    \begin{equation}
        \label{equ:convergence}
        d_{t}
        \coloneqq
        d\bigg(p_{t}^{\mathsf{post}},\frac{1}{N} \sum_{i=1}^{N} \delta_{x_{t}^{(i)}} \bigg)        
        \le \frac{c_t}{\sqrt{N}}
        \qquad
        \text{for every } t = 0,\dots,T.    
    \end{equation}
\end{theorem}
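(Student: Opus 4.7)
The proof proceeds by induction on $t$, adapting the strategy of \citet[Section~11.3]{SanzAlonso2023IPandDA} for the bootstrap particle filter to the present IS--EnKF setting. The base case $t=0$ is the standard i.i.d.\ Monte Carlo bound: since $x_0^{(i)} \stackrel{\mathrm{i.i.d.}}{\sim} p_0 = p_0^{\mathsf{post}}$, one has $d_0 \leq 1/\sqrt{N} = c_0/\sqrt{N}$.

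For the inductive step, assume $d_{t-1} \leq c_{t-1}/\sqrt{N}$, and write $\tilde{p}_t \coloneqq \sum_i w_t^{(i)}\delta_{\tilde{x}_t^{(i)}}$ and $\hat{p}_t \coloneqq N^{-1}\sum_i \delta_{x_t^{(i)}}$. I would apply the triangle inequality
\[
d_t \leq d(p_t^{\mathsf{post}},\, \postm/\Zmix) + d(\postm/\Zmix,\, \tilde{p}_t) + d(\tilde{p}_t,\, \hat{p}_t),
\]
and bound the three terms, respectively, by $2c_{t-1}/(\Zpost\sqrt{N})$ (analysis), $2\sigma_t/\sqrt{N}$ (SNIS), and $2/\sqrt{N}$ (resampling). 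For the first term, the key observation is that $\pim$ and $\pi_t^{\mathsf{prior}}$ are the push-forwards of $N^{-1}\sum_i\delta_{x_{t-1}^{(i)}}$ and $p_{t-1}^{\mathsf{post}}$ under the transition kernel $K(x,\cdot)=\cN(\cdot;f(x),Q)$; since $\|Kg\|_\infty \leq \|g\|_\infty$, the prediction step is nonexpansive in $d$, giving $d(\pim, \pi_t^{\mathsf{prior}}) \leq d_{t-1}$. Combining this with the Bayes identity
\[
\int g\,\tfrac{\postm}{\Zmix} - \int g\, p_t^{\mathsf{post}} = \int g\,\tfrac{\postm}{\Zmix}\cdot\frac{\Zpost-\Zmix}{\Zpost} + \frac{\int g\,\ell_t\,(\pim-\pi_t^{\mathsf{prior}})}{\Zpost},
\]
together with $\|g\cdot\postm/\Zmix\|_\infty\leq\|g\|_\infty$, $\ell_t\leq 1$, and---crucially---the fact that $\Zpost$ is \emph{deterministic} (a functional of the true filtering prior), yields $d(p_t^{\mathsf{post}},\postm/\Zmix) \leq 2\, d(\pim,\pi_t^{\mathsf{prior}})/\Zpost$. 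The resampling term is handled by the standard variance bound for systematic resampling, which is dominated by that for multinomial resampling.

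The middle term is where the IS scheme enters: I would condition on the $\sigma$-algebra $\cG$ generated by $(x_{t-1}^{(i)}, \hat{x}_t^{(i)})_{i=1}^N$, so that the proposals $\qi$ are $\cG$-measurable and $\tilde{x}_t^{(i)}\stackrel{\mathrm{indep.}}{\sim}\qi$. A short calculation shows that, for all three schemes $\II_{\genq}$, $\MI_{\genq}$, $\MM_{\genq}^{\strat}$, one has $\bE[N^{-1}\sum_i v_t^{(i)}\mid\cG]=\Zmix$ and $\bE[N^{-1}\sum_i v_t^{(i)}g(\tilde{x}_t^{(i)})\mid\cG]=\int g\,\postm$. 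A standard SNIS perturbation argument---splitting numerator and denominator around their conditional means---then gives $\bE[(\int g\,\tilde{p}_t - \int g\,\postm/\Zmix)^2 \mid \cG] \leq 4\,\bV[v_t^{(1)}/\Zmix\mid\cG]/N$, and the law of total variance together with the definition of $\sigma_t^2$ yields the unconditional bound $d(\postm/\Zmix,\tilde{p}_t)\leq 2\sigma_t/\sqrt{N}$. The main obstacle I anticipate is the Bayes--rule perturbation step: $\Zmix$ is random whereas $\Zpost$ is deterministic, and the decomposition must be arranged so that only $\Zpost$ appears in the denominator. The assumption $\ell_t\leq 1$ is essential here, since it allows both terms in the decomposition to be controlled by $d_{t-1}$ via test functions of sup-norm at most one, thereby preventing any dependence on the (random) mixture normalizer.
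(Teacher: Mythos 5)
Your overall architecture is exactly the paper's: induction with base case $d_0\le N^{-1/2}$, the same triangle-inequality decomposition through the two intermediate measures $\postm/\Zmix$ and $\sum_i w_t^{(i)}\delta_{\tilde x_t^{(i)}}$, and the same treatment of the first term (nonexpansiveness of the prediction kernel, the Bayes decomposition keeping only the deterministic $\Zpost$ in denominators, and $\ell_t\le 1$), yielding $2d_{t-1}/\Zpost$. The resampling term is also handled as in the paper, though your bound $2/\sqrt N$ is loose: conditional unbiasedness of systematic resampling together with $\|g\|_\infty\le 1$ gives $1/\sqrt N$ directly.

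The genuine gap is in your SNIS term. You claim
$\bE\bigl[\bigl(\int g\,\tilde p_t-\int g\,\postm/\Zmix\bigr)^2\mid\cG\bigr]\le 4\,\bV[v_t^{(1)}/\Zmix\mid\cG]/N$
and hence $d(\postm/\Zmix,\tilde p_t)\le 2\sigma_t/\sqrt N$. This cannot hold: take $\sigma_t=0$, i.e.\ $v_t^{(1)}/\Zmix\equiv 1$ (perfectly matched proposals). Then $\tilde p_t=N^{-1}\sum_i\delta_{\tilde x_t^{(i)}}$ and the left-hand side is the plain Monte Carlo variance of $N^{-1}\sum_i g(\tilde x_t^{(i)})$, which is generically of order $1/N$, not zero. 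The correct argument splits the SNIS error into (i) the unnormalized estimator minus its mean, whose conditional variance is controlled by the \emph{second moment} $\bE[(v_t^{(1)}/\Zmix)^2]=\sigma_t^2+1$ (not the variance $\sigma_t^2$), contributing $(\sigma_t+1)/\sqrt N$, and (ii) the normalization fluctuation $\bigl(\sum_i w_t^{(i)}g(\tilde x_t^{(i)})\bigr)\bigl(1-N^{-1}\sum_j v_t^{(j)}/\Zmix\bigr)$, contributing $\sigma_t/\sqrt N$; together this gives $(2\sigma_t+1)/\sqrt N$, as in the paper's \Cref{lem:IS_error}. Your final constant $c_t$ still comes out right only because your overestimate of the resampling term by $1/\sqrt N$ exactly absorbs the $1/\sqrt N$ you dropped from the SNIS term --- a fortuitous cancellation, not a proof. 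With the corrected bounds $(2\sigma_t+1)/\sqrt N$ and $1/\sqrt N$ the argument closes as intended.
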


\begin{proof}
The claim will be proven by induction, starting with $d_{0} \leq N^{-1/2}$ \citep[Lemma~11.2]{SanzAlonso2023IPandDA}.
For $t\in \bN$, we decompose the error $d_{t}$ into the approximation error for the filtering distribution, the (self-normalized) importance sampling error and the resampling error, each of which will be bounded by one of the three lemmas in \Cref{section:auxiliary_results_thm_convergence}:
\Cref{lem:mixture_error,lem:IS_error,lem:resampling_error}.
The proof is then a simple application of the triangle inequality:
    \begin{align*}
    d_t
    &\leq
    \underbrace{d\left(p_{t}^{\mathsf{post}}, \frac{\postm}{\Zmix} \right)}_{\shortstack{\scriptsize\textup{approximation error for}\\\scriptsize\textup{the filtering distribution}}}
    +
    \underbrace{d\left( \frac{\postm}{\Zmix} \, ,\,  \sum_{i=1}^{N} w_{t}^{(i)}\delta_{\tilde{x}_{t}^{(i)}}\right)}_{\textup{SNIS error}}
    +
    \underbrace{d\left(\sum_{i=1}^{N} w_{t}^{(i)}\delta_{\tilde{x}_{t}^{(i)}}, \frac1N\sum_{i=1}^{N}
    \delta_{x_{t}^{(i)}}\right)}_{\textup{resampling error}}
    \\[1ex]
    &
    \leq \quad
    \frac{2\, d_{t-1}}{Z_t^{\mathsf{post}}}  + \frac{2 \, \sigma_t + 1}{\sqrt{N}}  + \frac{1}{\sqrt{N}}
    \quad \leq \quad
    \frac{2}{\sqrt{N}}\left(\frac{c_{t-1}}{Z_t^{\mathsf{post}}} + \sigma_t + 1\right)
    \quad = \quad
    \frac{c_t}{\sqrt{N}} \, .
\end{align*}
\end{proof}

The critical condition for \Cref{thm:convergence} to be useful is that the SNIS weights have finite variance, $\sigma_{t}<\infty$, a standard requirement in importance sampling.
Before showing that this assumption can indeed be ensured under suitable conditions (\Cref{thm:IS_weight_bound}), we first compare the schemes in terms of the variance of their self-normalized weights and show that the mixture--mixture estimators are preferable in this sense.
This result can be seen as a complement to \Cref{thm:mixture_IS_schemes_comparison}: here we work in the self-normalized setting and compare weight variance rather than the variance of the estimators themselves, which in general still depends on the integrand $g$.

\begin{theorem}
\label{thm:weight_variance_inequality}
  Under the assumptions of \Cref{thm:convergence} and using the notation therein,
  \[
    \sigma_t \geq \sigma_t^{\mix},
    \qquad
    c_t \geq c_t^{\mix},
  \]
  where $\sigma_t^{\mix}$ and $c_t^{\mix}$ are the corresponding quantities for the scheme $\MM_{\genq}^\strat$.
\end{theorem}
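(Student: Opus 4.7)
The plan is to derive closed-form expressions for $\sigma_t^2$ under each of the three schemes and then compare them via Titu's lemma, in the same spirit as the proof of \Cref{thm:mixture_IS_schemes_comparison}, but now at the level of weight variance rather than estimator variance.

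First, I would condition on the $\sigma$-algebra $\cF_t$ generated by $(x_{t-1}^{(j)})_{j=1}^N$ and $(\hat x_t^{(j)})_{j=1}^N$, under which all proposals $q_t^{(i)},\qm$, targets $p_t^{(i)},\postm$ and the normalizer $\Zmix$ become deterministic while $\tilde x_t^{(1)}\sim q_t^{(1)}$ conditionally. Applying the law of total variance to $\sigma_t^2=\bV[v_t^{(1)}/\Zmix]$ and exploiting exchangeability of the particle labels---which holds because initialization, prediction, the proposal function $\genq$ from \eqref{equ:generic_proposal}, SNIS weighting and systematic resampling are all symmetric in $i$---together with the identities $N\postm=\sum_i p_t^{(i)}$ and $N\qm=\sum_i q_t^{(i)}$, I expect to obtain the compact expressions
\[
\sigma_t^{\II,2}=\bE\!\left[\int\frac{(p_t^{(1)})^2}{q_t^{(1)}\,(\Zmix)^2}\right]-1,\quad
\sigma_t^{\MI,2}=\bE\!\left[\int\frac{(\postm)^2}{q_t^{(1)}\,(\Zmix)^2}\right]-1,\quad
\sigma_t^{\MM,2}=\bE\!\left[\int\frac{(\postm)^2}{\qm\,(\Zmix)^2}\right]-1.
\]
The $\MM^\strat$ case is the most delicate since its conditional mean $\mu^{(1)}=(\Zmix)^{-1}\int\postm\,q_t^{(1)}/\qm$ is not identically one; however, exchangeability together with $\sum_i q_t^{(i)}/\qm=N$ forces $\bE[\mu^{(1)}]=1$ and simultaneously collapses $\bE[\int(\postm)^2\,q_t^{(1)}/(\qm)^2]$ to $\bE[\int(\postm)^2/\qm]$, which is exactly how the displayed formula for $\sigma_t^{\MM,2}$ arises.

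Second, I would apply Titu's lemma \eqref{equ:titus_lemma} pointwise, once with $(a_i,b_i)=(p_t^{(i)},q_t^{(i)})$ and once with $(a_i,b_i)=(\postm,q_t^{(i)})$, obtaining
\[
N(\postm)^2/\qm=\bigl(\textstyle\sum_i p_t^{(i)}\bigr)^{\!2}\big/\textstyle\sum_i q_t^{(i)}\le\sum_i (p_t^{(i)})^2/q_t^{(i)}
\qquad\text{and}\qquad
N(\postm)^2/\qm\le\sum_i (\postm)^2/q_t^{(i)}.
\]
Dividing by $(\Zmix)^2$, integrating, taking expectation, and using exchangeability to identify the averages $N^{-1}\bE[\sum_i\,\cdot\,]$ with single-index expectations then yields $\sigma_t^{\II,2}\ge\sigma_t^{\MM,2}$ and $\sigma_t^{\MI,2}\ge\sigma_t^{\MM,2}$.

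Finally, for $c_t\ge c_t^{\mix}$ I would exploit the fact that the recursion $c_t=2(c_{t-1}/\Zpost+\sigma_t+1)$ is monotone in both arguments, while $\Zpost=\int\ell_t\,\pi_t^{\mathsf{prior}}$ depends only on the true dynamics, the likelihood and the fixed observations and is thus common to all schemes; a straightforward induction on $t$ using $c_0=c_0^{\mix}=1$ and $\sigma_t\ge\sigma_t^{\mix}$ closes the proof. The main obstacle will be the $\MM^\strat$ variance decomposition in step one, where neither the conditional mean nor the conditional second moment of $v_t^{(1)}$ factorizes as cleanly as for $\II$ and $\MI$; the necessary cancellations come out via the mixture identity $\sum_i q_t^{(i)}=N\qm$ combined with exchangeability, but they must be tracked carefully.
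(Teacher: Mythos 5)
Your proposal is correct and follows essentially the same route as the paper's proof: both compute $\sigma_t^2$ as the conditional second moment of the normalized weight minus $1$ (using exchangeability of the particle indices, exactly as in \eqref{equ:expectation_weights}), then apply Titu's lemma \eqref{equ:titus_lemma} pointwise with the same two choices of $(a_i,b_i)$, and conclude $c_t\ge c_t^{\mix}$ by monotonicity of the recursion with the scheme-independent $\Zpost$. Your explicit handling of the $\MM^{\strat}$ collapse via $\sum_i q_t^{(i)}=N\qm$ and the induction for $c_t$ only spell out details the paper leaves implicit.
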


\begin{proof}
Let $\mathcal{F}_t$ be given by \eqref{eq:history_sigma_algebra}. Using \Cref{lem:conditional_expectations},
% Using \eqref{equ:expectation_weights} and the notation therein,
we obtain in all schemes
\begin{align*}
    \sigma_t^2 
    &= \mathbb{V}\biggl[\frac{\check{p}_t^{(1)}}{\Zmix\, \check{q}_t^{(1)}} \bigl(\tilde{x}_t^{(1)}\bigr) \biggr] \\
    &= \mathbb{E}\biggl[\biggl(\frac{\check{p}_t^{(1)}}{\Zmix\, \check{q}_t^{(1)}} \bigl(\tilde{x}_t^{(1)}\bigr) \biggr)^2\biggr] 
    - 
    \mathbb{E}\biggl[\frac{\check{p}_t^{(1)}}{\Zmix\, \check{q}_t^{(1)}} \bigl(\tilde{x}_t^{(1)}\bigr) \biggr]^2\\
    &= \frac{1}{N} \sum_{i=1}^N 
       \mathbb{E}\biggl[\mathbb{E}\biggl[\biggl(\frac{\chposti}{\Zmix\, \chqi}\bigl(\tilde{x}_t^{(i)}\bigr) \biggr)^2\Bigm| \mathcal{F}_t\biggr]\biggr] - 1\\
    &= \mathbb{E} \biggl[\frac1N \sum_{i=1}^{N} \int \biggl(\frac{\chposti}{\Zmix\, \chqi}\biggr)^2 q_t^{(i)}\biggr] - 1 .
\end{align*}
The scheme $\MM_{\genq}^{\strat}$ satisfies $\sigma_t = \sigma_t^{\mix}$ by definition.
For the schemes $\II_{\genq}$ and $\MI_{\genq}$ we have $\check{q}_t^{(i)} = q_t^{(i)}$,
in which case an application of Titu's lemma stated in \eqref{equ:titus_lemma} yields
\begin{align*}
    \sum_{i=1}^{N} \int \biggl(\frac{\chposti}{\chqi}\biggr)^2 q_t^{(i)}
    &= \int \sum_{i=1}^{N} \frac{(\chposti)^2}{q_t^{(i)}} 
    \ge \int \frac{1}{N}
    \frac{\bigl(\sum_{i=1}^N \chposti\bigr)^2}{\sum_{i=1}^N q_t^{(i)}}\\
    &=\int
    \biggl( \frac{\frac1N\sum_{i=1}^N \chposti}{\frac1N\sum_{i=1}^N q_t^{(i)}}\biggr)^2
    \sum_{i=1}^N q_t^{(i)} 
    = \sum_{i=1}^N \int \biggl(\frac{\postm}{\qm}\biggr)^2 q_t^{(i)} .
\end{align*}
Therefore,
\[
    \sigma_t^2  
    = \mathbb{E} \biggl[\frac1N \frac{1}{(\Zmix)^2}\sum_{i=1}^{N} \int 
      \biggl(\frac{\chposti}{\chqi}\biggr)^2 q_t^{(i)}\biggr] - 1
    \ge \mathbb{E} \biggl[\frac{1}{N} \frac{1}{(\Zmix)^2}\sum_{i=1}^N \int 
      \biggl(\frac{\postm}{\qm}\biggr)^2 q_t^{(i)}\biggr] - 1 
    = (\sigma_t^{\mix})^2 .
\]
Since both variances are non–negative, this implies $\sigma_t \ge \sigma_t^{\mix}$.
$c_t \geq c_t^{\mix}$ follows directly from the definition.
\end{proof}

Next, we show that, for the sampling schemes $\II_{\mathsf{p}}, \ \MI_{\mathsf{p}}, \ \MM_{\mathsf{p}}^\strat$, the finite-variance condition $\sigma_{t}<\infty$ in \Cref{thm:convergence} is indeed satisfied whenever the observation function $h$ is linear and the model drift $f$ is bounded.
The assumption that $f$ is bounded is mainly technical and is not satisfied by many standard dynamical systems, such as the Lorenz models. Similar boundedness assumptions are commonly employed in the theoretical analysis of particle filtering and related methods to ensure stability and finite variance of estimators; see, e.g., \cite{Crisan}. In practice, such conditions can often be relaxed to suitable growth and moment assumptions, which we leave for future work.
Further note that we now work in the setting of the weighted EnKF from \Cref{algo:importance_sampling_strategies} rather than the more general framework of \Cref{algo:importance_sampling_strategies_general}.

\begin{theorem}
	\label{thm:IS_weight_bound}
	Let the assumptions of \Cref{thm:convergence} hold and consider the notation therein. In the setting of \Cref{algo:importance_sampling_strategies}, fix $i=1,\dots,N$ and $t=1,\dots,T$, assume that the observation function $h(x) = Hx$ is linear, that
	$K_{t} = K_{t}^{\mathsf{p}}
	= 
	\hat{C}^{x,\mathsf{p}}_{t} H^\top \bigl(H \hat{C}^{x,\mathsf{p}}_{t} H^\top + R\bigr)^{-1}$ is computed via \eqref{equ:previous_based_gain_linear}, and denote
	\[
	S \coloneqq HQH^\top + R \in \SPD{m}, \qquad
	K \coloneqq QH^\top S^{-1} \in \bR^{d \times m}, \qquad
	\Sigma \coloneqq Q - KHQ \in \bR^{d \times d}.
	\]			
	\begin{enumerate}[label = (\alph*)]
	\item
	\label{item:thm_IS_weight_bound_posti_formula}
	Denoting $\hat{m}_{t}^{(i)} = f(x_{t-1}^{(i)}) + K(y_t - Hf(x_{t-1}^{(i)}))$, we have $\Sigma \in \SPD{d}$ and
	\begin{equation}
		\label{equ:single_posterior_approx_density}
		p_t^{(i)}
		=
		\biggl(\frac{\det R}{\det S}\biggr)^{\! \! 1/2} 
		\exp \big( - \tfrac{1}{2} | y_t - H f(x_{t-1}^{(i)}) |_S^2 \big)
		\cdot \mathcal{N}(\hat{m}_{t}^{(i)}, \Sigma).
	\end{equation}
	\item
	\label{item:thm_IS_weight_bound_ratio_formula}	
	There exists $u_{t}^{(i)} \in \bR^{d}$ such that, for each $x\in \mathbb{R}^d$
	\begin{equation}
		\label{equ:relation_single_posterior_proposal_approx}
		\frac{\posti}{q_{\mathsf{p}, t}^{(i)}} (x)
		\le
		\biggl(\frac{\det  \Qip }{\det Q }\biggr)^{\! \! 1/2}
		\exp\big( -\tfrac{1}{2}(x-u_{t}^{(i)})^\top (\Sigma^{-1} - \Qip^{-1}) (x-u_{t}^{(i)}) \big),
	\end{equation}
	with equality if $\Cov^{\emp}[(f(x_{t-1}^{(i)})_{i=1}^N])$ is positive definite and $H$ has full row rank.
	\item
	\label{item:thm_IS_weight_bound_IS_weight_bound}
	For each $x\in \mathbb{R}^d$ and $i =1,\dots,N$,
	\begin{equation}
		\label{equ:weight_estimation}
		\frac{\posti}{Z^{\mathsf{mix}} \, \qip}(x)
		\leq
		\biggl(\frac{\det \hat{C}^{x,\mathsf{p}}_{t} }{\det Q}\biggr)^{\! \! 1/2}
		\biggl(\frac{1}{N} \sum_{i=1}^N
		\exp\big( -\tfrac{1}{2} |y_t - Hf(x_{t-1}^{(i)})|_S^2 \big)\biggr)^{-1}.
	\end{equation}		
	Further, for the sampling schemes $\II_{\mathsf{p}}$ and $\MM_{\mathsf{p}}^{\strat}$ defined in \Cref{algo:importance_sampling_strategies}, we have $\sigma_t^2 < \infty$ whenever $f$ is bounded.
	\end{enumerate}
\end{theorem}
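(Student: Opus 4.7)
For part (a), I plan to use the elementary Gaussian product identity
\[
\cN(y_t; Hx, R)\,\cN(x; f(x_{t-1}^{(i)}), Q) = \cN(y_t; Hf(x_{t-1}^{(i)}), S)\,\cN(x; \hat m_t^{(i)}, \Sigma),
\]
proved by completing the square in $x$ (with $\hat m_t^{(i)} = f(x_{t-1}^{(i)}) + K(y_t - Hf(x_{t-1}^{(i)}))$ as in the statement). Multiplying both sides by $(2\pi)^{m/2}(\det R)^{1/2}$ converts the left-hand Gaussian factor into $\ell_t(x)$ and the right-hand one into the scalar $(\det R/\det S)^{1/2}\exp(-|y_t - Hf(x_{t-1}^{(i)})|_S^2/2)$, producing \eqref{equ:single_posterior_approx_density}. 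For $\Sigma \in \SPD{d}$, I apply the Sherman--Morrison--Woodbury identity to obtain $\Sigma^{-1} = Q^{-1} + H^\top R^{-1} H$, which is positive definite as a sum of a PD and a PSD matrix.

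The engine of part (b) is the algebraic identity
\[
\Qip = \Sigma + (K_t^{\mathsf p} - K)\,S\,(K_t^{\mathsf p} - K)^\top,
\]
obtained by expanding the Joseph form of $\Qip$ and using $KS = QH^\top$ to collapse the cross terms pairwise. This immediately yields $\Qip \succeq \Sigma$ and hence $M := \Sigma^{-1} - \Qip^{-1} \succeq 0$. Writing $\posti/\qip$ via (a) as $(\det R/\det S)^{1/2}\alpha_i\,\cN(x;\hat m_t^{(i)},\Sigma)/\cN(x;\mip,\Qip)$ with $\alpha_i := \exp(-|y_t - Hf(x_{t-1}^{(i)})|_S^2/2)$, the three determinant prefactors combine via the further Woodbury consequence $\det S\cdot\det\Sigma = \det R\cdot\det Q$ into exactly the target factor $(\det\Qip/\det Q)^{1/2}$. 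Completing the square in $x$ of the residual quadratic exponent $-|y_t - Hf(x_{t-1}^{(i)})|_S^2/2 - |x - \hat m_t^{(i)}|_\Sigma^2/2 + |x - \mip|_{\Qip}^2/2$ (whose leading form is $-x^\top M x/2$) then produces $-(x - u_t^{(i)})^\top M(x - u_t^{(i)})/2 + C_i/2$, with $u_t^{(i)}$ characterized by $M u_t^{(i)} = \Sigma^{-1}\hat m_t^{(i)} - \Qip^{-1}\mip$. The main obstacle is verifying that the residual constant satisfies $C_i \leq 0$, with equality under the stated rank conditions; I plan to do this via the Woodbury representation $M = \Sigma^{-1}\Delta K_t\bigl(S^{-1} + \Delta K_t^\top \Sigma^{-1}\Delta K_t\bigr)^{-1}\Delta K_t^\top\Sigma^{-1}$ with $\Delta K_t := K_t^{\mathsf p} - K$, which simultaneously shows that the defining equation of $u_t^{(i)}$ is always solvable in $\ran M$ and reduces the residual identity $C_i = 0$ to a scalar matrix relation that holds exactly when $H$ has full row rank and the empirical covariance is PD (and that collapses to a nonpositive slack otherwise, tight when $y_t = Hf(x_{t-1}^{(i)})$).

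Part (c) then follows in two steps. Bounding $\exp(\cdot) \leq 1$ in \eqref{equ:relation_single_posterior_proposal_approx} already gives the pointwise estimate $\posti/\qip(x) \leq (\det\Qip/\det Q)^{1/2}$. To replace $\det\Qip$ by the more transparent $\det\hat C_t^{x,\mathsf p}$, I invoke a second Joseph-type sandwich: setting $\hat\Sigma := (I - K_t^{\mathsf p}H)\hat C_t^{x,\mathsf p}$, one has
\[
\hat\Sigma - \Qip = (I - K_t^{\mathsf p}H)(\hat C_t^{x,\mathsf p} - Q)(I - K_t^{\mathsf p}H)^\top \succeq 0
\]
since $\hat C_t^{x,\mathsf p} - Q = \Cov^{\emp}[(f(x_{t-1}^{(i)}))_{i=1}^N] \succeq 0$. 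Combining $\Qip \preceq \hat\Sigma$ with the Woodbury identity $\det\hat\Sigma\cdot\det\hat S = \det R\cdot\det\hat C_t^{x,\mathsf p}$ (where $\hat S := H\hat C_t^{x,\mathsf p} H^\top + R$), the monotonicity $\det\hat S \geq \det S$ (from $\hat C_t^{x,\mathsf p} \succeq Q$), and $\det S\cdot\det\Sigma = \det R\cdot\det Q$ from (a) yields $\det\Qip/\det\Sigma \leq \det\hat C_t^{x,\mathsf p}/\det Q$. A direct computation via (a) also gives $\Zmix = (\det\Sigma/\det Q)^{1/2} N^{-1}\sum_j \alpha_j$; dividing the pointwise bound by $\Zmix$ produces \eqref{equ:weight_estimation}. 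For the finite-variance claim, bounded $f$ makes $|y_t - Hf(x_{t-1}^{(i)})|$ uniformly bounded, so each $\alpha_i$ is bounded below and $(N^{-1}\sum_j \alpha_j)^{-1}$ is bounded; and $\hat C_t^{x,\mathsf p}$ is uniformly bounded in Loewner order, so $\det\hat C_t^{x,\mathsf p}$ is bounded. Hence the right-hand side of \eqref{equ:weight_estimation} is deterministically bounded, giving $\sigma_t^2 < \infty$ for $\II_{\mathsf p}$ directly. For $\MM_{\mathsf p}^{\strat}$ I additionally invoke the elementary inequality $(\sum_j a_j)/(\sum_j b_j) \leq \max_j(a_j/b_j)$ to bound $\postm/\qmp \leq \max_j (p_t^{(j)}/q_{\mathsf p,t}^{(j)})$ (noting that $\Qip$ is common across $j$ because $K_t^{\mathsf p}$ depends only on the previous ensemble), then apply the part (b) pointwise bound to each such componentwise ratio.
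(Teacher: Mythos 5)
Your proposal is correct and follows essentially the same route as the paper's proof: part (a) via linear--Gaussian conjugacy, completion of squares and the Woodbury identity $\Sigma=(Q^{-1}+H^\top R^{-1}H)^{-1}$; part (b) via the Joseph-form identity $\Qip=\Sigma+(K_t^{\mathsf p}-K)S(K_t^{\mathsf p}-K)^\top$ followed by a completion of squares that must (and in your sketch does) respect the possible singularity of $\Sigma^{-1}-\Qip^{-1}$ and identify the sign of the residual constant; and part (c) via the determinant sandwich $\det\Qip\le\det\bigl((I-K_t^{\mathsf p}H)\hat{C}^{x,\mathsf{p}}_{t}\bigr)$ together with uniform bounds under bounded $f$. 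The only genuine deviation is the finite-variance claim for $\MM_{\mathsf p}^{\strat}$: the paper transfers finiteness from $\II_{\mathsf p}$ by invoking the weight-variance comparison $\sigma_t\ge\sigma_t^{\mix}$ of \Cref{thm:weight_variance_inequality}, whereas your mediant-inequality bound $\postm/\qmp\le\max_j\, p_t^{(j)}/q_{\mathsf p,t}^{(j)}$ is an equally valid and more self-contained alternative.
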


\begin{proof}
	The argument proceeds in several steps which are established in \Cref{section:auxiliary_results_thm_IS_weight_bound}.
	First, we collect several identities among the EnKF matrices in \Cref{lemma:EnKF_matrix_identities}.
	Next, using these identities together with a completion-of-squares argument, we prove \ref{item:thm_IS_weight_bound_posti_formula} in \Cref{lemma:thm_IS_weight_bound_posti_formula}.
	We then establish \ref{item:thm_IS_weight_bound_ratio_formula} in \Cref{lemma:thm_IS_weight_bound_ratio_formula}---again by completing the square---taking particular care with symmetric matrices that are only positive \emph{semi}definite.
	Finally, \Cref{lemma:thm_IS_weight_bound_IS_weight_bound} applies \ref{item:thm_IS_weight_bound_ratio_formula} to the targets and proposals in the schemes $\II_{\mathsf{p}}$ and $\MM_{\mathsf{p}}^{\strat}$, yielding \ref{item:thm_IS_weight_bound_IS_weight_bound}.
\end{proof}

\section{TQMC-Enhanced EnKF}
\label{sec:QMC_for_EnKF}

As established above, both the approximate prior and the proposal used by the EnKF can be written as Gaussian mixtures:
\[
\pim = \frac{1}{N}\sum_{i=1}^N \cN\bigl(f(x_{t-1}^{(i)}),Q\bigr),
\qquad
\qm = \frac{1}{N}\sum_{i=1}^N \cN\bigl(\mi,\Qi\bigr),
\]
where $(\mi,\Qi)$ are given by \eqref{equ:qi_current_particles} or \eqref{equ:qi_previous_particles}.
Instead of drawing i.i.d.\ (or stratified) samples from these mixtures, we propose to use \emph{transported quasi Monte–Carlo} ($\TQMC$) point sets \citep{klebanov2023transporting}:
low-discrepancy points in a reference space are pushed forward through an essentially closed-form transport to the mixture distribution.
Compared with i.i.d.\ or stratified sampling, quasi Monte--Carlo (QMC) and randomized quasi Monte--Carlo (RQMC) achieve lower discrepancy and---for sufficiently smooth integrands---provably better mean-squared error rates than $O(N^{-1})$, see, e.g., \citet{owen2013mc}.

\begin{notation}[RQMC transport to mixtures]
\label{notation:RQMC_transport}
    For any Gaussian mixture $p=\sum_{k=1}^K w_k\,\cN(m_k,C_k)$, we write
    \[
    \bigl(z^{(i)}\bigr)_{i=1}^N = \TQMC_{N}(p)
    \]
    to denote $N$ RQMC points pushed forward by a transport map $T_p$ from the uniform distribution on $(0,1)^{d}$ \citep{klebanov2023transporting} so that $(z^{(i)})_{i=1}^N = (T_p(u^{(i)}))_{i=1}^N$ has sampling law $p$.
    Throughout our experiments, we choose $(u^{(i)})_{i=1}^N$ to be a scrambled Sobol' sequence (specifically, we employ the Matou\v{s}ek–Affine–Owen scrambling \citep{owen1995randomly,owen1997scrambling,Matousek1998Scrambling}).
\end{notation}

\begin{algorithm}[t]
    \caption{TQMC-Enhanced BPF and (Weighted) EnKF}
    \label{algo:TQMC_EnKF}
    \begin{algorithmic}[1]
        \State \textbf{Inputs:} ensemble size $N$, horizon $T$, Gaussian or Gaussian-mixture prior $p_0$, dynamics $f$, observation function $h$, covariances $Q,R$, observations $(y_{t})_{t=1}^{T}$, choice of scheme $\in\{ \QMC\BPF, \QMC\EnKF_{\mathsf{c}}, \QMC\MM_{\mathsf{c}}, \QMC\EnKF_{\mathsf{p}}, \QMC\MM_{\mathsf{p}} \}$ ($\mathsf{p}$-schemes require linear $h$).
        \State \textbf{Initialize:} $(x_{0}^{(i)})_{i=1}^N = \TQMC_{N}(p_0)$ and set $w_{0}^{(i)} = \tfrac{1}{N}$ for all $i=1,\dots,N$ .
        \For{$t=1$ to $T$}
        \State \textbf{Forecast mixture:}
        $\rho_{t}^{\mix} = \sum_{i=1}^{N} w_{t-1}^{(i)}\, \cN \bigl(f(x_{t-1}^{(i)}),Q\bigr)$.
        \State \textbf{Forecast ensemble via TQMC:} $(\hat x_{t}^{(i)})_{i=1}^N = \TQMC_{N}(\rho_{t}^{\mix})$.
        \State \textbf{Kalman gain estimation (not required for $\QMC\BPF$):}
        \begin{itemize}
            \item 
            If $h$ is nonlinear, compute $K_{t} = \hat C_t^{xy}(\hat C_t^y)^{-1}$ via \eqref{equ:covariance_estimates_for_general_Kalman_gain}.
            \item 
            If $h$ is linear, compute $K_t = K_{\mathsf p,t}$ via \eqref{equ:previous_based_gain_linear}.
        \end{itemize}
        \State \textbf{Build proposals (not required for $\QMC\BPF$):}
        \begin{itemize}
            \item For $\QMC\MM_{\mathsf{c}}$ and $\QMC\EnKF_{\mathsf{c}}$, set $\qm = N^{-1} \sum_{i=1}^{N} \qic$ via \eqref{equ:qi_current_particles}.
            \item For $\QMC\MM_{\mathsf{p}}$ and $\QMC\EnKF_{\mathsf{p}}$, set $\qm = \sum_{i=1}^{N} w_{t-1}^{(i)}\, \qip$ via \eqref{equ:qi_previous_particles}, cf.\ \eqref{equ:modified_mixture_proposal_for_QMC}.
        \end{itemize}        
        \State \textbf{Analysis ensemble via TQMC:}
        \begin{itemize}
            \item For $\QMC\BPF$ set $\tilde{x}_{t}^{(i)} \coloneqq \hat{x}_{t}^{(i)}$. 
            \item Otherwise, set $(\tilde x_t^{(i)})_{i=1}^N = \TQMC_{N}(\qm)$.
        \end{itemize}
        \State \textbf{Weights (self-normalized IS):}
        \begin{itemize}
            \item For $\QMC\BPF$, set $w_t^{(i)} \propto p(y_t\mid \tilde{x}_t^{(i)})$ and normalize. 
            \item For $\QMC\EnKF_{\mathsf{c}}$ and $\QMC\EnKF_{\mathsf{p}}$, set $w_t^{(i)} = \frac{1}{N}$.     
            \item For $\QMC\MM_{\mathsf{c}}$ and $\QMC\MM_{\mathsf{p}}$ set $\displaystyle w_t^{(i)} \propto \frac{p(y_{t} | \tilde{x}_t^{(i)}) \, \rho_{t}^{\mix}(\tilde{x}_t^{(i)})}{\qm(\tilde{x}_t^{(i)})}$ and normalize.
        \end{itemize}
        \State \textbf{No resampling:} Set $(x_t^{(i)})_{i=1}^N = (\tilde x_t^{(i)})_{i=1}^N $.
        \EndFor
    \end{algorithmic}
\end{algorithm}

In the standard EnKF pipeline, the \emph{resampling step} \eqref{equ:systematic_resampling} followed by the \emph{prediction} step \eqref{equ:prediction_step} is equivalent to drawing i.i.d.\ samples from the weighted Gaussian mixture
\[
\rho_{t}^{\mix} \coloneqq \sum_{i=1}^{N} w_{t-1}^{(i)}\, \cN\bigl(f(\tilde{x}_{t-1}^{(i)}),Q\bigr).
\]
With TQMC we can skip explicit resampling and directly generate the forecast ensemble as well as the proposals from $\qm$ via
\[
\bigl(\hat x_t^{(i)}\bigr)_{i=1}^N \;=\; \TQMC_{N}(\rho_t^{\mix}),
\qquad
\bigl(\tilde x_t^{(i)}\bigr)_{i=1}^N \;=\; \TQMC_{N}(\qm).
\]
As before, both $\qmc$ and $\qmp$ can be used for $\qm$, with $\qmp$ limited to \emph{linear} observation functions $h$, cf.\ \Cref{sec:Kalman_gain_estimation}, the only difference being that, in the case $\qm = \qmp$, we are conditioning on a \emph{weighted} ensemble of particles $(x_{t-1}^{(i)},w_{t-1}^{(i)})_{i=1}^{N}$ (due to the missing resampling step), and therefore
\begin{equation}
\label{equ:modified_mixture_proposal_for_QMC}
\qmp
=
\sum_{i=1}^{N} w_{t-1}^{(i)}\, \qip.
\end{equation}
The reweighting can then be performed by $\MM$ (with i.i.d.\ samples replaced by these $\TQMC$ points).
Note that resampling and prediction in BPF is identical to the one of the weighted EnKF variants, so the same TQMC-enhancement works for the generation of the forecast ensemble in BPF.
All of these methods are summarized in \Cref{algo:TQMC_EnKF}.

\begin{remark}
    Note that the transport methodology of \citet{klebanov2023transporting} is not limited to QMC and RQMC, but can be employed to transport any cubature rule (e.g.\ sparse grids or digital nets) to a mixture distribution.
    Moreover, this approach extends to the generalized framework in \Cref{sec:generalization_localized_EnKF} provided that the proposals $\qi \in \cP_{+}(\bR^{d})$ generated by \eqref{equ:generic_proposal} admit an explicit transport map from the reference distribution underlying the cubature rule (typically the uniform distribution on the unit cube) to $\qi$.
    In particular, this includes covariance-localized and covariance-inflated EnKF variants, for which the proposals remain Gaussian.
    However, as in standard QMC methodology, the effectiveness of transported QMC may deteriorate in high dimensions unless additional structure (e.g.\ low effective dimension) is present.
    It would be an interesting direction for future work to investigate whether localization can in fact be leveraged to induce such structure, in particular by reducing the effective dimension.
    For a discussion of the computational overhead of $\TQMC$ see \citep{klebanov2023transporting}.
\end{remark}

\section{Numerical Experiments}
\label{sec:Numerics}

\begin{figure}[t]
    \centering         	
    \begin{subfigure}[b]{0.48\textwidth}
        \centering
        \includegraphics[width=\textwidth]{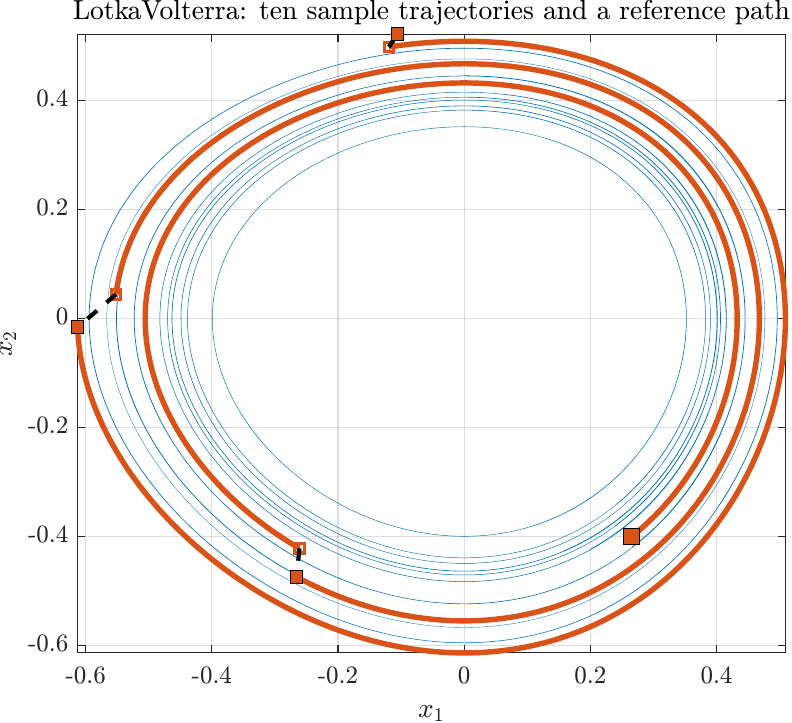}
    \end{subfigure}
    \hfill
    \begin{subfigure}[b]{0.48\textwidth}
        \centering
        \includegraphics[width=\textwidth]{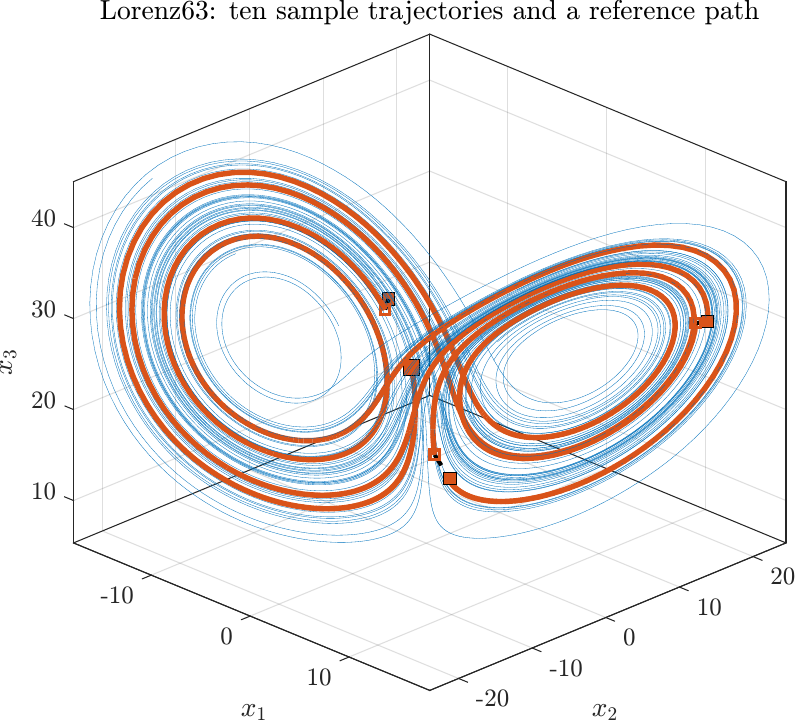}
    \end{subfigure}	
    \caption{Sample ODE trajectories (blue) and underlying (true) perturbed dynamics (red) for the Lotka--Volterra model (left) and the Lorenz--63 model (right).
    The solid red lines correspond to the ODE transport $f = \Phi^{\Delta\tau}$ in \eqref{equ:filtering_perturbed_dynamics} while the dotted black lines visualize the random perturbations $\eta_{t}$.
    }
    \label{fig:trajectories_both_systems}
\end{figure}

We study three standard filtering problems \citep{Evensen2022DA,Law2015DA} with dynamics and observation models of the form \eqref{equ:filtering_perturbed_dynamics}--\eqref{equ:filtering_perturbed_observation}:
the Lotka--Volterra model, the Lorenz--63 system and the 40-dimensional Lorenz--96 system, introduced below.
In all three models, the underlying dynamics is governed by an explicit ordinary differential equation (ODE) $z'(\tau) = \varphi(z(\tau))$ with closed-form right-hand side $\varphi \colon \bR^{d} \to \bR^{d}$, that is, denoting the flow operator of this ODE by $\Phi^{\tau}$, the dynamics model is given by $f = \Phi^{\Delta\tau}$ for a certain fixed time step $\Delta\tau > 0$.
The initial distribution $p_0 = \cN(m_{0},\Sigma_{0})$ of $x_{0}$ is chosen as a Gaussian with mean $m_{0}$ and covariance $\Sigma_{0}$.

For each system, we consider two observation models,
\begin{itemize}
    \item the (linear) identity observation function $h(x) = x$ with observation covariance $R = R_{\textup{lin}} = (10\, \gamma)^{-2}\, I_{d}$,
    \item the nonlinear observation function $h(x) = \arctan\big(\frac{\gamma x}{20}\big)$, applied componentwise, with observation covariance $R = R_{\textup{nonlin}} = 10^{-4}\, I_{d}$.
\end{itemize}
Here, $\gamma > 0$ calibrates the overall scaling across models: the state variables naturally evolve on different magnitudes (on the order of $1$, $20$, and $5$ per component for the Lotka--Volterra, Lorenz--63, and Lorenz--96 systems, respectively), and $\gamma$ is chosen accordingly so that both the observation models and the noise levels operate on comparable scales (see \Cref{table:system_parameters}).
Accordingly we set
$\Sigma_{0} = Q = \gamma^{-2}\, I_{d}$ and $g(x) \coloneqq \sin (4 \gamma  \sum_{j=1}^{d} x_j)$.
The specific values of the parameters $m_{0}$, $\Delta\tau$ and $\gamma$ are stated in the description of each system separately.

\begin{table}[!t]
    \caption{Parameters used in the numerical experiments for each dynamical system.}
    \label{table:system_parameters}
    \centering
    \renewcommand{\arraystretch}{1.2}
    \begin{tabular}{lccc}
        \toprule
        \textbf{System} & $\boldsymbol{m_0}$ & $\boldsymbol{\Delta \tau}$ & $\boldsymbol{\gamma}$ \\
        \midrule
        Lotka--Volterra
        & $\log\big(1.25,\ 0.66\bigr)^\top$        
        & $5$
        & $20$
        \\
        Lorenz--63
        & $(0,\,0,\,22)^\top$        
        & $2$
        & $1$
        \\
        Lorenz--96
        & $0$        
        & $0.5$
        & $4$
    \end{tabular}
\end{table}

For each system and observation function, each time $t=1,2,3$, each particle number\footnote{For the Lorenz--96 system with nonlinear $h$ we are forced to start with $N = 2^6$ to ensure $N > d$, cf.\ \Cref{remark:restrictions_weighted_EnKF_schemes}.} \(N \in \{2^2,\dots,2^{10}\}\) and each of $M = 10$ independent random runs, every filtering method produces an ensemble estimate $\hat{P}_{t}$ of the filtering distribution, which we compare to a reference approximation $\hat{P}_{\mathsf{ref},t}$ that is computed once per dataset using the $\QMC\MM_{\mathsf{c}}$ estimator\footnote{Strictly speaking, the $\QMC\MM_{\mathsf{c}}$ scheme, as one of the proposed methods, has not been independently validated a priori. It is used here as an empirical benchmark, supported by its stable performance and the observed convergence of other methods toward this reference solution as the ensemble size increases.} with \(N_{\mathsf{ref}}=2^{13}\) particles ($N_{\mathsf{ref}} = 2^{12}$ for the Lorenz--96 system):
\[
\hat{P}_{t}
=
\sum_{i=1}^{N} w_{t}^{(i)}\, \delta_{\tilde{x}_{t}^{(i)}},
\qquad
\hat{P}_{\mathsf{ref},t}
=
\sum_{i=1}^{N_{\mathsf{ref}}} w_{\mathsf{ref},t}^{(i)}\, \delta_{\tilde{x}_{\mathsf{ref},t}^{(i)}}.
\]
Note that the \emph{pre–resampling} analysis ensembles are used, and that the same artificially generated observation data \((y_{t})_{t=1}^{T}\) is used for all methods and each run.
The comparison will be performed using two diagnostics, which are plotted against the particle numbers \(N\) in log-log scale in \Crefrange{fig:linear_LotkaVolterra_run_1}{fig:nonlinear_Lorenz96_run_1}:
The mean absolute error (MAE)
\begin{align*}
\mathsf{MAE}_t(N)
&=
\abs{ \bE_{\hat{P}_{t}}[g] - \bE_{\hat{P}_{\mathsf{ref},t}}[g] }
=
\abs{ \sum_{i=1}^{N} w_{t}^{(i)}\, g(\tilde{x}_{t}^{(i)}) - \sum_{i=1}^{N_{\mathsf{ref}}} w_{\mathsf{ref},t}^{(i)}\, g(\tilde{x}_{\mathsf{ref},t}^{(i)}) }  \, ,
\intertext{and the squared maximum mean discrepancy (MMD) between $\hat{P}_{t}$ and $\hat{P}_{\mathsf{ref},t}$ \citep{Gretton2012MMD}}   
\mathsf{MMD}_{\kappa_{t}}^{2}\big( \hat{P}_{t} , \hat{P}_{\mathsf{ref},t} \big)
&=
\sup_{\psi \in \mathcal{H}_{\kappa_{t}}}  \abs{ \bE_{\hat{P}_{t}}[\psi] - \bE_{\hat{P}_{\mathsf{ref},t}}[\psi] }^{2}
=
\Norm{ \int \kappa_{t}( x , \quark)\, \hat{P}_{t}(\rd x) - \int \kappa_{t}( x , \quark)\, \hat{P}_{\mathsf{ref},t}(\rd x) }_{\mathcal{H}_{\kappa_{t}}}^{2}
\\
&=
w_{t}^\top K_{XX}\,w_{t} + w_{\mathsf{ref},t}^\top K_{RR}\,w_{\mathsf{ref},t} - 2\,w_{t}^\top K_{XR}\,w_{\mathsf{ref},t} \, ,
\end{align*}
where $\mathcal{H}_{\kappa_{t}}$ is a reproducing kernel Hilbert space associated with a Gaussian RBF kernel $\kappa_{t}$, the bandwidth of which is set from the reference only (to make the kernel independent of the tested method and of $N$) using a median heuristic on pairwise distances,
\[
\kappa_{t} (x,y) \coloneqq \exp \bigl(-\tfrac{\|x-y\|^2}{2\ell_{t}^2}\bigr),
\qquad
\ell_t^2 \coloneqq \frac{ \operatorname{median} \bigl\{ \norm{\tilde{x}_{\mathsf{ref},t}^{(i)} - \tilde{x}_{\mathsf{ref},t}^{(j)} }^2 \colon 1\le i<j\le N_{\mathsf{ref}}\bigr\} }{ \log N_{\mathsf{ref}} },
\]
and $[K_{XX}]_{ij} = \kappa_{t}(\tilde{x}_{t}^{(i)},\tilde{x}_{t}^{(j)})$, $[K_{RR}]_{ij}=\kappa_{t}( \tilde{x}_{\mathsf{ref},t}^{(i)} , \tilde{x}_{\mathsf{ref},t}^{(j)})$ and $[K_{XR}]_{ij}=\kappa_{t}(\tilde{x}_{t}^{(i)} , \tilde{x}_{\mathsf{ref},t}^{(j)})$.

\Cref{fig:linear_LotkaVolterra_run_1,fig:linear_Lorenz63_run_1,fig:linear_Lorenz96_run_1} present our results for the three models, respectively, under the linear observation function $h(x)=Hx=x$.
Similarly, \Cref{fig:nonlinear_LotkaVolterra_run_1,fig:nonlinear_Lorenz63_run_1,fig:nonlinear_Lorenz96_run_1} show the results under the nonlinear observation function $h(x) = \arctan\big(\frac{\gamma x}{20}\big)$.
Each figure has four rows, where the first two show the $\mathsf{MAE}$ and the last two the $\mathsf{MMD}^{2}$, with the upper row comparing BPF and EnKF against the six EnKF reweighting schemes from \Cref{algo:importance_sampling_strategies} across the three time steps $t=1,2,3$ and the lower one showing how BPF, EnKF, $\MM_{\mathsf{c}}^{\strat}$, and $\MM_{\mathsf{p}}^{\strat}$ improve when combined with TQMC points (indicated by markers), cf.\ \Cref{algo:TQMC_EnKF}.
Note that in the nonlinear case no schemes based on previous particles (those with the index $\mathsf{p}$) are applicable (cf.\ \Cref{remark:restrictions_weighted_EnKF_schemes}), which is why the corresponding lines are absent from the plots.

For ease of reference, the interpretation of the numerical results is given directly in the caption of each figure.
Overall, the mixture-weighted schemes improve accuracy and eliminate the EnKF error plateau caused by analysis--target mismatch, while the TQMC variants often yield additional gains by reducing sampling error.
Before presenting the results, we briefly introduce the three models of interest:

\paragraph{Lotka--Volterra (Predator--Prey).}
As a prototypical example of biological interaction, we study the Lotka--Volterra equations, which capture the cyclic dynamics of predator and prey through a simple nonlinear feedback mechanism. We follow the nondimensionalized formulation introduced by \cite[Section~3.1]{Murray2002}:
\[
\dot{N} = N(1 - P), \quad \dot{P} = \alpha P(N - 1),
\]
where $N$ denotes the prey population size and $P$ the predator population size and $\alpha > 0$ is a positive parameter.
To ensure the positivity of the state variable under the addition of noise, we reformulate the system in logarithmic coordinates via $z = (u, v)^{\top} = (\log N, \log P)^{\top}$:
\[
\dot{u} = 1 - e^v, \quad \dot{v} = \alpha (e^u - 1).
\]
The parameters are chosen as follows:
$\alpha = 1$ and $m_{0} = (\log(1.25), \log(0.66))^{\top}$ (in accordance with the example of \citealt[Section~3.1]{Murray2002}), $\Delta\tau = 5$ and $\gamma = 20$.

\paragraph{Lorenz--63.}
The classical Lorenz 1963 system \citep{lorenz63} is widely used as a low-dimensional benchmark for data assimilation methods due to its chaotic behavior. The system of three coupled nonlinear ordinary differential equations is given by
\[
\begin{aligned}
    \dot{u} &= \sigma (v - u), \\
    \dot{v} &= \rho u - v - u w, \\
    \dot{w} &= u v - \beta w,
\end{aligned}
\]
where $z = (u, v, w)$ represents the system state and we use the standard parameter values $\sigma = 10$, $\rho = 28$, and $\beta = 8/3$, which lead to chaotic dynamics.
The remaining parameters are chosen as follows: the initial mean is set to $m_{0} = (0,0,22)^{\top}$, the time step is $\Delta\tau = 2$, and the scaling parameter is $\gamma = 1$.
% The remaining parameters are chosen as follows:
% $\alpha = 1$ and $m_{0} = (0,0,22)^{\top}$, $\Delta\tau = 2$ and $\gamma = 1$.

\paragraph{Lorenz--96.}
The Lorenz model from 1996 \citep{lorenz96} serves as an example for a high-dimensional chaotic system. The model consists of $d$ coupled scalar equations:
\[
\dot z_j = (z_{j+1} - z_{j-2})z_{j-1} - z_j + F,
\]
for $j = 1, \ldots, d$.
The parameters are chosen as follows:
We set $d = 40$ and consider the forcing parameter $F = 8$.
Further, $m_{0} = 0$, $\Delta\tau = 0.5$ and $\gamma = 4$.

%%%%%%%%%%%%%%%%%%%%%%%%%%%%%%%%%%%%%%%%%%
% % % Linear observation Lotka-Volterra
\begin{figure}[t]
    \centering         	
    \begin{subfigure}[b]{0.99\textwidth}
        \centering
        \includegraphics[width=\textwidth]{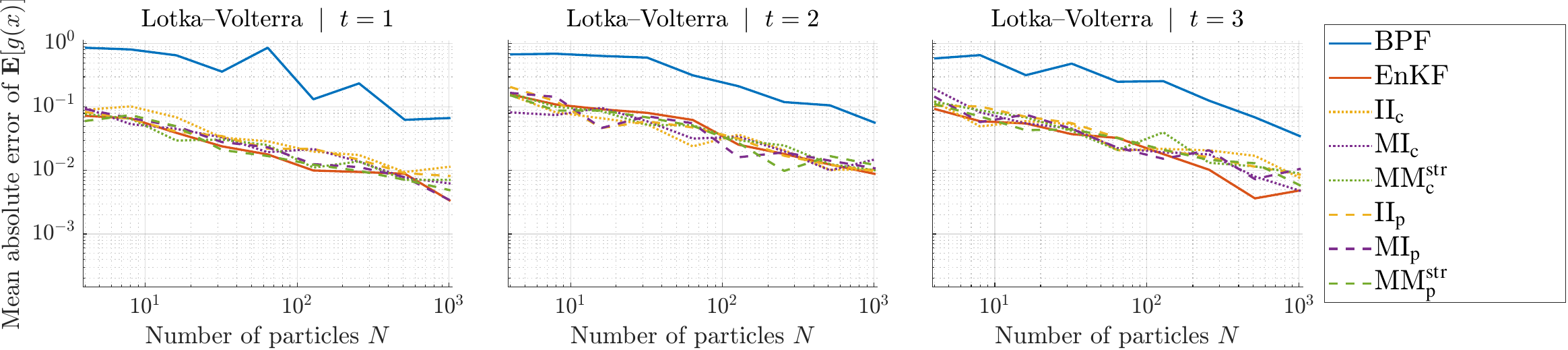}
    \end{subfigure}    
    %%%%%%%%%%%%%%%%
    \vfill
    \vspace{1ex}
    \vfill
    %%%%%%%%%%%%%%%%
    \begin{subfigure}[b]{0.99\textwidth}
        \centering
        \includegraphics[width=\textwidth]{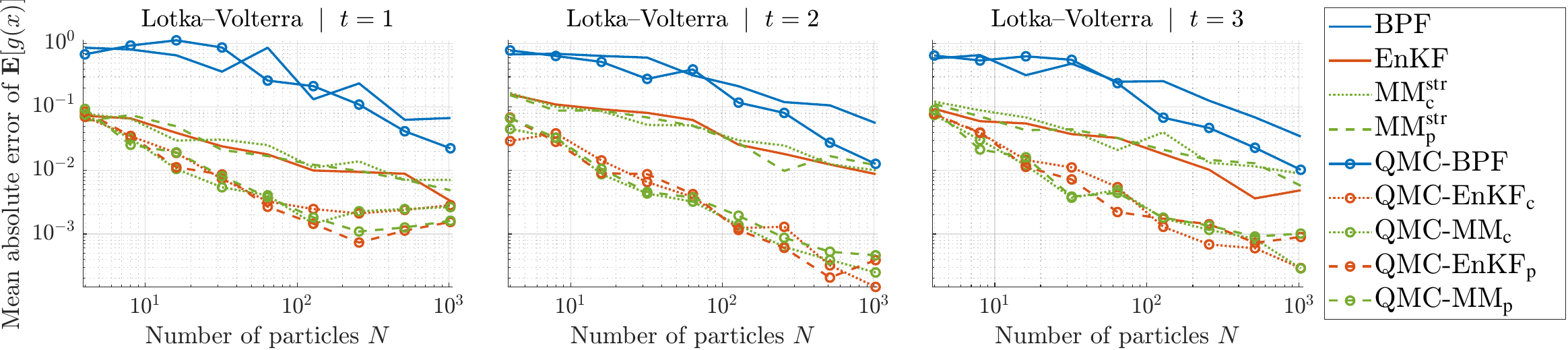}
    \end{subfigure}
    %%%%%%%%%%%%%%%%   
    \vfill    
    \vspace{1ex}               
    \hrule
    \vspace{0.2ex}               
    \hrule    
    \vspace{1ex}     
    %%%%%%%%%%%%%%%%
    \begin{subfigure}[b]{0.99\textwidth}
        \centering
        \includegraphics[width=\textwidth]{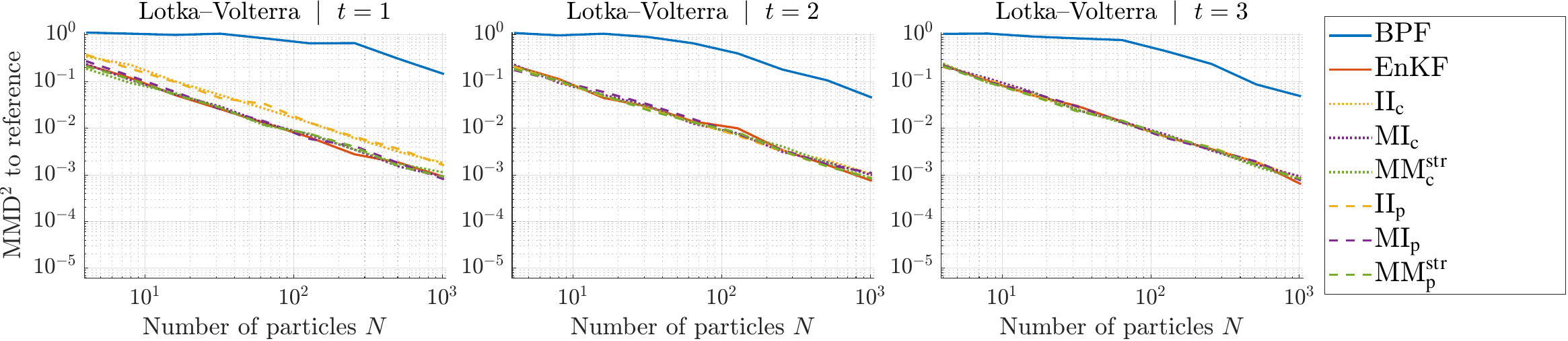}
    \end{subfigure}    
    %%%%%%%%%%%%%%%%
    \vfill
    \vspace{1ex}
    \vfill
    %%%%%%%%%%%%%%%%
    \begin{subfigure}[b]{0.99\textwidth}
        \centering
        \includegraphics[width=\textwidth]{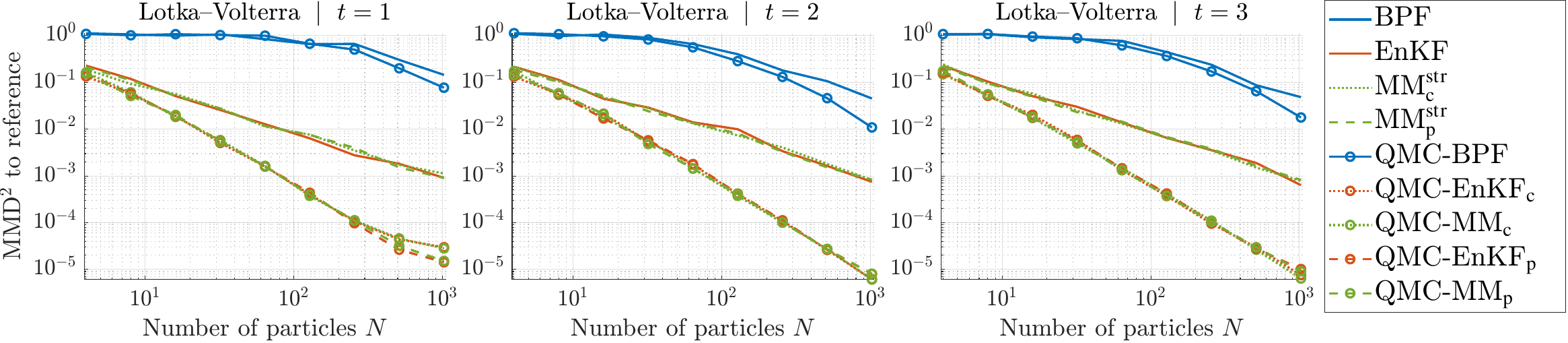}
    \end{subfigure}        
    \caption{Results for the \textbf{Lotka--Volterra} model with \textbf{linear observation} function.
    %%%
    The Lotka--Volterra flow is close to linear (nearly a rotation) over the ranges explored here.
    Therefore, under linear observations, the setting is almost linear--Gaussian and the EnKF almost consistent.
    Accordingly, all weighted EnKF variants exhibit very similar accuracy. By contrast, BPF displays substantially larger variance. Under \emph{low observation noise} (sharp likelihood), only few prior particles land in the high--posterior--density region, which exacerbates weight degeneracy for BPF. Using TQMC point sets (indicated by markers) improves performance across methods by covering state space more evenly and thus enhancing the convergence rate.
    }
    \label{fig:linear_LotkaVolterra_run_1}    
\end{figure}
%%%%%%%%%%%%%%%%%%%%%%%%%%%%%%%%%%%%%%%%%%

%%%%%%%%%%%%%%%%%%%%%%%%%%%%%%%%%%%%%%%%%%
% % % Linear observation Lorenz--63
\begin{figure}[t]
    \centering         	
    \begin{subfigure}[b]{0.99\textwidth}
        \centering
        \includegraphics[width=\textwidth]{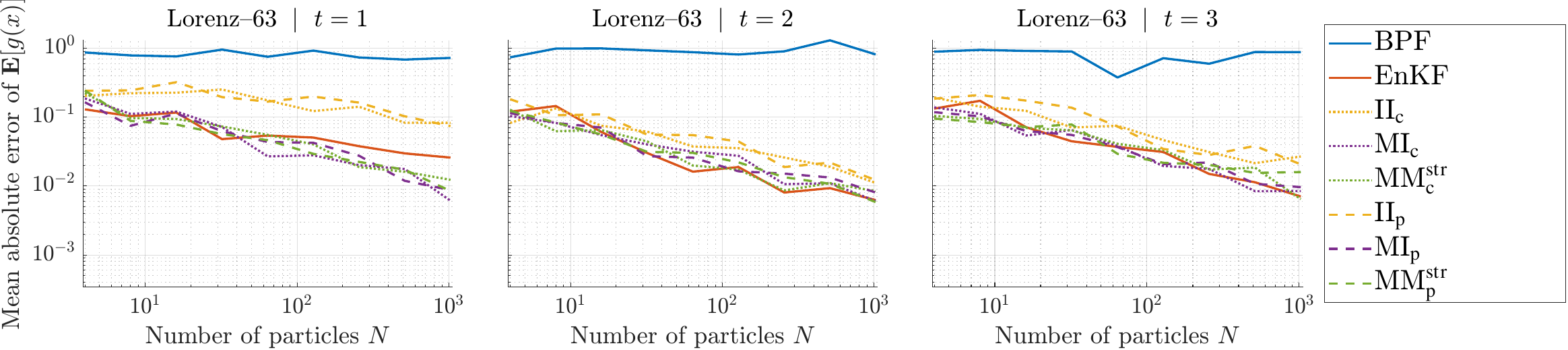}
    \end{subfigure}    
    %%%%%%%%%%%%%%%%
    \vfill
    \vspace{1ex}
    \vfill
    %%%%%%%%%%%%%%%%
    \begin{subfigure}[b]{0.99\textwidth}
        \centering
        \includegraphics[width=\textwidth]{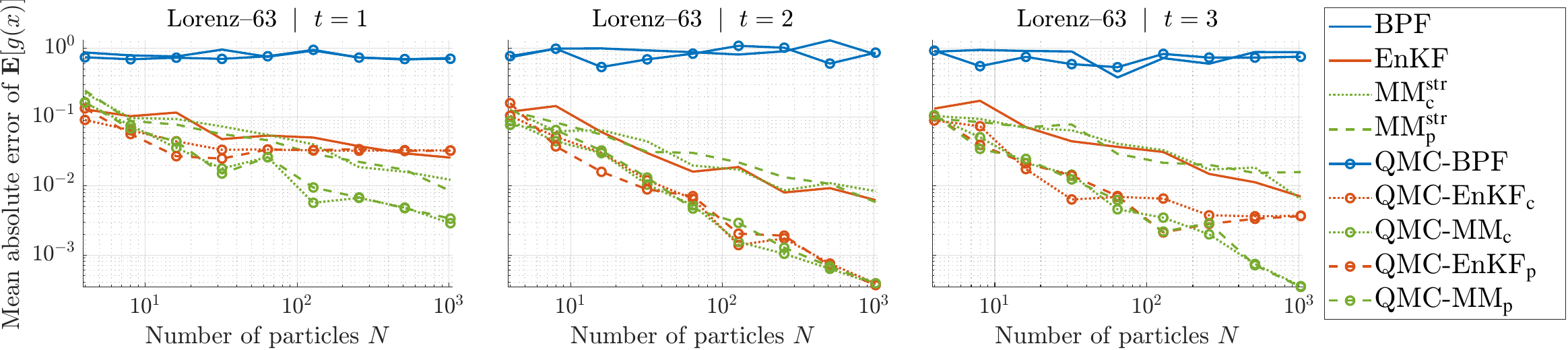}
    \end{subfigure}
    %%%%%%%%%%%%%%%%   
    \vfill    
    \vspace{1ex}               
    \hrule
    \vspace{0.2ex}               
    \hrule    
    \vspace{1ex}     
    %%%%%%%%%%%%%%%%
    \begin{subfigure}[b]{0.99\textwidth}
        \centering
        \includegraphics[width=\textwidth]{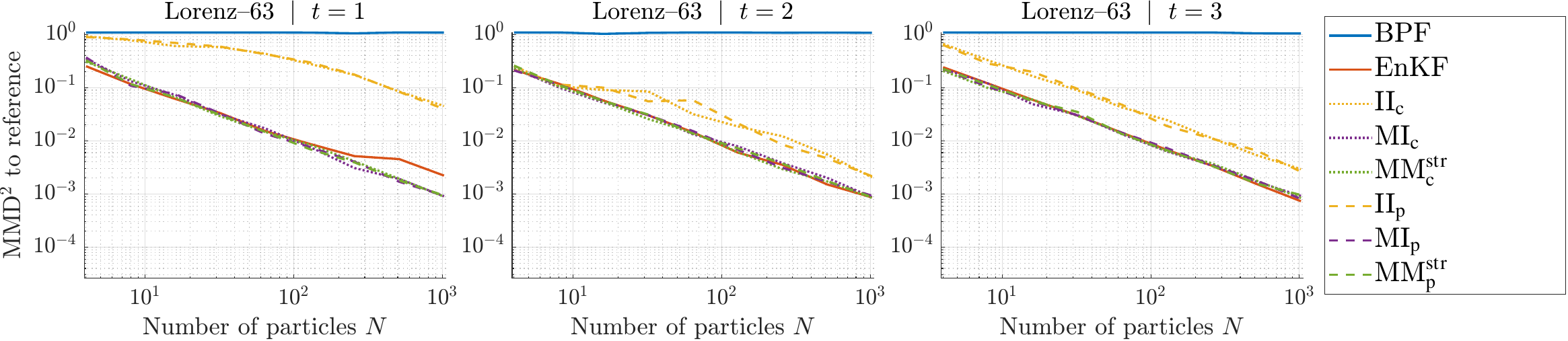}
    \end{subfigure}    
    %%%%%%%%%%%%%%%%
    \vfill
    \vspace{1ex}
    \vfill
    %%%%%%%%%%%%%%%%
    \begin{subfigure}[b]{0.99\textwidth}
        \centering
        \includegraphics[width=\textwidth]{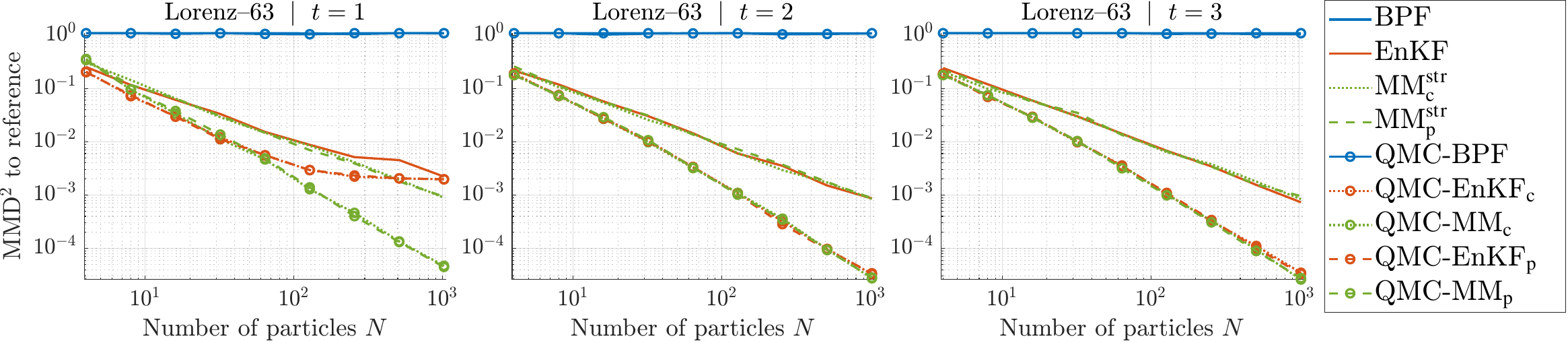}
    \end{subfigure}        
    \caption{
Results for the \textbf{Lorenz--63} model with \textbf{linear observation} function.
    At $t=1$ the prior and posterior are strongly non-Gaussian, and the EnKF analysis--target mismatch introduces a bias that persists even as $N\to\infty$, yielding an early \emph{error floor}.
Among the weighted EnKF schemes, $\II_{\smallgraybox}$ schemes perform worst throughout, consistent with the theoretically larger variance when individual proposals and individual targets place mass on different regions of state space.
Averaging over components ($\MI_{\smallgraybox}$, $\MM_{\smallgraybox}^{\strat}$) mitigates this effect and lowers variance.
As assimilation proceeds ($t=2,3$), repeated conditioning pulls the forecast closer to Gaussianity, and the gap between EnKF and its weighted variants narrows.
The penalty for using individual proposals against individual targets in $\II_{\smallgraybox}$ also diminishes:
this is consistent with the ensemble becoming more tightly clustered relative to the process--noise scale (set by $Q$), given that the observation noise is small ($R \ll Q$), which reduces the discrepancy between per-component targets and proposals.
% the ensemble becomes more tightly clustered relative to the process--noise scale (set by $Q$), reducing the discrepancy between per-component targets and proposals.
\newline
TQMC improves convergence for all six EnKF reweighting schemes but cannot remove the $t=1$ saturation of the unweighted EnKF, which is bias dominated.
For the BPF, severe weight degeneracy under low observation noise keeps the method far from the asymptotic regime over the $N$ considered, so the rate improvement from TQMC would only become visible for much larger $N$ (beyond the range shown).}
    \label{fig:linear_Lorenz63_run_1}    
\end{figure}
%%%%%%%%%%%%%%%%%%%%%%%%%%%%%%%%%%%%%%%%%%

%%%%%%%%%%%%%%%%%%%%%%%%%%%%%%%%%%%%%%%%%%
% % % Linear observation Lorenz--96
\begin{figure}[t]
    \centering         	
    \begin{subfigure}[b]{0.99\textwidth}
        \centering
        \includegraphics[width=\textwidth]{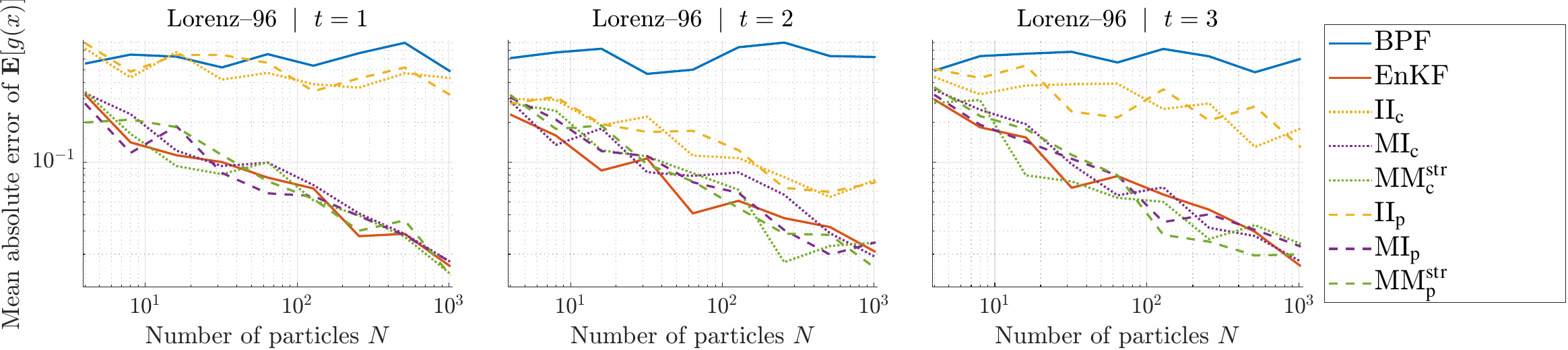}
    \end{subfigure}    
    %%%%%%%%%%%%%%%%
    \vfill
    \vspace{1ex}
    \vfill
    %%%%%%%%%%%%%%%%
    \begin{subfigure}[b]{0.99\textwidth}
        \centering
        \includegraphics[width=\textwidth]{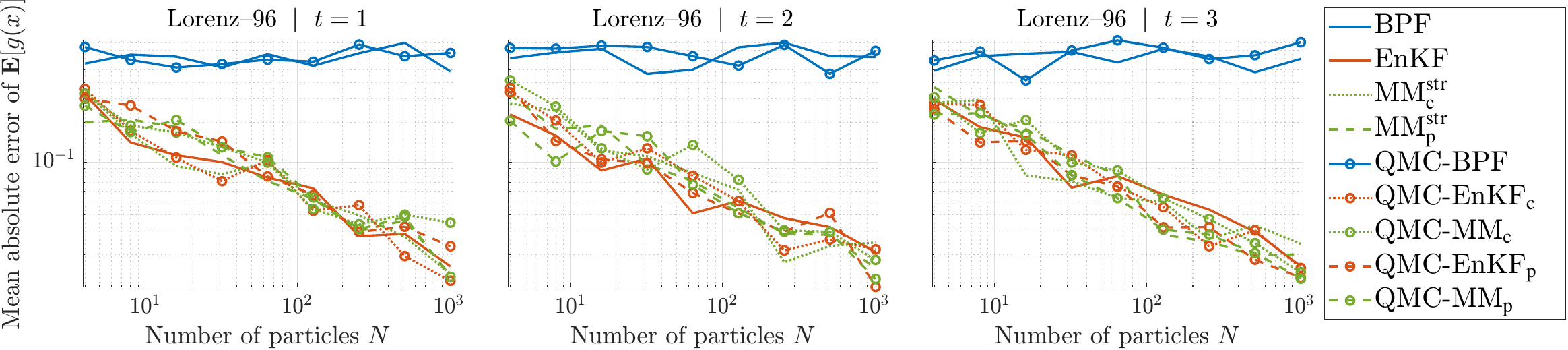}
    \end{subfigure}
    %%%%%%%%%%%%%%%%   
    \vfill    
    \vspace{1ex}               
    \hrule
    \vspace{0.2ex}               
    \hrule    
    \vspace{1ex}     
    %%%%%%%%%%%%%%%%
    \begin{subfigure}[b]{0.99\textwidth}
        \centering
        \includegraphics[width=\textwidth]{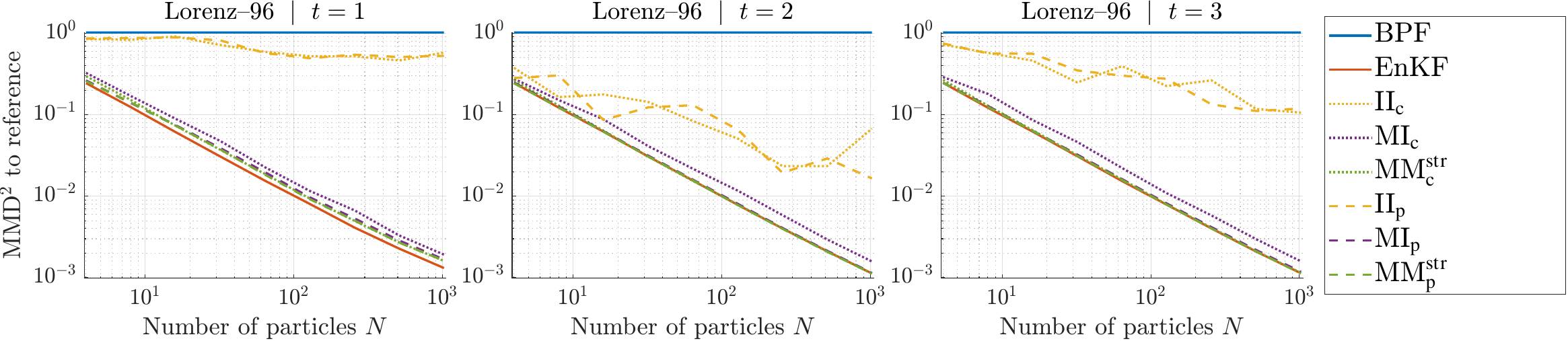}
    \end{subfigure}    
    %%%%%%%%%%%%%%%%
    \vfill
    \vspace{1ex}
    \vfill
    %%%%%%%%%%%%%%%%
    \begin{subfigure}[b]{0.99\textwidth}
        \centering
        \includegraphics[width=\textwidth]{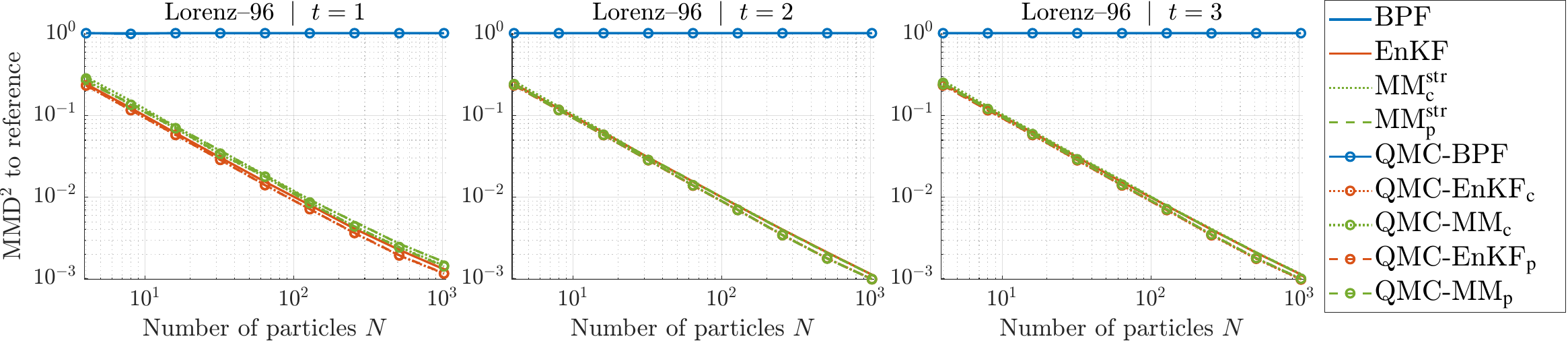}
    \end{subfigure}        
    \caption{Results for the \textbf{Lorenz–96} model with \textbf{linear observation} function.
As in the previous examples, the BPF fails to place a meaningful number of particles in the high-posterior-density region, leading to pronounced weight degeneracy.
Among the EnKF-based methods, the $\II_{\smallgraybox}$ schemes again perform worst.
For this experiment, the remaining weighted EnKF schemes exhibit very similar errors to the vanilla EnKF over the range of ensemble sizes considered, so reweighting neither clearly improves nor degrades performance relative to the standard EnKF.
However, in contrast to the vanilla EnKF, these weighted schemes come with consistency guarantees as $N \to \infty$ by \Cref{thm:convergence}, so the results indicate that such guarantees can be obtained without sacrificing performance at the moderate ensemble sizes tested here.
In this higher-dimensional setting ($d = 40$), using $\TQMC$ point sets does not lead to visible additional gains beyond the underlying schemes.}
    \label{fig:linear_Lorenz96_run_1}    
\end{figure}
%%%%%%%%%%%%%%%%%%%%%%%%%%%%%%%%%%%%%%%%%%

%%%%%%%%%%%%%%%%%%%%%%%%%%%%%%%%%%%%%%%%%%
% % % Nonlinear observation Lotka--Volterra
\begin{figure}[t]
    \centering         	
    \begin{subfigure}[b]{0.99\textwidth}
        \centering
        \includegraphics[width=\textwidth]{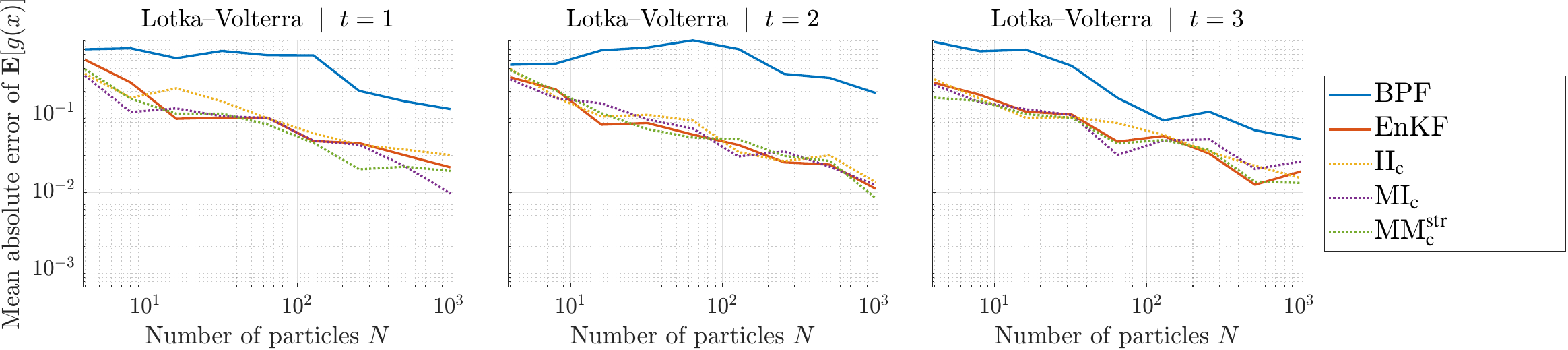}
    \end{subfigure}    
    %%%%%%%%%%%%%%%%
    \vfill
    \vspace{1ex}
    \vfill
    %%%%%%%%%%%%%%%%
    \begin{subfigure}[b]{0.99\textwidth}
        \centering
        \includegraphics[width=\textwidth]{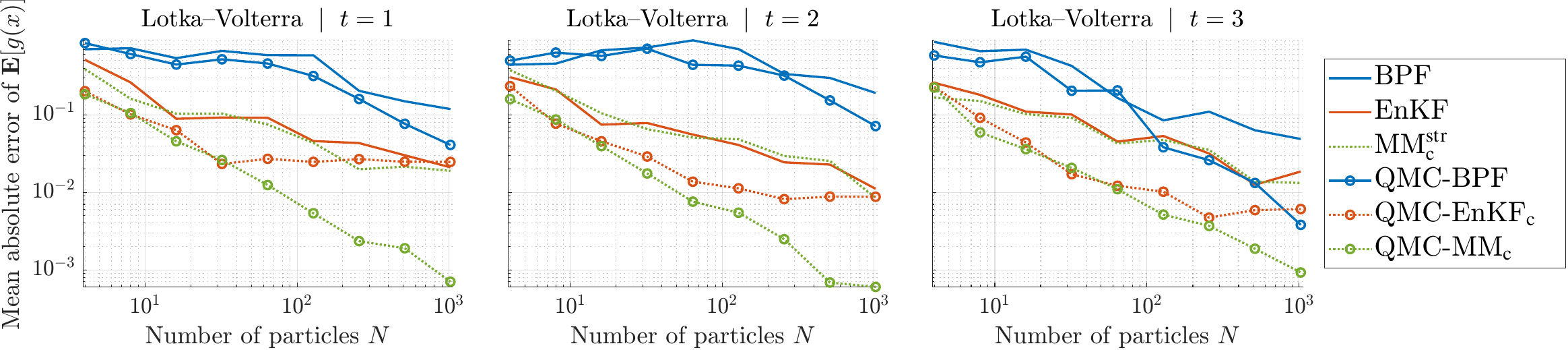}
    \end{subfigure}
    %%%%%%%%%%%%%%%%   
    \vfill    
    \vspace{1ex}               
    \hrule
    \vspace{0.2ex}               
    \hrule    
    \vspace{1ex}     
    %%%%%%%%%%%%%%%%
    \begin{subfigure}[b]{0.99\textwidth}
        \centering
        \includegraphics[width=\textwidth]{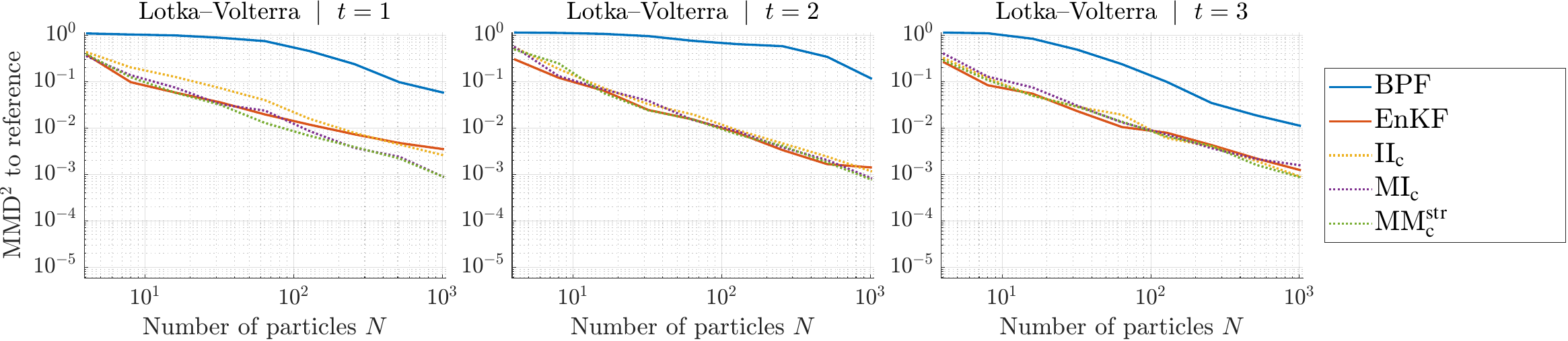}
    \end{subfigure}    
    %%%%%%%%%%%%%%%%
    \vfill
    \vspace{1ex}
    \vfill
    %%%%%%%%%%%%%%%%
    \begin{subfigure}[b]{0.99\textwidth}
        \centering
        \includegraphics[width=\textwidth]{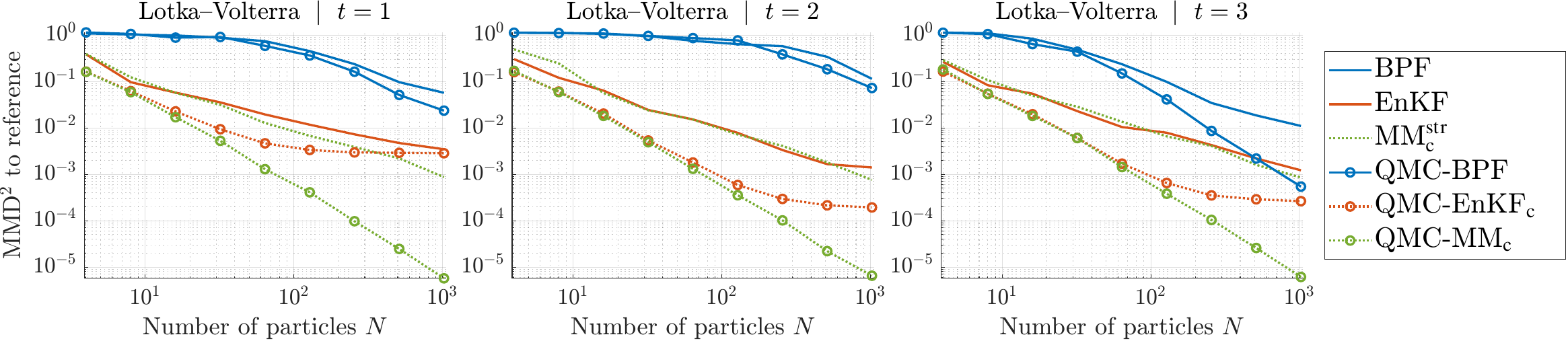}
    \end{subfigure}
    \caption{Results for the \textbf{Lotka--Volterra} model with \textbf{nonlinear observation} function.
        In this nonlinear and strongly non-Gaussian setting, the unweighted EnKF lacks consistency: its errors saturate for all times $t = 1,2,3$ and do not decrease beyond moderate ensemble sizes.
        For the weighted EnKF schemes, the improvement over the considered range of $N$ is harder to see in the comparatively noisy $\mathsf{MAE}$ curves, but becomes more apparent in the $\mathsf{MMD}^{2}$ plots, where the reweighted schemes continue to approach the reference distribution as $N$ increases, while the unweighted EnKF reaches a floor.
        This effect is particularly visible for the $\TQMC$-enhanced schemes, where the inconsistency of the vanilla EnKF prevents any analogous improvement in its $\TQMC$ variant.}
    \label{fig:nonlinear_LotkaVolterra_run_1}    
\end{figure}
\clearpage
%%%%%%%%%%%%%%%%%%%%%%%%%%%%%%%%%%%%%%%%%%

%%%%%%%%%%%%%%%%%%%%%%%%%%%%%%%%%%%%%%%%%%
% % % Nonlinear observation Lorenz--63
\begin{figure}[t]
    \centering         	
    \begin{subfigure}[b]{0.99\textwidth}
        \centering
        \includegraphics[width=\textwidth]{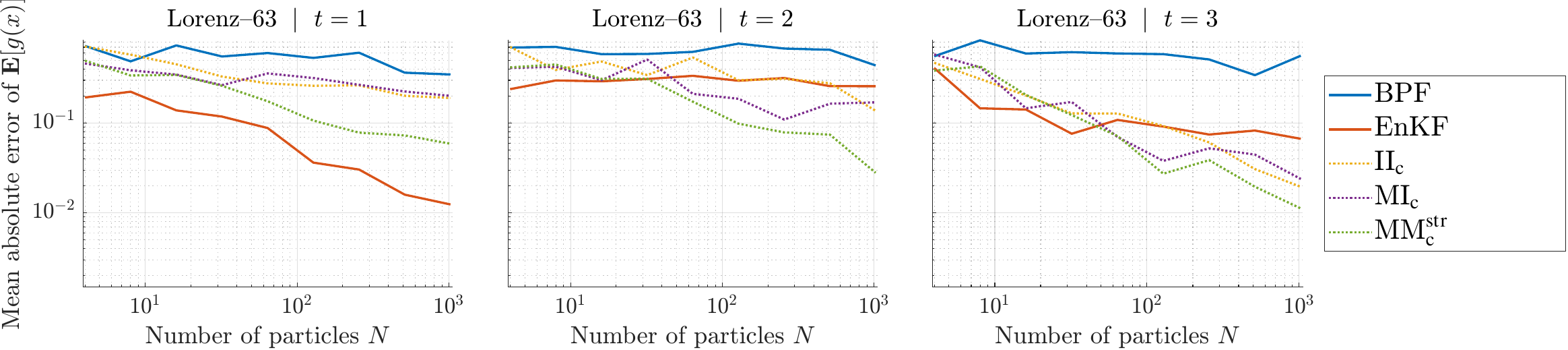}
    \end{subfigure}    
    %%%%%%%%%%%%%%%%
    \vfill
    \vspace{1ex}
    \vfill
    %%%%%%%%%%%%%%%%
    \begin{subfigure}[b]{0.99\textwidth}
        \centering
        \includegraphics[width=\textwidth]{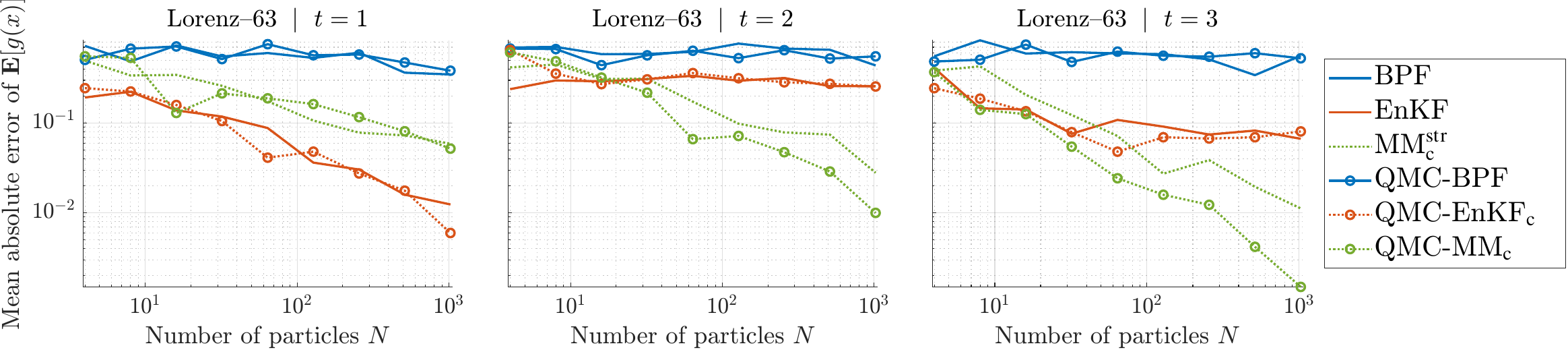}
    \end{subfigure}
    %%%%%%%%%%%%%%%%   
    \vfill    
    \vspace{1ex}               
    \hrule
    \vspace{0.2ex}               
    \hrule    
    \vspace{1ex}     
    %%%%%%%%%%%%%%%%
    \begin{subfigure}[b]{0.99\textwidth}
        \centering
        \includegraphics[width=\textwidth]{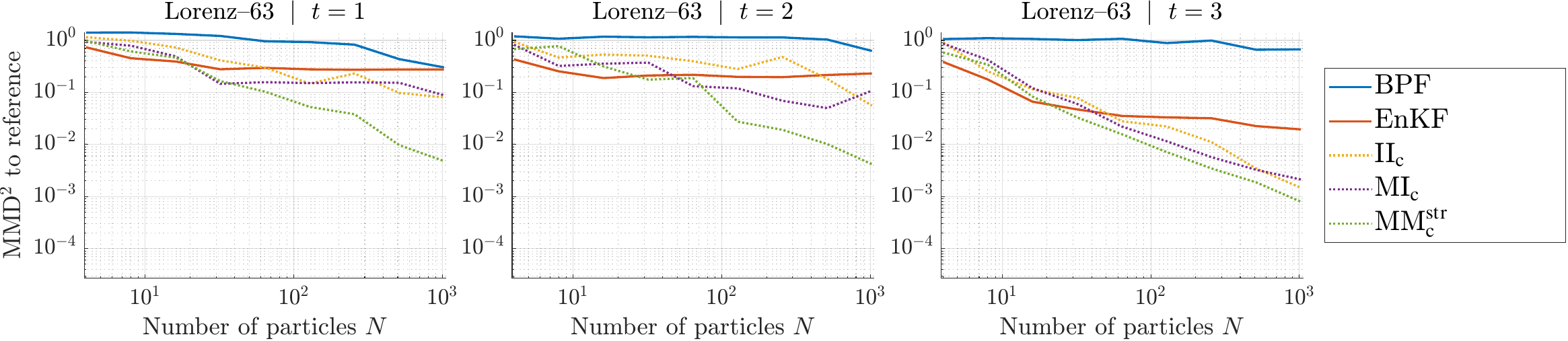}
    \end{subfigure}    
    %%%%%%%%%%%%%%%%
    \vfill
    \vspace{1ex}
    \vfill
    %%%%%%%%%%%%%%%%
    \begin{subfigure}[b]{0.99\textwidth}
        \centering
        \includegraphics[width=\textwidth]{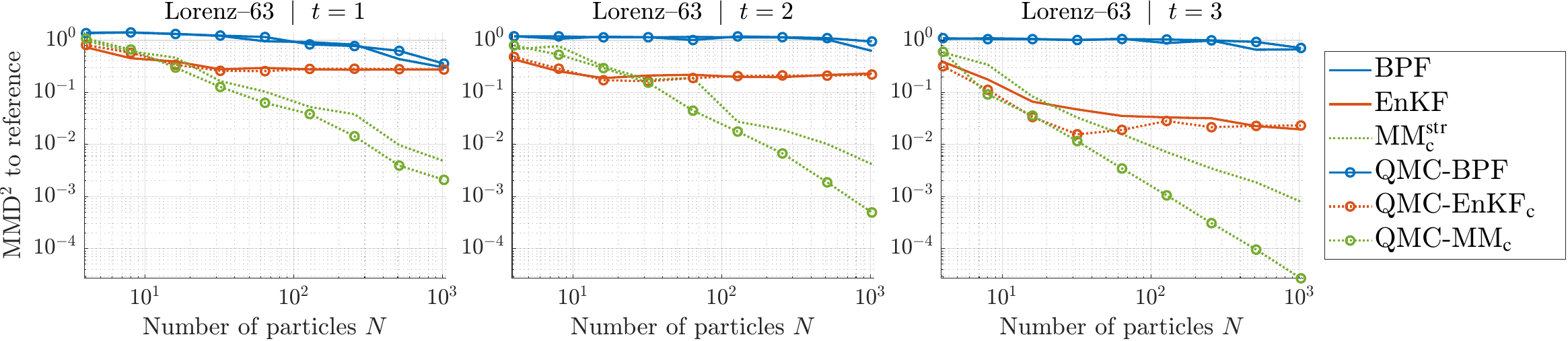}
    \end{subfigure}
    \caption{Results for the \textbf{Lorenz--63} model with \textbf{nonlinear observation} function.
Compared to the linear observation case, the lack of consistency of the unweighted EnKF is even more pronounced: for all times $t=1,2,3$ its error curves quickly hit a floor and show little improvement with increasing $N$, and the use of $\TQMC$ point sets does not alleviate this saturation.
In contrast, the weighted EnKF schemes continue to improve with $N$ and, for sufficiently large ensemble sizes, eventually outperform the vanilla EnKF, with the mixture--mixture scheme $\MM_{\mathrm{c}}^{\strat}$ giving the best overall performance.
The only exception is the $\mathsf{MAE}$ plot at $t=1$, where the vanilla EnKF attains the smallest error for the particular test integrand $g$ considered; however, it simultaneously performs very poorly in terms of $\mathsf{MMD}^{2}$, which measures the worst-case integration error over \emph{all} integrands from the corresponding RKHS.}
    \label{fig:nonlinear_Lorenz63_run_1}    
\end{figure}
\clearpage
%%%%%%%%%%%%%%%%%%%%%%%%%%%%%%%%%%%%%%%%%%

%%%%%%%%%%%%%%%%%%%%%%%%%%%%%%%%%%%%%%%%%%
% % % Nonlinear observation Lorenz--96
\begin{figure}[t]
    \centering         	
    \begin{subfigure}[b]{0.99\textwidth}
        \centering
        \includegraphics[width=\textwidth]{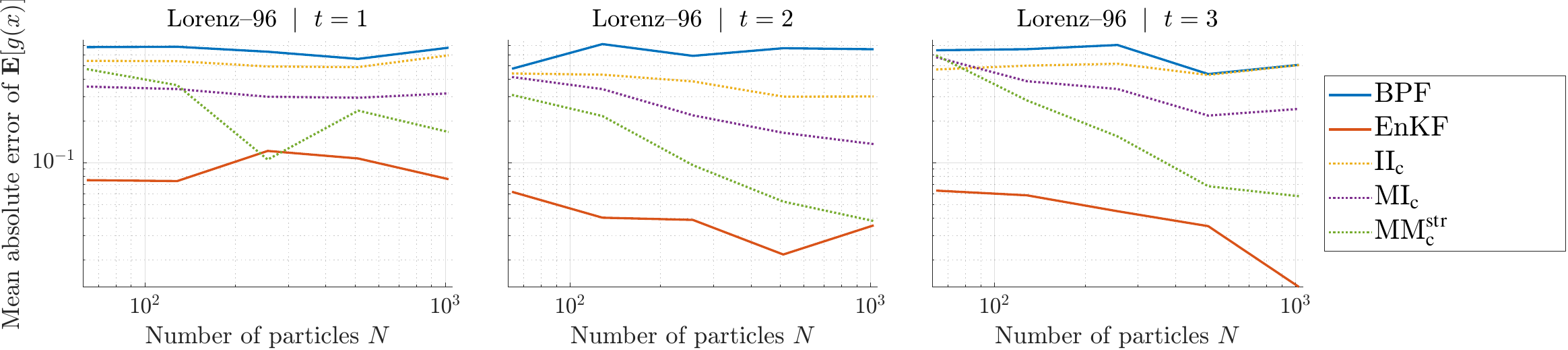}
    \end{subfigure}    
    %%%%%%%%%%%%%%%%
    \vfill
    \vspace{1ex}
    \vfill
    %%%%%%%%%%%%%%%%
    \begin{subfigure}[b]{0.99\textwidth}
        \centering
        \includegraphics[width=\textwidth]{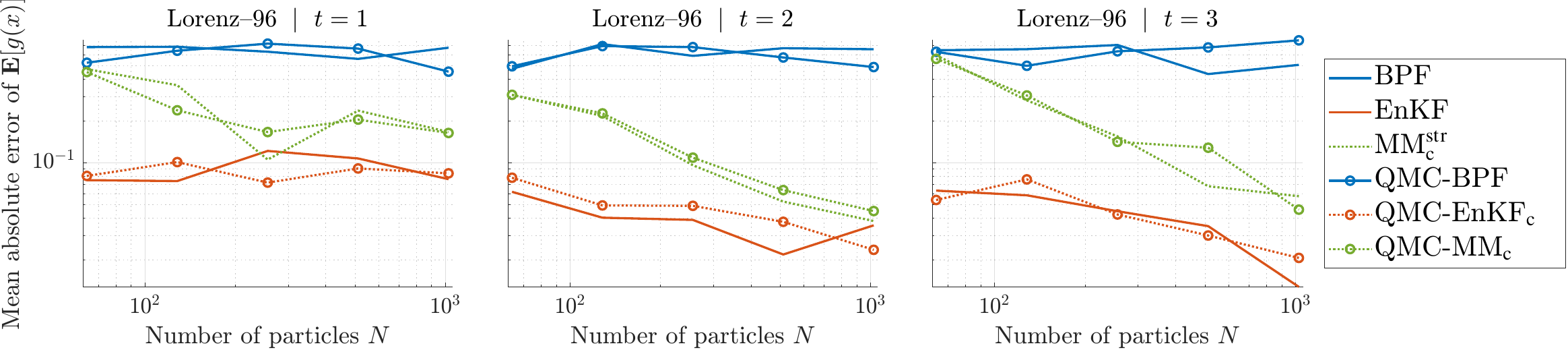}
    \end{subfigure}
    %%%%%%%%%%%%%%%%   
    \vfill    
    \vspace{1ex}               
    \hrule
    \vspace{0.2ex}               
    \hrule    
    \vspace{1ex}     
    %%%%%%%%%%%%%%%%
    \begin{subfigure}[b]{0.99\textwidth}
        \centering
        \includegraphics[width=\textwidth]{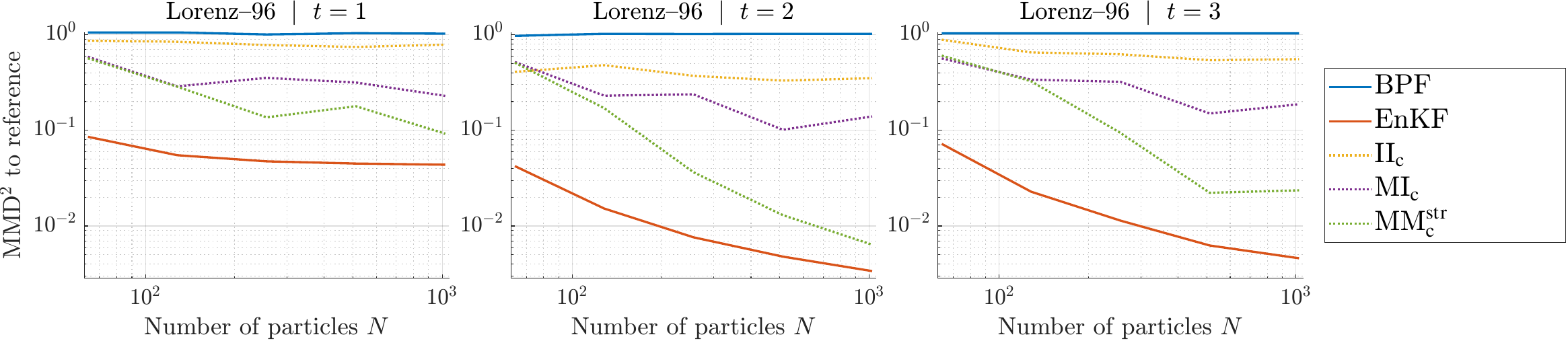}
    \end{subfigure}    
    %%%%%%%%%%%%%%%%
    \vfill
    \vspace{1ex}
    \vfill
    %%%%%%%%%%%%%%%%
    \begin{subfigure}[b]{0.99\textwidth}
        \centering
        \includegraphics[width=\textwidth]{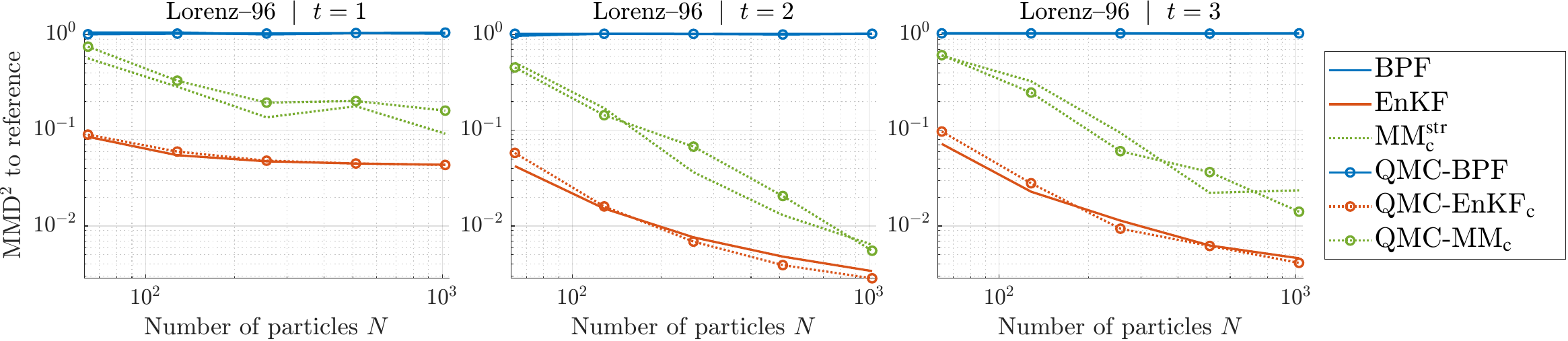}        
    \end{subfigure}
    \caption{Results for the \textbf{Lorenz--96} model with \textbf{nonlinear observation} function.
The mixture-based weighted EnKF schemes again outperform those with individual targets and proposals: across the range of ensemble sizes considered, $\MM_{\mathrm{c}}^{\strat}$ performs best among the weighted schemes, followed by $\MI_{\mathrm{c}}$ and then $\II_{\mathrm{c}}$.
Throughout this experiment, the vanilla EnKF attains the smallest errors for the tested values of $N$, but the $\mathsf{MMD}^{2}$ curves indicate a clear saturation of the EnKF as $N$ increases, whereas the error of $\MM_{\mathrm{c}}^{\strat}$ continues to decrease.
In contrast to the vanilla EnKF, the weighted schemes (and in particular $\MM_{\mathrm{c}}^{\strat}$) come with consistency guarantees as $N \to \infty$ by \Cref{thm:convergence}, so the observed behavior suggests that for sufficiently large ensemble sizes (beyond those shown here) $\MM_{\mathrm{c}}^{\strat}$ would eventually cross and outperform the EnKF in terms of distributional accuracy.
As in the linear observation case and consistent with the high dimension of the state space ($d = 40$), using $\TQMC$ point sets does not yield visible improvements over the corresponding Monte Carlo schemes.}
    \label{fig:nonlinear_Lorenz96_run_1}    
\end{figure}
\clearpage
%%%%%%%%%%%%%%%%%%%%%%%%%%%%%%%%%%%%%%%%%%

\section{Conclusion}
\label{sec:Conclusion}

We developed a unified importance-sampling framework for \emph{mixture} targets and \emph{mixture} proposals and used it to derive and analyze mixture-weighted variants of the EnKF.
On the algorithmic side, we interpreted the stochastic EnKF analysis step as sampling from explicit Gaussian proposals and formalized both previous-ensemble and current-ensemble conditionals, enabling principled reweighting beyond the linear observation setting of the classical WEnKF.
These constructions come with important applicability and stability caveats (in particular, linear-only availability of the previous-ensemble conditional, the WEnKF gain/proposal mismatch, and potential singularity issues), which we summarize in \Cref{remark:restrictions_weighted_EnKF_schemes}.
This led to six self-normalized IS-EnKF schemes, for which we proved consistency under a standard finite-variance condition on the importance weights (\Cref{thm:convergence}).

Our theoretical analysis identifies the (mean-one normalized) importance-weight variability, quantified by the squared coefficient of variation $\sigma_t^2$, as the key quantity governing SNIS error bounds.
Within our family of schemes we showed that using mixture targets together with mixture proposals minimizes $\sigma_t$ (\Cref{thm:weight_variance_inequality}), thereby providing a rigorous justification for mixture-based reweighting.
Moreover, for linear observations and bounded drift, we verified the key condition $\sigma_t<\infty$ required for consistency for two representative previous-ensemble schemes (\Cref{thm:IS_weight_bound}), ensuring that the consistency theory applies in this setting.
Our numerical experiments support these findings: mixture-based reweighting reduces weight degeneracy and improves estimation accuracy relative to per-component reweighting, and in regimes where the plain EnKF exhibits an $N$-independent error plateau due to analysis--target mismatch, the weighted schemes exhibit the expected decay with $N$.

To reduce sampling error in both prediction and analysis, we further proposed transported quasi-Monte Carlo (TQMC) for Gaussian mixtures (\Cref{sec:QMC_for_EnKF}), yielding TQMC-enhanced variants of BPF, EnKF, and their mixture-weighted counterparts with substantial gains in low-to-moderate dimensions for consistent methods and competitive performance in higher-dimensional tests.
Naturally, TQMC addresses sampling variance only and cannot resolve the EnKF analysis--target mismatch, so any associated error saturation persists --- providing an additional motivation to employ weighted EnKF variants.

Several directions remain for future work.
First, it would be valuable to extend the finite-variance theory beyond the presently covered cases, in particular to broader classes of current-ensemble schemes, nonlinear observation models $h$ and unbounded drifts $f$.
Second, combining mixture-weighted reweighting with standard EnKF stabilizations such as localization and inflation is an important step towards large-scale applications: since the reweighting step targets the correct filtering distribution asymptotically, it does not introduce the systematic bias that can accompany localization or inflation, and instead offers a principled route to mitigate analysis-target mismatch while retaining the practical benefits of these stabilizations.
Third, for nonlinear observation maps $h$, a key open problem is how to control the range-related singularities highlighted in \Cref{remark:restrictions_weighted_EnKF_schemes} and to design robust proposals and reweighting strategies that avoid weight blow-up.
Finally, it remains to better understand how (transported) QMC should be deployed in data assimilation, in particular in higher dimensions where low-discrepancy structure, transport quality, and the sequential nature of the problem interact in subtle ways.

\appendix

\section{Counterexamples to additional variance inequalities between the estimators $\II,\MI,\IM,\MM$ and $\MM^{\strat}$}
\label{sec:counterexamples_variance_inequalities}

Theorem~\ref{thm:mixture_IS_schemes_comparison} shows that, for any
$g\in L^2(p_{\mix})$,
\begin{equation*}
    \bV[\MM^{\strat}(g)]
    \leq
    \bV[\MM(g)]
    \leq \min\bigl( \bV[\IM(g)] , \bV[\MI(g)] \bigr).
\end{equation*}
In particular, the ``mixture-mixture'' estimators $\MM$ and $\MM^{\strat}$ are
never worse (in variance) than either $\IM$ or $\MI$.
The goal of this appendix is to show that \emph{no further} general variance
inequalities between the five estimators $\II, \MI, \IM, \MM$ and $\MM^{\strat}$ from \Cref{sec:IS_with_mix_targets_and_proposals} hold:
for every pair of estimators that is not ordered by the display above,
both directions of the inequality can occur for suitable choices of
$(p_i,q_i,g)$.

For simplicity, throughout this section, we work on a two--point state space $X=\{0,1\}$ with $N=2$ components, noting that these discrete examples can be embedded without difficulty into the setting of \Cref{thm:mixture_IS_schemes_comparison}.
We represent probability mass functions on $X$ by two--dimensional vectors and
write, for example, $p=(\alpha,\beta)$ to mean $p(0)=\alpha$ and $p(1)=\beta$.

\begin{example}[{$\bV[\II(1)]>\bV[\IM(1)]>\bV[\MI(1)]$}]
\label{ex:II_ge_IM_ge_MI}
We construct the example so that $p_{\mix}=q_{\mix}$ for $g\equiv 1$, which makes
the mixture-mixture estimators $\MM$ and $\MM^{\strat}$ \emph{deterministic},
while the ``individual'' estimators $\II$, $\IM$ and $\MI$ still have strictly positive (but different) variances.
Let $g\equiv 1$ and choose
\[
p_1=\Bigl(\frac{1}{10},\frac{9}{10}\Bigr),\quad
p_2=\Bigl(\frac{2}{5},\frac{3}{5}\Bigr),\quad
q_1=\Bigl(\frac{3}{10},\frac{7}{10}\Bigr),\quad
q_2=\Bigl(\frac{1}{5},\frac{4}{5}\Bigr).
\]
Then
\[
p_{\mix}
=
\frac12(p_1+p_2)
=
\Bigl(\frac{1}{4},\frac{3}{4}\Bigr),
\qquad
q_{\mix}
=
\frac12(q_1+q_2)
=
\Bigl(\frac{1}{4},\frac{3}{4}\Bigr),
\]
so $p_{\mix}=q_{\mix}$ and $p_{\mix}\ll q_i$ holds for $i=1,2$.
All five estimators target $I=\bE_{p_{\mix}}[g]=1$.
Since $p_{\mix}=q_{\mix}$ and $g\equiv 1$, each summand in $\MM(g)$ and
$\MM^{\strat}(g)$ is identically $g$ and therefore
$
\bV[\MM(g)]=\bV[\MM^{\strat}(g)]=0
$.
In contrast, the pairs $(p_i,q_i)$ are not aligned for $i=1,2$, so
the ratios $p_i/q_i$, $p_i/q_{\mix}$ and $p_{\mix}/q_i$ are all
non--constant on $X$, and the estimators $\II(g)$, $\IM(g)$ and
$\MI(g)$ remain random.
A direct calculation of the variances gives
\[
\bV[\II(g)] = \frac{37}{336}
\ \  > \ \ 
\bV[\IM(g)] = \frac{3}{50}
\ \  > \ \ 
\bV[\MI(g)] = \frac{37}{5376}
\ \  > \ \ 
\bV[\MM(g)]
=
\bV[\MM^{\strat}(g)]
=
0.
\]
\end{example}

\begin{example}[{$\bV[\MI(g)]>\bV[\IM(g)]>\bV[\MM(g)]>\bV[\MM^{\strat}(g)]>\bV[\II(g)]$}]
\label{ex:MI_ge_IM_ge_MMstr_ge_II}
We now construct an example in which $\II$ has \emph{zero} variance, while
the other four estimators have strictly positive variance, and at
the same time $\bV[\MI(g)]>\bV[\IM(g)]$.
Let
\[
q_1=\Bigl(\frac{2}{3},\frac{1}{3}\Bigr),
\qquad
q_2=\Bigl(\frac{1}{3},\frac{2}{3}\Bigr),
\qquad
g = (1,2).
\]
We now choose $p_1,p_2$ so that, for each $i = 1,2$, the quantity
$\bigl(p_i/q_i\bigr)\,g$ is constant on $X$, that is
\[
\frac{p_i(0)}{q_i(0)}g(0)
=
\frac{p_i(1)}{q_i(1)}g(1).
\]
These conditions yield
\[
\frac{p_1(0)}{p_1(1)}
=
\frac{q_1(0)\,g(1)}{q_1(1)\,g(0)}
=
\frac{\frac{2}{3}\cdot 2}{\frac{1}{3}\cdot 1}
=
4,
\qquad
\frac{p_2(0)}{p_2(1)}
=
\frac{q_2(0)\,g(1)}{q_2(1)\,g(0)}
=
\frac{\frac{1}{3}\cdot 2}{\frac{2}{3}\cdot 1}
=
1,
\]
and thereby $p_1=(\tfrac45,\tfrac15)$ and $p_2=(\tfrac12,\tfrac12)$.
Then
$q_{\mix}
=
\tfrac12(q_1+q_2)
=
(\frac12,\frac12)$
and
$p_{\mix}
=
\tfrac12(p_1+p_2)
=
(\frac{13}{20},\frac{7}{20})$,
so $p_{\mix}\ll q_i$ holds for $i=1,2$.
By construction, each summand in
\[
\II(g)
=
\frac12\Bigl(\frac{p_1(X_1)}{q_1(X_1)}g(X_1)
+
\frac{p_2(X_2)}{q_2(X_2)}g(X_2)\Bigr)
\]
is constant, implying $\bV[\II(g)]=0$.
Further, a direct computation gives
\[
\bV[\MI(g)] = \frac{369}{3200} % \approx 0.1153,
\ >\ 
\bV[\IM(g)] = \frac{41}{400} % \approx 0.1025,
\ >\ 
\bV[\MM(g)] = \frac{1}{800} % \approx 0.00125,
\ >\ 
\bV[\MM^{\strat}(g)] = \frac{1}{900} % \approx 0.00111,
\ >\ 
\bV[\II(g)]=0.
\]
\end{example}

\medskip

Hence, for every pair $(A,B) \in \{\II,\MI,\IM,\MM,\MM^{\strat}\}$ of distinct estimators that is not already ordered by
Theorem~\ref{thm:mixture_IS_schemes_comparison}, we see that both
inequalities $\bV[A(g)]<\bV[B(g)]$ and $\bV[A(g)]>\bV[B(g)]$ can arise for
suitable choices of $(p_i,q_i,g)$.
No additional general variance inequalities between
$\II,\MI,\IM,\MM$ and $\MM^{\strat}$ therefore hold beyond those
stated in Theorem~\ref{thm:mixture_IS_schemes_comparison}.

\section{Auxiliary Results for the Proof of \Cref{thm:convergence}}
\label{section:auxiliary_results_thm_convergence}

We now provide details for the proof of \Cref{thm:convergence}. In particular, we derive the bounds required to control the three contributing error terms: the approximation error for the filtering distribution, the self-normalized importance sampling (SNIS) error, and the resampling error.

\begin{lemma}[approximation error for the filtering distribution]
\label{lem:mixture_error}
Under the assumptions of \Cref{thm:convergence} and using the notation therein,
\begin{equation}
\label{equ:distance_posterior_posteriormixture}
d\bigg(p_t^{\mathsf{post}}, \frac{\postm}{\Zmix} \bigg)
\le \frac{2 d_{t-1}}{\Zpost},
\qquad
t = 1,\dots,T.
\end{equation}
\end{lemma}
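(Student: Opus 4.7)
The plan is to reduce the claim to an $L^2$ estimate on the one--step--ahead prior mixture approximation and then combine this via Bayes's rule. Writing $p_t^{\mathsf{post}} = \ell_t \pi_t^{\mathsf{prior}}/\Zpost$ and $\postm/\Zmix = \ell_t \pim/\Zmix$, I would apply the identity $\tfrac{A}{B} - \tfrac{C}{D} = \tfrac{A-C}{B} + \tfrac{C(D-B)}{BD}$ with
\[
A \coloneqq \int \ell_t\, g\,\pi_t^{\mathsf{prior}}, \quad B \coloneqq \Zpost, \quad C \coloneqq \int \ell_t\, g\,\pim, \quad D \coloneqq \Zmix .
\]
Since $\postm/\Zmix$ is a probability measure and $\|g\|_\infty \le 1$, one has $|C|/D \le 1$ pointwise, leading to the clean estimate $\bigl|\int p_t^{\mathsf{post}} g - \int (\postm/\Zmix)\, g\bigr| \le (|A-C| + |D-B|)/\Zpost$. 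Crucially, $\Zpost$ is deterministic (the observations are fixed in the theorem's setting), so the remaining task is to control the two numerator differences in $L^2(\bP)$, uniformly over $g$ with $\|g\|_\infty \le 1$.

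Both $A-C$ and $D-B$ are integrals of the form $\int \psi\,(\pi_t^{\mathsf{prior}} - \pim)$ with $\|\psi\|_\infty \le 1$ (taking $\psi = \ell_t g$ or $\psi = \ell_t$; recall $\ell_t \in (0,1]$ by \eqref{equ:likelihood}). The key step is to push this discrepancy back to time $t-1$ via the transition kernel
\[
(K\psi)(x) \coloneqq \int \psi(y)\,\cN\bigl(y;\,f(x),Q\bigr)\,\mathrm{d}y, \qquad \|K\psi\|_\infty \le \|\psi\|_\infty .
\]
Using $p(x_t \mid Y_{t-1}) = \int p(x_t \mid x_{t-1})\,p(x_{t-1} \mid Y_{t-1})\,\mathrm{d}x_{t-1}$ together with the defining mixture identity \eqref{equ:mixture_prior_approximation} yields $\int \psi\,\pi_t^{\mathsf{prior}} = \int K\psi \cdot p_{t-1}^{\mathsf{post}}$ and $\int \psi\,\pim = \tfrac{1}{N}\sum_i (K\psi)(x_{t-1}^{(i)})$. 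Their difference therefore equals $\int K\psi\,\bigl(p_{t-1}^{\mathsf{post}} - \tfrac{1}{N}\sum_i \delta_{x_{t-1}^{(i)}}\bigr)$, and since $K\psi \in C_b(\bR^d)$ with $\|K\psi\|_\infty \le 1$, the definition of $d$ gives $\sqrt{\bE[(\cdot)^2]} \le d_{t-1}$ for each such $\psi$.

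Combining the two pieces by the triangle inequality in $L^2(\bP)$ applied to the decomposition above then gives $\sqrt{\bE[(\int p_t^{\mathsf{post}} g - \int (\postm/\Zmix)\, g)^2]} \le 2d_{t-1}/\Zpost$ uniformly in $g$ with $\|g\|_\infty \le 1$, and taking the supremum over $g$ yields \eqref{equ:distance_posterior_posteriormixture}. The main subtlety I anticipate is the bookkeeping of randomness: the pointwise bound $|C|/D \le 1$ must be applied \emph{before} taking expectations so that the random normalizer $\Zmix$ is replaced by the deterministic $\Zpost$ in the denominator, which is what makes the final estimate linear (rather than quadratic) in the prior--mixture error and keeps the constant $2/\Zpost$ clean.
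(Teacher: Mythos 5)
Your proposal is correct and follows essentially the same route as the paper: the same decomposition $\tfrac{A}{B}-\tfrac{C}{D}=\tfrac{A-C}{B}+\tfrac{C}{D}\cdot\tfrac{D-B}{B}$, the same pointwise bound $|C|/D\le 1$ to replace the random normalizer $\Zmix$ by the deterministic $\Zpost$, and the same $L^2$ triangle inequality yielding the constant $2/\Zpost$. The only difference is that you prove the propagation step $d(\pi_t^{\mathsf{prior}},\pim)\le d_{t-1}$ explicitly via the transition kernel, where the paper simply cites \citet[Lemma~11.2]{SanzAlonso2023IPandDA}; your kernel argument is a correct, self-contained proof of that cited lemma.
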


\begin{proof}
Let $g \in \mathcal{C}_b(\mathbb{R}^d)$ with $\|g \|_\infty \le 1$.
Since $0 \leq \ell_{t} \leq 1$, we also have $\ell_t, g\ell_t \in \mathcal{C}_b(\mathbb{R}^d)$ with $\| g\ell_t \|_\infty \le \|\ell_t\|_\infty \le 1$, and thereby
\begin{equation}
\label{equ:lemma_mixture_error_bounding_E1_E2}
\begin{alignedat}{2}
E_1(g) 
&\coloneqq
\mathbb{E}\bigg[\bigg(\int g \, \ell_t \, \pi_t^{\mathsf{prior}} - \int g \, \ell_t \, \pim\bigg)^2\bigg]
&\quad&\leq d \big(\pi_t^{\mathsf{prior}}, \pim \big)^{2}
\\[2ex]
E_2(g) 
&\coloneqq
\mathbb{E}\bigg[\bigg(
\int g \, \frac{\postm}{\Zmix}
\bigg)^2 (\Zpost - \Zmix)^2\bigg]
&\quad&\leq \mathbb{E}[(\Zpost - \Zmix)^2]
\\
&=
\mathbb{E}\bigg[\bigg( \int \ell_t \pi_t^{\mathsf{prior}} - \int \ell_t\pim\bigg)^2\bigg]
&\quad&\leq d(\pi_t^{\mathsf{prior}}, \pim)^{2}.
\end{alignedat}
\end{equation}
% where we used that
% $|\int g p_{\mathsf{norm}, t}^{\mathsf{mix}}| \le \|g\|_\infty \le 1$.
By Bayes' theorem,
\begin{align*}
p_t^{\mathsf{post}} - \frac{\postm}{\Zmix}
&= \frac{\ell_t \pi_t^{\mathsf{prior}}}{\Zpost} - \frac{\ell_t \pim}{\Zmix}\\
&= \frac{\ell_t \pi_t^{\mathsf{prior}}}{\Zpost}- \frac{\ell_t \pim}{\Zpost}+ \frac{\ell_t \pim}{\Zpost} - \frac{\ell_t \pim}{\Zmix}\\
&= \frac{\ell_t\pi_t^{\mathsf{prior}} - \ell_t\pim}{\Zpost} + \frac{\ell_t \pim}{\Zmix} \frac{\Zmix - \Zpost}{\Zpost},
\end{align*}
which, together with \eqref{equ:lemma_mixture_error_bounding_E1_E2} and the triangle inequality, implies
\begin{align*}
d\bigg(p_t^{\mathsf{post}}, \frac{\postm}{\Zmix} \bigg)
&=
\sup_{\|g\|_\infty \le 1} \mathbb{E}\Bigg[\Bigg(\int g\bigg(p_t^{\mathsf{post}} - \frac{\postm}{\Zmix} \bigg)\Bigg)^{\!\!\! 2}\, \Bigg]^{1/2}
\\
&\leq
\sup_{\|g\|_\infty \le 1} \frac{ \sqrt{E_1(g)} + \sqrt{E_2(g)}}{\Zpost}
\\
&\leq
\frac{2}{\Zpost} \, d(\pi_{t}^{\mathsf{prior}}, \pim)
\\
&\leq
\frac{2}{\Zpost} \, d\bigg(p_{t-1}^{\mathsf{post}}, 
\frac{1}{N} \sum_{i=1}^N \delta_{x_{t-1}^{(i)}}\bigg),
\end{align*}
where the last inequality follows from \citep[Lemma~11.2]{SanzAlonso2023IPandDA}. 
\end{proof}

\begin{lemma}
    \label{lem:conditional_expectations}
    Let $g \in \mathcal{C}_b(\mathbb{R}^d)$, and let $\mathcal{F}_t$ be the $\sigma$-algebra defined in \eqref{eq:history_sigma_algebra}.
    Then, under the assumptions of \Cref{thm:convergence} and using the notation therein, for every $t =1,\dots, T$,
    \[
    \mathbb{E}\Biggl[\frac{1}{N} \sum_{i=1}^N g(\tilde{x}_t^{(i)}) \frac{v_{t}^{(i)}}{\Zmix}\, \Bigg| \, \mathcal{F}_t\Biggr]
    =
    \int \frac{g\,\postm}{\Zmix} \, ,
    \qquad
    \mathbb{E}\Biggl[\frac{v_{t}^{(1)}}{\Zmix} \Biggr]=1.
    \]
\end{lemma}

\begin{proof}
    Since $\frac{1}{N} \sum_{i=1}^N \frac{\chposti}{\chqi} \, \qi = \postm$ for each sampling scheme, for every $g \in \mathcal{C}_b(\mathbb{R}^d)$,
	{\small\begin{align*}
			&\mathbb{E}\Biggl[\frac{1}{N} \sum_{i=1}^N g(\tilde{x}_t^{(i)}) \frac{v_{t}^{(i)}}{\Zmix}\, \Bigg| \, \mathcal{F}_t\Biggr]
			=
			\mathbb{E}\Biggl[ \frac{1}{N} \sum_{i=1}^N \frac{g\,\chposti}{\Zmix\, \chqi} (\tilde{x}_t^{(i)}) \, \Bigg| \, \mathcal{F}_t\Biggr]
			=
			\frac{1}{N} \sum_{i=1}^N \int \frac{g\, \chposti}{\Zmix\, \chqi} \, \qi
			=
			\int \frac{g\,\postm}{\Zmix},
			\\[1ex]
			&\mathbb{E}\Biggl[\frac{v_{t}^{(1)}}{\Zmix} \Biggr]
			=
			\frac{1}{N} \sum_{i=1}^N \mathbb{E}\Biggl[\frac{v_{t}^{(i)}}{\Zmix} \Biggr]
			=
			\mathbb{E}\Biggl[\mathbb{E}\Biggl[\frac{1}{N} \sum_{i=1}^N \frac{v_{t}^{(i)}}{\Zmix} \Bigg| \mathcal{F}_t \Biggr]\Biggr]
			\stackrel{(\ast)}{=}
			\mathbb{E}\Biggl[\int \frac{\postm}{\Zmix} \Biggr]
			=
			1,
		\end{align*}}
	where $(\ast)$ is an application of the previous result for $g \equiv 1$.
	In the second line we used that, by the symmetry of
	\[
	\qi
	=
	\genq \bigg( x_{t-1}^{(i)} , \hat{x}_{t}^{(i)} , \frac{1}{N} \sum_{j=1}^{N} \delta_{x_{t-1}^{(j)}} , \frac{1}{N} \sum_{j=1}^{N} \delta_{\hat{x}_{t}^{(j)}} , y_{t}\bigg)
	\]
	with respect to the particles  $(x_{t-1}^{(j)})_{j=1}^{N}$ in the third and $(\hat{x}_{t}^{(j)})_{j=1}^{N}$ in fourth argument, cf.\ \eqref{equ:generic_proposal}, each family of corresponding random variables is exchangeable across the indices $i=1,\dots,N$, in particular, the weights $v_{t}^{(i)}$ have the same distribution.
\end{proof}

\begin{lemma}[SNIS error]
	\label{lem:IS_error}
	Under the assumptions of \Cref{thm:convergence} and using the notation therein,
	\[
	d\left( \frac{\postm}{\Zmix} \, ,\,  \sum_{i=1}^{N} w_{t}^{(i)}\delta_{\tilde{x}_{t}^{(i)}}\right) 
	\le \frac{2\sigma_t +1}{\sqrt{N}},
	\qquad
	t=1,\dots,T.
	\]
\end{lemma}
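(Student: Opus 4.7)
The plan is to isolate the effect of the random denominator in the self-normalized weights via a clean algebraic identity and then bound numerator and denominator errors separately in $L^2$. Introduce the $\sigma$-algebra $\mathcal F_t$ generated by everything produced before the analysis draws $\tilde x_t^{(i)} \sim q_t^{(i)}$, the normalized weights $\tilde v_i := v_t^{(i)}/\Zmix$, and the empirical averages $\tilde A_N := \frac{1}{N}\sum_{i=1}^N \tilde v_i\, g(\tilde x_t^{(i)})$ and $\tilde B_N := \frac{1}{N}\sum_{i=1}^N \tilde v_i$. Writing $W := \sum_{i=1}^N w_t^{(i)} g(\tilde x_t^{(i)}) = \tilde A_N/\tilde B_N$ and $\bar I := \int g\,\postm/\Zmix$, one checks the identity
\[
W - \bar I \;=\; (\tilde A_N - \bar I) \;-\; W\,(\tilde B_N - 1),
\]
and since $\|g\|_\infty \le 1$ forces $|W| \le 1$, this yields the pointwise bound $|W - \bar I| \le |\tilde A_N - \bar I| + |\tilde B_N - 1|$. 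Thus Minkowski reduces the problem to controlling $\|\tilde A_N - \bar I\|_{L^2}$ and $\|\tilde B_N - 1\|_{L^2}$ separately.

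I would next verify the conditional-mean identities $\bE[\tilde A_N \mid \mathcal F_t] = \bar I$ and $\bE[\tilde B_N \mid \mathcal F_t] = 1$ scheme by scheme: for $\II_{\genq}$ they follow from $\bE[v_t^{(i)} g(\tilde x_t^{(i)}) \mid \mathcal F_t] = \int p_t^{(i)} g$ and summation, while for $\MI_{\genq}$ and $\MM_{\genq}^{\strat}$ the analogous computations rely on the mixture relations $\sum_i q_t^{(i)} = N \qm$ and $\sum_i p_t^{(i)} = N \postm$. Since the $\tilde x_t^{(i)}$ are conditionally independent under $\mathcal F_t$ in all three schemes, the conditional-variance formula together with $|g| \le 1$ yields
\[
\bV[\tilde A_N \mid \mathcal F_t] \le \frac{1}{N^2}\sum_{i=1}^N \bE[\tilde v_i^2 \mid \mathcal F_t],
\qquad
\bV[\tilde B_N \mid \mathcal F_t] = \frac{1}{N^2}\sum_{i=1}^N \bV[\tilde v_i \mid \mathcal F_t].
\]

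Taking unconditional expectations and invoking exchangeability of the particle system gives $\bE[\tilde v_i^2] = \bE[\tilde v_1^2] = \sigma_t^2 + 1$ uniformly in $i$, so the first display yields $\bE[(\tilde A_N - \bar I)^2] \le (\sigma_t^2 + 1)/N$; for the second, the law of total variance gives $\bE[\bV[\tilde v_i \mid \mathcal F_t]] \le \bV[\tilde v_i] = \sigma_t^2$, whence $\bE[(\tilde B_N - 1)^2] \le \sigma_t^2/N$. Combining via Minkowski with the pointwise bound and using $\sqrt{\sigma_t^2 + 1} \le \sigma_t + 1$ produces $\|W - \bar I\|_{L^2} \le (\sqrt{\sigma_t^2 + 1} + \sigma_t)/\sqrt{N} \le (2\sigma_t + 1)/\sqrt{N}$. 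Taking the supremum over $\|g\|_\infty \le 1$ then yields the claim.

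The main conceptual obstacle is the random denominator in the SNIS ratio, which the algebraic identity resolves by trading it for the deterministically bounded factor $W$ without requiring any tail bound on $\tilde B_N$. The remaining delicate point is to keep the $\sigma_t$ exponent (rather than $\sigma_t + 1$) in the denominator term, which requires applying the law of total variance per index rather than bounding conditional variance by conditional second moment; the conditional-mean computations themselves differ slightly across the three schemes but are routine once the mixture decompositions $\sum_i p_t^{(i)} = N\postm$ and $\sum_i q_t^{(i)} = N\qm$ are used.
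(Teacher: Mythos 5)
Your proof is correct and follows essentially the same route as the paper: your identity $W-\bar I=(\tilde A_N-\bar I)-W(\tilde B_N-1)$ with $|W|\le 1$ is exactly the paper's split into the terms $E_1(g)$ (unnormalized estimator vs.\ $\int g\,\postm/\Zmix$) and $E_2(g)$ (normalization error), and the conditional-mean identities, conditional independence, exchangeability, and the bounds $(\sigma_t^2+1)/N$ and $\sigma_t^2/N$ all match the paper's argument. The only differences are presentational (pointwise triangle inequality plus Minkowski instead of the triangle inequality for $d$ directly, and a slightly more explicit per-index use of the law of total variance for the denominator term).
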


\begin{proof}
	Let $g \in \mathcal{C}_b(\mathbb{R}^d)$ with $\|g \|_\infty \le 1$ and let the $\sigma$-algebra $\mathcal{F}_t$ be defined by \eqref{eq:history_sigma_algebra}.
	By \Cref{lem:conditional_expectations}, using the laws of total expectation and total variance as well as $\norm{g}_{\infty} \leq 1$,
	\begin{align*}
		E_1(g) 
		&\coloneqq
		\mathbb{E}\bigg[\bigg(\int g\, \frac{\postm}{\Zmix} - \frac1N \sum_{i=1}^{N} g(\tilde{x}_t^{(i)}) \frac{v_{t}^{(i)}}{\Zmix} \bigg)^{\! \! 2} \, \bigg]
		\\
		&=
		\mathbb{E}\bigg[\mathbb{V}\bigg[\frac1N \sum_{i=1}^{N} g(\tilde{x}_t^{(i)}) \frac{v_{t}^{(i)}}{\Zmix} \, \bigg| \, \mathcal{F}_t\bigg]\bigg]\\
		&=
		\frac{1}{N} \, \mathbb{E}\bigg[\mathbb{V}\bigg[g(\tilde{x}_t^{(1)}) \frac{v_{t}^{(1)}}{\Zmix} \, \bigg| \,  \mathcal{F}_t\bigg]\bigg]\\
		&\leq
		\frac{1}{N} \, \mathbb{E}\bigg[\bigg(g(\tilde{x}_t^{(1)}) \frac{v_{t}^{(1)}}{\Zmix} \bigg)^{\! \! 2} \, \bigg]\\
		&\leq
		\frac{1}{N} \, \mathbb{E}\bigg[\bigg(\frac{v_{t}^{(1)}}{\Zmix} \bigg)^{\! \! 2} \, \bigg]
		\\
		&=
		\frac{1}{N}\biggl( \mathbb{V}\bigg[\frac{v_{t}^{(1)}}{\Zmix} \bigg]
		+ \mathbb{E}\bigg[\frac{v_{t}^{(1)}}{\Zmix} \bigg]^{\! 2}\biggr)
		\\
		&=
		\frac{\sigma_t^2 + 1}{N}.
	\end{align*}
	Since $w_{t}^{(i)} = v_{t}^{(i)} \big/ \sum_{j} v_{t}^{(j)}$ and $\norm{g}_{\infty} \leq 1$, \Cref{lem:conditional_expectations} implies
	\begin{align*}    
		E_2(g) 
		&\coloneqq  \mathbb{E}\bigg[\bigg(\frac1N \sum_{i=1}^{N} g(\tilde{x}_t^{(i)}) \frac{v_{t}^{(i)}}{\Zmix} - \sum_{i=1}^{N} g(\tilde{x}_t^{(i)})w_{t}^{(i)} \bigg)^{\! \! 2}\, \bigg]
		\\
		&=
		\mathbb{E}\bigg[\bigg(\sum_{i=1}^{N} g(\tilde{x}_t^{(i)})w_{t}^{(i)} \bigg)^{\! \! 2}
		\bigg(\frac1N\sum_{j=1}^{N} \frac{v_{t}^{(j)}}{\Zmix} - 1\bigg)^{\! \! 2}\, \bigg]\\
		&\leq
		\mathbb{E}\bigg[\bigg(\frac1N\sum_{j=1}^{N} \frac{v_{t}^{(j)}}{\Zmix} - 1\bigg)^{\! \! 2}\, \bigg]\\
		&=
		\mathbb{V}\bigg[\frac1N\sum_{j=1}^{N} \frac{v_{t}^{(j)}}{\Zmix}\bigg]\\
		&=
		\frac{\sigma_t^2}{N}.
	\end{align*} 
	Using the triangle inequality, it follows that
	\begin{align*}
		d\left( \frac{\postm}{\Zmix} \, , \, \sum_{i=1}^{N} w_{t}^{(i)}\delta_{\tilde{x}_{t}^{(i)}}\right) 
		&=
		\sup_{\|g\|_\infty \le 1} \mathbb{E}\bigg[\bigg(\int g \, \frac{\postm}{\Zmix} - \sum_{i=1}^{N} g(\tilde{x}_t^{(i)})w_{t}^{(i)} \bigg)^{\! \! 2}\, \bigg]^{1/2}\\
		&\leq
		\sup_{\|g\|_\infty \le 1} \big( \sqrt{E_1(g)} + \sqrt{E_2(g)} \big)
		\\
		&\leq
		\frac{\sigma_{t} + 1}{\sqrt{N}} + \frac{\sigma_{t}}{\sqrt{N}}.
	\end{align*}
\end{proof}

\begin{lemma}[{resampling error; adaptation of \citealt[Lemma~11.1]{chopin2020SMC}}]
    \label{lem:resampling_error}
    Under the assumptions of \Cref{thm:convergence} and using the notation therein,
    \[
    d\left(\sum_{i=1}^{N} w_{t}^{(i)} \delta_{\tilde{x}_{t}^{(i)}},\frac{1}{N} \sum_{i=1}^{N} \delta_{x_{t}^{(i)}} \right)
    \le \frac{1}{\sqrt{N}},
    \qquad
    t=1,\dots,T.
    \]
\end{lemma}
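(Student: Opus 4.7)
The plan is to condition on the pre-resampling ensemble and combine two properties of systematic resampling---conditional unbiasedness and a conditional-variance bound---mirroring the standard argument for the multinomial case in \citet[Lemma~11.1]{chopin2020SMC}. Let $\mathcal{G}_{t} \coloneqq \sigma\bigl((\tilde{x}_{t}^{(j)}, w_{t}^{(j)})_{j=1}^{N}\bigr)$, fix $g \in \mathcal{C}_{b}(\bR^{d})$ with $\|g\|_{\infty} \le 1$, and set
\[
I_{t} \coloneqq \sum_{j=1}^{N} w_{t}^{(j)}\, g(\tilde{x}_{t}^{(j)}),
\qquad
\hat{I}_{t} \coloneqq \frac{1}{N} \sum_{i=1}^{N} g(x_{t}^{(i)}).
\]
Since the auxiliary variable $u_{1} \sim \mathsf{Unif}(0, N^{-1})$ appearing in \eqref{equ:systematic_resampling} is independent of $\mathcal{G}_{t}$, the tower property together with the $\mathcal{G}_{t}$-measurability of $I_{t}$ yields
\[
\bE\bigl[(I_{t} - \hat{I}_{t})^{2}\bigr]
=
\bE\bigl[\bV[\hat{I}_{t} \mid \mathcal{G}_{t}]\bigr] + \bE\bigl[(I_{t} - \bE[\hat{I}_{t} \mid \mathcal{G}_{t}])^{2}\bigr],
\]
so it suffices to prove (i) $\bE[\hat{I}_{t} \mid \mathcal{G}_{t}] = I_{t}$ and (ii) $\bV[\hat{I}_{t} \mid \mathcal{G}_{t}] \le \|g\|_{\infty}^{2}/N$.

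First I would establish (i) directly from \eqref{equ:systematic_resampling}: $\mathsf{ind}(i) = j$ iff $u_{1} + (i-1)/N$ falls in the interval $J_{j} \coloneqq \bigl[\sum_{\ell < j} w_{t}^{(\ell)},\, \sum_{\ell \le j} w_{t}^{(\ell)}\bigr)$ of length $w_{t}^{(j)}$, and averaging over $u_{1}$ while summing over $i$ yields $\bE[|\{i : \mathsf{ind}(i) = j\}| \mid \mathcal{G}_{t}] = N w_{t}^{(j)}$, hence $\bE[\hat{I}_{t} \mid \mathcal{G}_{t}] = \sum_{j} w_{t}^{(j)} g(\tilde{x}_{t}^{(j)}) = I_{t}$.

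Step (ii) is the main obstacle. For \emph{multinomial} resampling the bound is immediate: with i.i.d.\ indices $J_{i} \sim \sum_{j} w_{t}^{(j)} \delta_{j}$ one obtains $\bV[\hat{I}_{t} \mid \mathcal{G}_{t}] = N^{-1} \bV[g(\tilde{x}_{t}^{(J_{1})}) \mid \mathcal{G}_{t}] \le \|g\|_{\infty}^{2}/N$. For the \emph{systematic} scheme actually used, the indices $\mathsf{ind}(1), \dots, \mathsf{ind}(N)$ are strongly correlated through the single uniform $u_{1}$, so a direct second-moment calculation is delicate. I would instead invoke the standard comparison \citep[Section~3.4]{Doucet2011tutorialPF}, which states that for any bounded $g$ the conditional variance of $\hat{I}_{t}$ under systematic resampling does not exceed its counterpart under multinomial resampling, thereby transferring the bound above to the systematic case. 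Combining (i) and (ii), taking the supremum over $g$ with $\|g\|_{\infty} \le 1$, and extracting square roots delivers the claimed bound $d\bigl(\sum_{i} w_{t}^{(i)} \delta_{\tilde{x}_{t}^{(i)}},\, N^{-1} \sum_{i} \delta_{x_{t}^{(i)}}\bigr) \le 1/\sqrt{N}$.
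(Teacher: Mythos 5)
Your proposal follows the same skeleton as the paper's proof: condition on the pre-resampling ensemble, use conditional unbiasedness of the resampling step together with the law of total variance, and reduce everything to the conditional-variance bound $\bV[\hat I_t \mid \mathcal G_t] \le \|g\|_\infty^2/N$. Your step (i) and the total-variance decomposition are correct and are exactly what the paper does (it states the conditional unbiasedness in one line and then writes the squared distance as $\sup_{\|g\|_\infty\le 1}\bE\bigl[\bV[\hat I_t\mid\mathcal G_t]\bigr]$).

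The gap is in step (ii), and you have correctly located the crux---but the patch you propose does not work. The claim that, for every bounded $g$, the conditional variance of $\hat I_t$ under \emph{systematic} resampling is dominated by its multinomial counterpart is not a theorem; it is false in general. Such a domination result does hold for \emph{stratified} resampling (independent $u_i\sim\mathsf{Unif}((i-1)/N,i/N)$), but systematic resampling couples all strata through the single $u_1$, and its conditional variance depends on the \emph{ordering} of the particles. A standard counterexample (in the spirit of Gerber, Chopin and Whiteley, Ann.\ Statist.\ 2019): take $N=2k$, alternating weights $w_t^{(2i-1)}=\tfrac{3}{2N}$, $w_t^{(2i)}=\tfrac{1}{2N}$ and values $g(\tilde x_t^{(2i-1)})=1$, $g(\tilde x_t^{(2i)})=-1$; then \eqref{equ:systematic_resampling} yields $\hat I_t\in\{0,1\}$ with probability $\tfrac12$ each, so $\bV[\hat I_t\mid\mathcal G_t]=\tfrac14$ for every $N$, which does not vanish and, for $N>4$, exceeds both the multinomial variance $\tfrac{3}{4N}$ and the claimed bound $\tfrac1N$. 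So the comparison you invoke would fail exactly where you need it. To be fair, the paper's own proof is equally thin at this point: it asserts $\bE[\bV[\cdot\mid\cdot]]\le 1/N$ ``since $g(x_t^{(i)})^2\le 1$'', an argument that is valid verbatim only for multinomial (or stratified) resampling. A fully rigorous treatment would either replace systematic resampling by multinomial or stratified resampling in \eqref{equ:systematic_resampling}, or supply a genuinely different argument for the systematic scheme (e.g.\ via particle ordering or negative-association results).
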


\begin{proof}
    Due to \eqref{equ:systematic_resampling},
    \(
    \mathbb{E}\left[ \frac{1}{N} \sum_{i=1}^{N} g(x_{t}^{(i)}) \mid \{\tilde{x}_t^{(i)}\}\right]
    = \sum_{i=1}^{N} w_{t}^{(i)} g(\tilde{x}_{t}^{(i)}).
    \) 
    Therefore,
    \[
    d\bigg(\sum_{i=1}^{N} w_{t}^{(i)} \delta_{\tilde{x}_{t}^{(i)}},\frac{1}{N} \sum_{i=1}^{N} \delta_{x_{t}^{(i)}} \bigg)^{\! \! 2}
    =
    \sup_{\|g\|_\infty \le 1} \mathbb{E}\bigg[\mathbb{V}\bigg[ \frac{1}{N} \sum_{i=1}^{N} g(x_{t}^{(i)}) \mid \{\tilde{x}_t^{(i)}\}\bigg]\bigg]\le \frac{1}{N},
    \]
    since the supremum is taken over functions $g$ which satisfy $g(x_t^{(i)})^{2} \le 1$.
\end{proof}

\section{Auxiliary Results for the Proof of \Cref{thm:IS_weight_bound}}
\label{section:auxiliary_results_thm_IS_weight_bound}

We will now show the details for each step in the proof of \Cref{thm:IS_weight_bound}, using the assumptions and notation therein.
For this purpose, we begin by introducing certain Kalman and EnKF matrices and establishing their relations, which we will use throughout this section:\footnote{Note that the definition of $\Sigma_{t}$ involves $Q$ rather than $Q_{t}$. This is deliberate and not a typographical error.}
\begin{align*}
Q&\in \SPD{d}&	
Q_{t}
&\coloneqq
\hat{C}_{t}^{x,\mathsf{p}}
=
\Cov^{\emp} \bigl[(f(x_{t-1}^{(i)}))_{i=1}^N\bigr] + Q \in \SPD{d},
\\
S
&\coloneqq
H Q H^\top + R \in \SPD{m},&
S_{t}
&\coloneqq
H Q_{t} H^{\top} + R \in \SPD{m},
\\
K
&\coloneqq
Q H^\top S^{-1} \in \bR^{d \times m},&
K_{t}
&\coloneqq
K_{t}^{\mathsf{p}}
=
Q_{t} H^\top S_{t}^{-1} \in \bR^{d \times m},
\\
\Sigma
&\coloneqq
(I_{d} - K H) Q (I_{d} - K H)^\top + K R K^\top,&
\Sigma_{t}
&\coloneqq
\Qip
=
(I_{d} - K_t H) Q (I_{d} - K_t H)^{\!\top} + K_{t} R K_{t}^{\top},
\end{align*}
cf.\ \eqref{equ:previous_based_gain_linear} for the definition of $\hat{C}_{t}^{x,\mathsf{p}}$ and $K_{t}^{\mathsf{p}}$ and \eqref{equ:qi_previous_particles} for the definition of $\Qip$.
We also denote		
\[
\Delta_{Q}
\coloneqq
Q_{t} - Q,
\qquad
\Delta_{K}
\coloneqq
K_{t} - K,
\qquad
\Delta_{\Sigma}
\coloneqq
\Sigma_{t} - \Sigma.
\]

\begin{lemma}[EnKF matrix identities]
\label{lemma:EnKF_matrix_identities}
The above-defined matrices satisfy:
\begin{enumerate}[label = (\roman*)]
	\item
	\label{item:posterior_covariance_Kalman}			
	$\Sigma = Q - KHQ$, in line with its definition in \Cref{thm:IS_weight_bound}; 
	\item
	\label{item:PHKR}
	$\Sigma H^\top = KR$;
	\item
	\label{item:determinant_P} 
	$\det \Sigma \det S = \det Q \det R$, in particular, $\Sigma \in \SPD{d}$ is strictly positive definite;
	\item
	\label{item:Kalman_difference_full_rank}
	$\Delta_{K} =  (I_d - K_{t} H) \Delta_{Q} H^\top S^{-1}$;
	\item
	\label{item:covariance_difference}
	$\Delta_{\Sigma} = \Delta_{K} S \Delta_{K}^\top$ and $\ran(\Delta_{\Sigma}) = \ran(\Delta_{K})$,
	in particular, $\Delta_{\Sigma} \in \SPSD{d}$ and $\Sigma_{t} \in \SPD{d}$.
\end{enumerate}
\end{lemma}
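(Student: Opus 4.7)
The plan is to treat the five identities in turn. Items (i)--(iii) are facts about the unperturbed Kalman triple $(K,\Sigma,S)$ built from $Q$, while (iv)--(v) relate this triple to its perturbed analogue $(K_t,\Sigma_t,S_t)$ driven by $Q_t$. All of the algebra will repeatedly exploit the defining relation $KS = QH^\top$ (and its transpose $SK^\top = HQ$, since $Q,S$ are symmetric) together with $K_t S_t = Q_t H^\top$.

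For (i), I expand the Joseph form of $\Sigma$ to obtain
\[
\Sigma = Q - KHQ - QH^\top K^\top + K(HQH^\top + R)K^\top,
\]
and use $KS = QH^\top$ to collapse the last two terms onto $+QH^\top K^\top$, leaving $\Sigma = Q - KHQ$. Identity (ii) is then immediate: $\Sigma H^\top = QH^\top - KHQH^\top = KS - K(S-R) = KR$. For (iii), I compute the determinant of the block matrix
\[
M = \begin{pmatrix} Q & QH^\top \\ HQ & S \end{pmatrix}
\]
in two ways by Schur complement. Eliminating the $(1,1)$-block gives $\det M = \det Q \cdot \det(S - HQH^\top) = \det Q \cdot \det R$; eliminating the $(2,2)$-block gives $\det M = \det S \cdot \det(Q - QH^\top S^{-1}HQ) = \det S \cdot \det\Sigma$ by (i). Equating both yields the determinant identity, and since $Q,R,S$ are all strictly positive definite this forces $\det\Sigma > 0$, which combined with the manifest positive semidefiniteness of $\Sigma$ in Joseph form gives $\Sigma \in \SPD{d}$.

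Identity (iv) is proved by verifying the claim directly: using $H\Delta_Q H^\top = S_t - S$ and $K_t S_t = Q_t H^\top$, the right-hand side simplifies as
\[
(I_d - K_tH)\Delta_Q H^\top S^{-1} = \Delta_Q H^\top S^{-1} - K_t(S_t - S)S^{-1} = (\Delta_Q - Q_t) H^\top S^{-1} + K_t = -K + K_t.
\]

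The main obstacle is (v), because $\Sigma_t$ is defined with $Q$ (not $Q_t$) in the middle, so the analogue of (i) does \emph{not} apply and in particular $\Sigma_t \ne Q_t - K_t HQ_t$. The plan is to expand both $\Sigma$ and $\Sigma_t$ in Joseph form with the same ``$Q$-in-the-middle'' convention, subtract, and collect to get
\[
\Delta_\Sigma = -\Delta_K HQ - QH^\top \Delta_K^\top + \bigl(K_t S K_t^\top - K S K^\top\bigr).
\]
Rewriting $K_t S K_t^\top - K S K^\top = \Delta_K S \Delta_K^\top + \Delta_K S K^\top + K S \Delta_K^\top$ and applying $SK^\top = HQ$ together with $KS = QH^\top$ to the two trailing cross-terms cancels them exactly against the first two pieces, leaving $\Delta_\Sigma = \Delta_K S \Delta_K^\top$. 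Since $S \in \SPD{m}$ admits a symmetric positive definite square root $S^{1/2}$, the factorization $\Delta_\Sigma = (\Delta_K S^{1/2})(\Delta_K S^{1/2})^\top$ gives $\Delta_\Sigma \in \SPSD{d}$ and $\ran(\Delta_\Sigma) = \ran(\Delta_K S^{1/2}) = \ran(\Delta_K)$ by invertibility of $S^{1/2}$. Finally, $\Sigma_t = \Sigma + \Delta_\Sigma$ is the sum of an SPD and a PSD matrix, hence $\Sigma_t \in \SPD{d}$.
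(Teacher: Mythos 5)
Your proof is correct, and for items (i), (ii) and (iv) it follows essentially the same algebra as the paper. There are two mild divergences worth noting. For (iii), you compute the determinant of the block matrix $\begin{pmatrix} Q & QH^\top \\ HQ & S\end{pmatrix}$ via both Schur complements, whereas the paper applies the Weinstein--Aronszajn identity $\det(I_d-KH)=\det(I_m-HK)$ to $\det\Sigma=\det(I_d-KH)\det Q$; these are two phrasings of the same standard fact and buy nothing over one another. For (v), the paper expands $\Sigma_t$ around $\Sigma$ by writing $K_t=K+\Delta_K$ inside the Joseph form and then cancels the cross terms using identity (ii), $\Sigma H^\top = KR$; you instead expand both Joseph forms with $Q$ in the middle, subtract, and cancel the cross terms directly from the defining relation $KS=QH^\top$. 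Your route is marginally more self-contained (it does not need (ii) at all for this step), at the cost of a slightly longer intermediate expression; both are valid, and your handling of the range identity via $\Delta_\Sigma=(\Delta_K S^{1/2})(\Delta_K S^{1/2})^\top$ and of $\Sigma_t=\Sigma+\Delta_\Sigma\in\SPD{d}$ matches the paper's.
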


\begin{proof}
\ref{item:posterior_covariance_Kalman} follows from resolving the parentheses in the definition of $\Sigma$:
\[
\Sigma 
= Q - KHQ - QH^\top K^\top + K(HQH^\top + R)K^\top
= Q - KHQ - QH^\top K^\top + QH^\top S^{-1} S K^\top.
\]
By \ref{item:posterior_covariance_Kalman},
$
\Sigma H^\top
= QH^\top - KHQH^\top 
= K(S - HQH^\top) 
= KR
$,
proving \ref{item:PHKR}.		
\ref{item:determinant_P} follows from \ref{item:posterior_covariance_Kalman} and the Weinstein--Aronszajn identity,
\[
\det \Sigma 
=
\det(I_{d} - KH) \det Q
=
\det(I_{m} - HK) \det Q
=
\det(S - H Q H^{\top}) \det S^{-1} \det Q,
%		=
%		\frac{\det Q\det R}{\det S}
\]
while \ref{item:Kalman_difference_full_rank} can be shown as follows:
\[
\Delta_{K} S
=
K_{t} (S - S_{t}) + K_{t} S_{t} - Q H^{\top}
=
- K_{t} H \Delta_{Q} H^{\top} + \Delta_{Q} H^{\top}
=
(I_{d} - K_{t} H) \Delta_{Q} H^{\top}.
\]
Finally, using \ref{item:posterior_covariance_Kalman} as well as \ref{item:PHKR}, we obtain \ref{item:covariance_difference}:
\begin{align*}
	\Sigma_{t} &= (I-K_{t} H)Q(I-K_{t} H)^\top + K_{t} R K_{t}^\top\\
	&= (I-KH - \Delta_{K} H)Q(I-KH-\Delta_{K} H)^\top + (K+\Delta_{K})R (K+\Delta_{K})^\top
    \\	
	&= \Sigma - \Delta_{K} H \Sigma - \Sigma H^\top \Delta_{K}^\top + \Delta_{K} H Q H^\top \Delta_{K}^\top + \Delta_{K} RK^\top + KR\Delta_{K}^\top + \Delta_{K} R\Delta_{K}^\top\\
	&= \Sigma + \Delta_{K} (HQH^\top + R) \Delta_{K}^\top + \Delta_{K} ( RK^\top - H \Sigma) + (KR -\Sigma H^\top )\Delta_{K}^\top\\
	&= \Sigma + \Delta_{K} S \Delta_{K}^\top.
\end{align*}	
Since $S \in \SPD{m}$ and $\ran(A A^\top) = \ran(A)$ for every matrix $A$, 
$
\ran \Delta_{\Sigma} = \ran(\Delta_{K} S^{1/2}) = \ran \Delta_{K}
$,
while $\Delta_{\Sigma} \in \SPSD{d}$ and $\Sigma_{t} \in \SPD{d}$ follow from $\Sigma_{t} \in \SPD{d}$ and the above representation by considering the Loewner order.
\end{proof}

\begin{lemma}[{filtering component formula, cf.\ \citealt[Section~III]{Alspach1972GaussianSum}}]
\label{lemma:thm_IS_weight_bound_posti_formula}
Under the assumptions of \Cref{thm:IS_weight_bound} and using the notation therein, $\Sigma \in \SPD{d}$ and
\begin{equation*}
	p_t^{(i)}
	=
	\biggl(\frac{\det R}{\det S}\biggr)^{\! \! 1/2} 
	\exp \big( - \tfrac{1}{2} |y_t - H f(x_{t-1}^{(i)})|_S^2 \big)
	\cdot \mathcal{N}(\hat{m}_{t}^{(i)}, \Sigma).
\end{equation*}
In other words, \Cref{thm:IS_weight_bound}\ref{item:thm_IS_weight_bound_posti_formula} holds.
\end{lemma}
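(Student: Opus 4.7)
The statement $\Sigma \in \SPD{d}$ is immediate from \Cref{lemma:EnKF_matrix_identities}\ref{item:determinant_P}, so the task reduces to the classical Gaussian conjugacy (posterior) identity for a linear--Gaussian observation model. The plan is to prove it by direct completion of squares in $x$ and then match constants using the matrix identities of \Cref{lemma:EnKF_matrix_identities}. Denoting $m_0 \coloneqq f(x_{t-1}^{(i)})$ for brevity, I would start from
\[
p_t^{(i)}(x) = \exp\bigl(-\tfrac{1}{2}|y_t - Hx|_R^2\bigr)\,\cdot\,(2\pi)^{-d/2}(\det Q)^{-1/2}\exp\bigl(-\tfrac{1}{2}|x - m_0|_Q^2\bigr),
\]
expand both quadratic forms, and collect the terms quadratic, linear, and constant in $x$.

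The quadratic part yields the precision $A \coloneqq H^\top R^{-1} H + Q^{-1}$, and the Woodbury identity gives $A^{-1} = Q - QH^\top(HQH^\top + R)^{-1}HQ = Q - KHQ = \Sigma$, in agreement with the definition of $\Sigma$ in \Cref{thm:IS_weight_bound}. The linear part yields $b \coloneqq H^\top R^{-1} y_t + Q^{-1} m_0$, so the $x$-dependent factor is a Gaussian kernel with mean $\mu \coloneqq A^{-1}b = \Sigma(H^\top R^{-1} y_t + Q^{-1} m_0)$ and covariance $\Sigma$. To identify $\mu$ with $\hat{m}_t^{(i)} = m_0 + K(y_t - Hm_0)$, I would use $\Sigma Q^{-1} = I_d - KH$ (immediate from \Cref{lemma:EnKF_matrix_identities}\ref{item:posterior_covariance_Kalman}) together with $\Sigma H^\top R^{-1} = K$ (from \Cref{lemma:EnKF_matrix_identities}\ref{item:PHKR}), which give $\mu = (I_d - KH)m_0 + Ky_t = m_0 + K(y_t - Hm_0)$.

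The $x$-independent leftover of the completed square is $-\tfrac{1}{2}\bigl[y_t^\top R^{-1} y_t + m_0^\top Q^{-1} m_0 - \mu^\top A \mu\bigr]$, and the key step is to rewrite this as $-\tfrac{1}{2}|y_t - Hm_0|_S^2$. Substituting $\mu = m_0 + K(y_t - Hm_0)$, expanding $\mu^\top A \mu$, and repeatedly using $K = QH^\top S^{-1}$ with $S = HQH^\top + R$ reduces the identity to algebraic cancellations; equivalently, one can read it off as the standard marginal-likelihood formula $y_t \sim \mathcal{N}(Hm_0, S)$ for the Bayesian linear--Gaussian model. For the determinantal prefactor, \Cref{lemma:EnKF_matrix_identities}\ref{item:determinant_P} gives $\det \Sigma \cdot \det S = \det Q \cdot \det R$, hence
\[
(2\pi)^{-d/2}(\det Q)^{-1/2} = \Bigl(\tfrac{\det R}{\det S}\Bigr)^{\!1/2}\,\cdot\,(2\pi)^{-d/2}(\det \Sigma)^{-1/2},
\]
which supplies exactly the scalar factor in the claimed expression. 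The main obstacle is pure bookkeeping in the constant-exponent step --- rewriting $y_t^\top R^{-1} y_t + m_0^\top Q^{-1} m_0 - \mu^\top A \mu$ as $|y_t - Hm_0|_S^2$ without sign or transpose slips --- but the structure is entirely standard once the identities of \Cref{lemma:EnKF_matrix_identities} are invoked.
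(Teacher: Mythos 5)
Your proposal is correct and follows essentially the same route as the paper's proof: completion of squares in $x$, the Woodbury identity to identify the posterior precision with $\Sigma^{-1}$, the identities of \Cref{lemma:EnKF_matrix_identities} to match the mean with $\hat m_t^{(i)}$ and the determinant prefactor. The only difference is that the paper carries out the constant-term bookkeeping explicitly (via the auxiliary identities $R^{-1}-R^{-1}H\Sigma H^\top R^{-1}=S^{-1}$, $R^{-1}H\Sigma Q^{-1}=S^{-1}H$, $Q^{-1}-Q^{-1}\Sigma Q^{-1}=H^\top S^{-1}H$) to show the leftover equals $|y_t-Hf(x_{t-1}^{(i)})|_S^2$, whereas you defer this to standard algebra / the marginal-likelihood interpretation, which is indeed the step those identities are designed to discharge.
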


\begin{proof}
$\Sigma \in \SPD{d}$ was proven in \Cref{lemma:EnKF_matrix_identities}\ref{item:determinant_P}.
Denote $z = f(x_{t-1}^{(i)})$.
By the Woodbury matrix identity as well as \Cref{lemma:EnKF_matrix_identities}\ref{item:posterior_covariance_Kalman} and \ref{item:PHKR},
\begin{alignat*}{3}
(H^\top R^{-1} H + Q^{-1})^{-1}
&= Q - QH^\top(HQH^\top + R)^{-1}HQ&
&= Q - KHQ&
&= \Sigma,
\\
H^\top R^{-1} y_t + Q^{-1} z
&= \Sigma^{-1} (Ky_t + (I_{d} - KH) z)&
&&&= \Sigma^{-1} \hat{m}_{t}^{(i)},
\\
R^{-1} - R^{-1}H \Sigma H^\top R^{-1}
&= R^{-1} - R^{-1} H K&
&= R^{-1}(S - HQH^\top)S^{-1}&
&= S^{-1},
\\
R^{-1}H \Sigma Q^{-1} 
&= R^{-1}(KR)^\top Q^{-1}&
&= S^{-1} H Q Q^{-1} &
&= S^{-1} H,
\\
Q^{-1} - Q^{-1} \Sigma Q^{-1}
&= Q^{-1} - Q^{-1}(I - KH)&
&= Q^{-1} K H &
&= H^\top S^{-1} H.
\end{alignat*}
Using these identities and completing squares yields
\begin{align*}
	I_t^{(i)}&(x) 
	\coloneqq |Hx-y_t|_R^2 + |x - z|_Q^2
	\\
	&=
	x^\top(H^\top R^{-1} H + Q^{-1}) x - 2 x^\top (H^\top R^{-1} y_t + Q^{-1} z) + |y_t|_{R}^2 + |z|_Q^2
	\\
	&=
	x^\top \Sigma^{-1} x - 2 x^\top \Sigma^{-1} \hat{m}_t^{(i)} + |\hat{m}_t^{(i)}|_{\Sigma}^2 - |\hat{m}_t^{(i)}|_{\Sigma}^2 + |y_t|_{R}^2 + |z|_Q^2
	\\
	&=
	|x-\hat{m}_t^{(i)}|_{\Sigma}^2
	+ y_t^\top (R^{-1} - R^{-1}H \Sigma H^\top R^{-1})y_t
	+ z^\top (Q^{-1} - Q^{-1} \Sigma Q^{-1}) z
	- 2y_t^\top R^{-1}H \Sigma Q^{-1}z
	\\
	&=
	|x-\hat{m}_t^{(i)}|_{\Sigma}^2 + y_t^\top S^{-1} y_t + z^\top H^\top S^{-1}Hz - 2y_t^\top S^{-1} Hz
	\\
	&=
	|x-\hat{m}_t^{(i)}|_{\Sigma}^2 + |y_t - Hz|_S^2.	
\end{align*}
It follows from the definitions in \eqref{equ:mixture_prior_approximation} and \eqref{equ:mixture_posterior_approximation} as well as \Cref{lemma:EnKF_matrix_identities}\ref{item:determinant_P} that
\begin{align*}
\posti(x)
&=
(2\pi)^{-d/2} \det Q^{-1/2} \exp \big( - \tfrac{1}{2} I_t^{(i)}(x) \big)
\\
&=
\det Q^{-1/2} \det \Sigma^{1/2}  \exp \big( - \tfrac{1}{2} |y_t - Hz|_S^2 \big) \cdot \mathcal{N}(\hat{m}_{t}^{(i)}, \Sigma)
\\
&=
\biggl(\frac{\det R}{\det S}\biggr)^{\! \! 1/2} 
\exp \big( - \tfrac{1}{2} |y_t - Hz|_S^2 \big)
\cdot \mathcal{N}(\hat{m}_{t}^{(i)}, \Sigma).
\end{align*}
\end{proof}

\begin{lemma}[bound for target-proposal ratio]
\label{lemma:thm_IS_weight_bound_ratio_formula}
Under the assumptions of \Cref{thm:IS_weight_bound} and using the notation therein, there exists $u_{t}^{(i)} \in \bR^{d}$ such that, for each $x \in \bR^{d}$,
\begin{equation}
	\frac{\posti}{q_{\mathsf{p}, t}^{(i)}} (x)
	\le
	\biggl(\frac{\det \Qip }{\det Q}\biggr)^{\! \! 1/2}
	\exp\big( -\tfrac{1}{2}(x-u_{t}^{(i)})^\top (\Sigma^{-1} - \Qip^{-1}) (x-u_{t}^{(i)}) \big),
\end{equation}
with equality if $\Cov^{\emp}[(f(x_{t-1}^{(i)})_{i=1}^N])$ is positive definite and $H$ has full row rank.
In other words, \Cref{thm:IS_weight_bound}\ref{item:thm_IS_weight_bound_ratio_formula} holds.
\end{lemma}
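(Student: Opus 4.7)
The plan is to combine the Gaussian density of $q_{\mathsf p,t}^{(i)}=\cN(\mip,\Qip)$ from \eqref{equ:qi_previous_particles}, the formula for $\posti$ from part~\ref{item:thm_IS_weight_bound_posti_formula}, and the determinant identity $\det R/\det S=\det\Sigma/\det Q$ from \Cref{lemma:EnKF_matrix_identities}\ref{item:determinant_P} to rewrite the ratio as
\[
\frac{\posti(x)}{q_{\mathsf p,t}^{(i)}(x)}
=
\bigl(\det\Qip/\det Q\bigr)^{1/2}\exp\bigl(-\tfrac12 F(x)\bigr),
\]
\[
F(x)
\coloneqq
\bigl|x-\hat m_t^{(i)}\bigr|_\Sigma^2-\bigl|x-\mip\bigr|_{\Qip}^2+\bigl|y_t-Hf(x_{t-1}^{(i)})\bigr|_S^2,
\]
reducing the claim to exhibiting $u_t^{(i)}$ such that $F(x)\ge(x-u_t^{(i)})^\top D\,(x-u_t^{(i)})$ for all $x\in\bR^d$, where $D\coloneqq\Sigma^{-1}-\Qip^{-1}\in\SPSD{d}$ by inverse-monotonicity applied to $\Qip\succeq\Sigma$ (itself a consequence of \Cref{lemma:EnKF_matrix_identities}\ref{item:covariance_difference}).

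The pivotal observation is that both sides of the target inequality have Hessian $2D$, so their difference is an \emph{affine} function of $x$. The inequality therefore splits into (i) the normal equation $D\,u_t^{(i)}=\xi$ with $\xi\coloneqq\Sigma^{-1}\hat m_t^{(i)}-\Qip^{-1}\mip$ (vanishing linear coefficient), and (ii) the nonnegativity of the resulting constant---which, being $x$-independent, equals $F(u_t^{(i)})$.

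For (i), I apply the Woodbury identity to $\Qip=\Sigma+\Delta_K S\Delta_K^\top$ (via \Cref{lemma:EnKF_matrix_identities}\ref{item:covariance_difference}) to obtain $D=\Sigma^{-1}\Delta_K M_0^{-1}\Delta_K^\top\Sigma^{-1}$ with $M_0\coloneqq S^{-1}+\Delta_K^\top\Sigma^{-1}\Delta_K\in\SPD{m}$, so that $\ran D=\Sigma^{-1}\ran(\Delta_K)$. Writing $\hat m_t^{(i)}-\mip=-\Delta_K c$ with $c\coloneqq y_t-Hf(x_{t-1}^{(i)})$, a direct manipulation using $\Qip^{-1}=\Sigma^{-1}-D$ gives $\xi=D\mip-\Sigma^{-1}\Delta_K c\in\ran D$, so a valid $u_t^{(i)}$ exists. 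For (ii), setting $p\coloneqq u_t^{(i)}-\hat m_t^{(i)}$ and $q\coloneqq u_t^{(i)}-\mip$, the normal equation rewrites as $\Sigma^{-1}p=\Qip^{-1}q$, and together with $p-q=\Delta_K c$ it collapses $F(u_t^{(i)})$ to $c^\top\Delta_K^\top\Qip^{-1}q+|c|_S^2$. Inserting the Woodbury form of $D$ and noting that the normal equation forces $v\coloneqq M_0^{-1}\Delta_K^\top\Sigma^{-1}q+c\in\ker\Delta_K\subseteq\bR^m$, this simplifies to the clean identity $F(u_t^{(i)})=v^\top S^{-1}v\ge 0$, where $v$ turns out to be the $M_0$-orthogonal projection of $c$ onto $\ker\Delta_K$. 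The main subtlety is the rank-deficient case $\ker\Delta_K\ne\{0\}$: although $u_t^{(i)}$ and $q$ are not unique, $\ker D=\{w:\Delta_K^\top\Sigma^{-1}w=0\}$ ensures that $v$, and hence $F(u_t^{(i)})$, are well-defined independently of the chosen representative.

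Equality holds iff $v=0$, which is guaranteed whenever $\ker\Delta_K=\{0\}$. Under the stated assumptions, $\Delta_Q=\Cov^\emp[(f(x_{t-1}^{(i)}))_{i=1}^N]$ is invertible and $H^\top$ is injective, so the factorization $\Delta_K=(I_d-K_tH)\Delta_Q H^\top S^{-1}$ from \Cref{lemma:EnKF_matrix_identities}\ref{item:Kalman_difference_full_rank} expresses $\Delta_K$ as a composition of bijections ($\Delta_Q$, $S^{-1}$, and $I_d-K_tH$, the latter invertible because $(I_d-K_tH)Q_t$ is the Schur complement of $S_t$ in the positive definite block matrix $\bigl(\begin{smallmatrix}Q_t&Q_tH^\top\\ HQ_t&S_t\end{smallmatrix}\bigr)$ and hence lies in $\SPD{d}$) with the injection $H^\top$, making $\Delta_K$ itself injective and closing the equality case.
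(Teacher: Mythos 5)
Your proof is correct and arrives at the same place as the paper, but by a genuinely different algebraic route, so a comparison is worthwhile. The paper isolates the completion of squares in a standalone result (\Cref{lem:quatratic_polynomials_differnece}) phrased via the Moore--Penrose pseudoinverse of $\Delta_{\Sigma}=\Qip-\Sigma$, feeds in the compatibility condition $\mip-\hat m_t^{(i)}=\Delta_{K}w\in\ran\Delta_{\Sigma}$ from \Cref{lemma:EnKF_matrix_identities}\ref{item:covariance_difference}, and identifies the leftover constant as $w^{\top}S^{-1/2}\bigl(I_m-\Proj_{\ran(S^{1/2}\Delta_{K}^{\top})}\bigr)S^{-1/2}w\ge 0$; you instead invert $\Qip=\Sigma+\Delta_{K}S\Delta_{K}^{\top}$ by Woodbury, solve the normal equation $Du=\xi$ directly after checking $\xi\in\ran D$, and identify the constant as $v^{\top}S^{-1}v$ with $v$ the $M_0$-orthogonal projection of $c=w$ onto $\ker\Delta_{K}$. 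The two residuals coincide (one can verify $c^{\top}S^{-1}v=v^{\top}S^{-1}v$ using $M_0v=S^{-1}v$ for $v\in\ker\Delta_{K}$, which is also the one step you leave implicit in passing from $c^{\top}\Delta_{K}^{\top}\Qip^{-1}q+|c|_S^2$ to $v^{\top}S^{-1}v$ --- a one-line check, not a gap), and both equality criteria reduce to injectivity of $\Delta_{K}$. Your Schur-complement argument for the invertibility of $I_d-K_tH$ replaces the paper's Weinstein--Aronszajn determinant computation; both are valid. What your route buys is that it avoids pseudoinverses entirely and makes the well-definedness of the residual in the rank-deficient case transparent via $\ker D=\{z:\Delta_{K}^{\top}\Sigma^{-1}z=0\}$; what the paper's route buys is a reusable quadratic-difference identity stated independently of the EnKF structure.
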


\begin{proof}
Recall the definition of $q_{\mathsf{p}, t}^{(i)} = \mathcal{N}(m_{\mathsf{p}, t}^{(i)}, \Qip)$, cf.\ \eqref{equ:qi_previous_particles}.
To simplify notation, we drop the indices $i$ and $\mathsf{p}$ throughout the proof and denote $z = f(x_{t-1}^{(i)})$ and $w = y_{t} - Hz$.
Since $A^\top(A A^\top)^{\dagger}A = A^{\dagger} A = \Proj_{\ran A^{\top}}$
for every matrix $A$, in particular for $A = \Delta_{K} S^{1/2}$, \Cref{lemma:EnKF_matrix_identities}\ref{item:covariance_difference} implies that
\[
S^{-1} - \Delta_{K}^{\top} \Delta_{\Sigma}^{\dagger} \Delta_{K}
=
S^{-1/2}(I_{m} - A^\top(A A^\top)^{\dagger}A)S^{-1/2}
=
S^{-1/2}(I_{m} - \Proj_{\ran (S^{1/2} \Delta_{K}^{\top}) })S^{-1/2}	
\in
\SPSD{m}
\]
is positive semidefinite.
Since $m_{t} - \hat{m}_{t} = \Delta_{K} \, w$ and $\ran \Delta_{K} = \ran \Delta_{\Sigma}$ by \Cref{lemma:EnKF_matrix_identities}\ref{item:covariance_difference}, it follows from \Cref{lem:quatratic_polynomials_differnece} that there exists $u \in \mathbb{R}^d$ such that
\begin{align*}
J_t(x)
&\coloneqq
|x-\hat{m}_{t}|_{\Sigma}^2 - |x-m_{t}|_{\Sigma_{t}}^2 + |w|_S^2
\\
&=
(x - u)^\top (\Sigma^{-1} - \Sigma_{t}^{-1})(x - u) - (m_{t} - \hat{m}_{t})^\top \Delta_{\Sigma}^{\dagger}(m_{t} - \hat{m}_{t}) + |w|_S^2
\\
&=
(x - u)^\top (\Sigma^{-1} - \Sigma_{t}^{-1})(x - u)
+
w^{\top} \big( S^{-1} - \Delta_{K}^{\top} \Delta_{\Sigma}^{\dagger} \Delta_{K} \big) w
\\
&\geq
(x - u)^\top (\Sigma^{-1} - \Sigma_{t}^{-1})(x - u),
\end{align*}
with equality if and only if $w \in S^{1/2} \ran (S^{1/2} \Delta_{K}^{\top}) = \ran (H \Delta_{Q} (I_{d} - K_{t} H)$, where we used \Cref{lemma:EnKF_matrix_identities}\ref{item:Kalman_difference_full_rank}.
$(I_{d} - K_{t} H)$ is invertible by the Weinstein--Aronszajn identity,
\[
\det(I_{d} - K_{t} H)
=
\det(I_{m} - HK)
=
\det(S_{t} - H Q_{t} H^{\top}) \det S_{t}^{-1}
=
\det R \, \det S_{t}^{-1}
>
0,
\]
and therefore the above equality follows from the stated conditions: it holds whenever $H$ has full row rank and $\Cov^{\emp}[(f(x_{t-1}^{(i)})_{i=1}^N])$ is positive definite.

Setting $u_{t}^{(i)} \coloneqq u$, the claim then follows from \Cref{thm:IS_weight_bound}\ref{item:thm_IS_weight_bound_posti_formula} and \Cref{lemma:EnKF_matrix_identities}\ref{item:determinant_P}:
\[
\frac{\posti}{q_{\mathsf{p}, t}^{(i)}}(x)
=
\biggl( \frac{\det R \det \Sigma_{t}}{\det S \det \Sigma}\biggr)^{\! \! 1/2}  \exp \big( - \tfrac{1}{2} J_t(x) \big)
=
\biggl( \frac{\det \Sigma_{t}}{\det Q}\biggr)^{\! \! 1/2}  \exp \big( - \tfrac{1}{2} J_t(x) \big).
\]
\end{proof}

\begin{lemma}[bound for IS weights]
\label{lemma:thm_IS_weight_bound_IS_weight_bound}
Let the assumptions of \Cref{thm:IS_weight_bound} hold and consider the notation therein. 
For each $x\in \mathbb{R}^d$ and $i =1,\dots,N$,
\begin{equation}
	\label{equ:weight_estimation_repeated}
	\frac{\posti}{Z^{\mathsf{mix}} \, \qip}(x)
	\leq
	\biggl(\frac{\det \hat{C}^{x,\mathsf{p}}_{t} }{\det Q }\biggr)^{\! \! 1/2}
	\biggl(\frac{1}{N} \sum_{j=1}^N
	\exp\big( -\tfrac{1}{2} |y_t - Hf(x_{t-1}^{(j)})|_S^2 \big)\biggr)^{-1}.
\end{equation}
Further, for the sampling schemes $\II_{\mathsf{p}}$ and $\MM_{\mathsf{p}}^{\strat}$ defined in \Cref{algo:importance_sampling_strategies}, we have $\sigma_t^2 < \infty$ whenever $f$ is bounded.
In other words, \Cref{thm:IS_weight_bound}\ref{item:thm_IS_weight_bound_IS_weight_bound} holds.
\end{lemma}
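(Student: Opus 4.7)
The plan is to derive the pointwise weight bound \eqref{equ:weight_estimation_repeated} by combining the ratio bound from \Cref{lemma:thm_IS_weight_bound_ratio_formula} with the explicit normalization formula from \Cref{lemma:thm_IS_weight_bound_posti_formula}, and then to obtain the finiteness $\sigma_t^2<\infty$ for both schemes via an almost-sure uniform bound on the importance weights whenever $f$ is bounded.

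I would first kill the exponential factor in \Cref{lemma:thm_IS_weight_bound_ratio_formula}: from \Cref{lemma:EnKF_matrix_identities}\ref{item:covariance_difference} the difference $\Sigma_t-\Sigma\in\SPSD{d}$, so $\Sigma\preceq\Sigma_t$ and thus $\Sigma^{-1}-\Sigma_t^{-1}\in\SPSD{d}$, making the quadratic form non-negative and the exponential factor at most $1$. This reduces the ratio bound to $p_t^{(i)}/\qip(x)\le(\det\Sigma_t/\det Q)^{1/2}$. Next, integrating the closed-form expression \eqref{equ:single_posterior_approx_density} gives
\[
Z_t^{\mathsf{mix}}=\frac{1}{N}\sum_{j=1}^{N}\int p_t^{(j)}=\Bigl(\frac{\det R}{\det S}\Bigr)^{\!1/2}\,\frac{1}{N}\sum_{j=1}^{N}\exp\bigl(-\tfrac12|y_t-Hf(x_{t-1}^{(j)})|_S^{2}\bigr).
\]
Invoking \Cref{lemma:EnKF_matrix_identities}\ref{item:determinant_P} at time $t$ yields $\det\Sigma_t=\det Q_t\det R/\det S_t$, and the monotonicity $Q\preceq Q_t$ (since $Q_t-Q$ is the empirical covariance of $(f(x_{t-1}^{(i)}))_i$) gives $S\preceq S_t$ and hence $\det S/\det S_t\le 1$. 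Assembling these ingredients,
\[
\Bigl(\frac{\det\Sigma_t}{\det Q}\Bigr)^{\!1/2}\Bigl(\frac{\det R}{\det S}\Bigr)^{\!-1/2}=\Bigl(\frac{\det Q_t\,\det S}{\det Q\,\det S_t}\Bigr)^{\!1/2}\le\Bigl(\frac{\det\hat C_t^{x,\mathsf p}}{\det Q}\Bigr)^{\!1/2},
\]
which is exactly the RHS of \eqref{equ:weight_estimation_repeated} after dividing the pointwise ratio bound by $Z_t^{\mathsf{mix}}$.

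For the finiteness of $\sigma_t^{2}$, I would note that the same bound transfers to mixture-mixture ratios via the elementary inequality $\sum_j a_j/\sum_j b_j\le\max_j a_j/b_j$ (for $a_j\ge 0$, $b_j>0$), giving $\postm/(Z_t^{\mathsf{mix}}\,\qmp)(x)\le\max_j p_t^{(j)}/(Z_t^{\mathsf{mix}}\,q_{\mathsf p,t}^{(j)})(x)$, which is again dominated by the RHS of \eqref{equ:weight_estimation_repeated}. Hence for both $\II_{\mathsf p}$ and $\MM_{\mathsf p}^{\strat}$ the normalized weight $v_t^{(1)}/Z_t^{\mathsf{mix}}$ is almost-surely bounded by the same deterministic-looking quantity. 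Under $\|f\|_\infty\le M$, the empirical covariance has operator norm $\le 4M^{2}$, so $\det\hat C_t^{x,\mathsf p}$ is bounded above by a constant depending only on $M,Q,d$; simultaneously each likelihood exponent satisfies $|y_t-Hf(x_{t-1}^{(j)})|_S^{2}\le(\|y_t\|+\|H\|M)^{2}/\lambda_{\min}(S)$, so the arithmetic mean in the denominator of \eqref{equ:weight_estimation_repeated} stays above a strictly positive constant. Hence $v_t^{(1)}/Z_t^{\mathsf{mix}}$ is almost-surely bounded by a deterministic constant, and $\sigma_t^{2}\le\bE[(v_t^{(1)}/Z_t^{\mathsf{mix}})^{2}]<\infty$.

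The main obstacle is the determinant bookkeeping in the first step: \Cref{lemma:thm_IS_weight_bound_ratio_formula} naturally produces the factor $(\det\Sigma_t/\det Q)^{1/2}$, whereas the target bound \eqref{equ:weight_estimation_repeated} is phrased in the ensemble-level quantity $(\det\hat C_t^{x,\mathsf p}/\det Q)^{1/2}$; bridging the two requires the identity $\det\Sigma_t\det S_t=\det Q_t\det R$ together with the Loewner monotonicity $S\preceq S_t$. Once this is in place, transferring the bound to mixture proposals (via the ratio-of-sums inequality) and converting boundedness of $f$ into a two-sided deterministic bound on the RHS is routine.
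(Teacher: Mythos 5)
Your overall strategy matches the paper's: drop the exponential factor in \Cref{lemma:thm_IS_weight_bound_ratio_formula} via $\Sigma^{-1}-\Qip^{-1}\in\SPSD{d}$, integrate \eqref{equ:single_posterior_approx_density} to get $\Zmix$ in closed form, and then do determinant bookkeeping; your treatment of $\sigma_t^2<\infty$ is also sound. However, there is one genuine error in the determinant step. You assert that ``\Cref{lemma:EnKF_matrix_identities}\ref{item:determinant_P} at time $t$'' yields the \emph{identity} $\det \Sigma_t \det S_t = \det Q_t \det R$ for $\Sigma_t=\Qip$. This is false: $\Qip=(I-K_tH)\,Q\,(I-K_tH)^\top+K_tRK_t^\top$ is built from the \emph{true} noise covariance $Q$, not from the ensemble-based $Q_t=\hat C_t^{x,\mathsf p}$ (the paper flags this deliberately in a footnote), and with $K_t=Q_tH^\top S_t^{-1}$ the cross terms do not cancel, so $\Qip\neq Q_t-K_tHQ_t$ and the Weinstein--Aronszajn argument of \Cref{lemma:EnKF_matrix_identities}\ref{item:determinant_P} does not apply to $\Qip$ directly. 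The correct route --- and the one the paper takes --- is to introduce the auxiliary matrix $\tilde\Sigma_t\coloneqq(I-K_tH)\,Q_t\,(I-K_tH)^\top+K_tRK_t^\top$, for which the identity $\det\tilde\Sigma_t\det S_t=\det Q_t\det R$ \emph{does} hold, and to observe $\Qip\preceq\tilde\Sigma_t$ (since $Q\preceq Q_t$), whence $\det\Qip\det S\le\det\tilde\Sigma_t\det S_t=\det Q_t\det R$. Because you only ever use your ``identity'' in the upper-bound direction, your final inequality is correct, but the step as written is not justified and needs this intermediate object.

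On the finiteness part you diverge slightly from the paper, to good effect: the paper bounds the $\II_{\mathsf p}$ weights by explicit constants $c_t',c_t''$ and then handles $\MM_{\mathsf p}^{\strat}$ by citing the weight-variance comparison \Cref{thm:weight_variance_inequality}, whereas you bound the mixture--mixture weight directly via the mediant inequality $\sum_j a_j/\sum_j b_j\le\max_j a_j/b_j$, which reduces it to the same per-component bound. Both are valid; your version is more self-contained, while the paper's reuses machinery already established. Your uniform bounds on $\det\hat C_t^{x,\mathsf p}$ and on the likelihood exponents under $\|f\|_\infty\le M$ are fine.
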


\begin{proof}
Since $Q_{t} \succcurlyeq Q$, $S_{t} \succcurlyeq S$ and
\[
\tilde{\Sigma}_{t}
\coloneqq
(I - K_t H) Q_{t} (I - K_t H)^\top + K_t R K_t^\top
=
\Sigma_{t} + (I - K_t H) \Delta_{Q} (I - K_t H)^\top
\succcurlyeq
\Sigma_{t},
\]
where $\succcurlyeq$ denotes the Loewner order,
a proof strategy identical to \Cref{lemma:EnKF_matrix_identities}\ref{item:determinant_P} yields
\begin{equation}
	\label{equ:bound_for_det_product}
	\det \Sigma_{t} \det S
	\leq
	\det \tilde{\Sigma}_{t} \det \tilde{S}
	=
	\det Q_{t} \det R.
\end{equation}
By \Cref{lemma:EnKF_matrix_identities}\ref{item:covariance_difference}, $\Delta_{\Sigma} \in \SPSD{d}$, which implies that $\Sigma^{-1} - \Sigma_{t}^{-1} = \Sigma_{t}^{-1} \Delta_{\Sigma} \Sigma^{-1} \in \SPSD{d}$ holds,
and \eqref{equ:relation_single_posterior_proposal_approx}, 
\[
    \posti
    \le \biggl(\frac{\det \Sigma_{t} }{\det Q }\biggr)^{\!\! 1/2} q_{\mathsf{p}, t}^{(i)}.
\]
Moreover, by \eqref{equ:single_posterior_approx_density},
\[
    \Zmix 
    = \frac{1}{N} \sum_{i=1}^N \int p_t^{(i)}
    = \biggl(\frac{\det R }{\det S }\biggr)^{\!\! 1/2}\frac{1}{N} \sum_{i=1}^N 
    \exp\big( -\tfrac{1}{2}|y_t - Hf(x_{t-1}^{(i)})|_S^2 \big).
\]
Combining these observations, the ratio is bounded as
\[
\frac{\posti}{\Zmix \, \qip}
\leq
\biggl(\frac{\det \Sigma_{t} \det S }{\det Q \det R }\biggr)^{\! 1/2} 
\biggl(\frac{1}{N} \sum_{j=1}^N \exp\big( -\tfrac{1}{2}|y_t - Hf(x_{t-1}^{(j)})|_S^2 \big) \biggr)^{\!-1}
\]
and \eqref{equ:bound_for_det_product} proves \eqref{equ:weight_estimation_repeated}.
Further, if $f$ is bounded, then
\[
c_t' 
\coloneqq \sup_{(x^{(i)})_{i=1}^N \in (\bR^d)^N} \frac{\det \hat{C}^{x,\mathsf{p}}_{t} }{\det Q }
< \infty,
\qquad 
c_t'' \coloneq
\sup_{x\in\bR^d} e^{\, |y_t - Hf(x)|_S^2}
< \infty,
\]
where we used that $\hat{C}^{x,\mathsf{p}}_{t} = Q + \Cov^\emp[(f(x^{(i)}))_{i=1}^N]$ by \eqref{equ:previous_based_gain_linear}.
Therefore, for the scheme $\II_{\mathsf{p}}$, using \Cref{lem:conditional_expectations},
\[
\sigma_t^2
=
\mathbb{E}\biggl[\biggl(\frac{\chposti}{\Zmix\, \chqi}\biggr)^2\biggr] - 1
\le
c_t'\,c_t'' - 1
<
\infty.
\]
For the scheme $\MM_{\mathsf{p}}^\strat$, $\sigma_t < \infty$ follows by \Cref{thm:weight_variance_inequality}.
\end{proof}

\begin{lemma}
	\label{lem:quatratic_polynomials_differnece}
	Let $A, B \in \SPD{d}$ be symmetric and positive definite matrices, $a,b,v \in \bR^{d}$ such that $(B - A) v = b - a$, and let $u = b - Bv$.	
	Then, for all $x \in \mathbb{R}^d$, 
	\[
	|x - a|_A^2 - |x - b|_B^2 = (x - u)^\top (A^{-1} - B^{-1})(x - u) - (b - a)^\top (B - A)^{\dagger}(b - a),
	\]
	where $(B - A)^{\dagger}$ denotes the Moore–-Penrose pseudoinverse.
\end{lemma}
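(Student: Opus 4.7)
The plan is to expand both sides of the claimed identity as quadratic polynomials in $x$ and to match coefficients of the quadratic, linear, and constant parts. Since both sides are manifestly quadratic in $x$, this reduces the lemma to three algebraic identities involving only the parameters $A,B,a,b,u,v$, and the role of the hypothesis $(B-A)v=b-a$ and the definition $u=b-Bv$ becomes transparent in this matching.

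The quadratic coefficient $A^{-1}-B^{-1}$ is identical on both sides by inspection. For the linear coefficient, one needs $(A^{-1}-B^{-1})\,u = A^{-1}a - B^{-1}b$. Substituting $u=b-Bv$ on the left gives $A^{-1}b - A^{-1}Bv - B^{-1}b + v$, and substituting $a=b-(B-A)v$ (which is exactly the assumption rewritten) on the right gives the same expression; so the linear parts match.

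For the constant part one has to verify
\[
u^\top (A^{-1}-B^{-1})\,u \;-\; (b-a)^\top (B-A)^{\dagger}(b-a) \;=\; a^\top A^{-1} a - b^\top B^{-1} b.
\]
The key observation is that $b-a = (B-A)v \in \ran(B-A)$, so $(B-A)(B-A)^{\dagger}(b-a)=b-a$. Moreover, any solution $v$ of $(B-A)v=b-a$ can be written as $v = (B-A)^{\dagger}(b-a)+w$ with $w\in\ker(B-A) = \ker(B-A)^\top$ (using that $B-A$ is symmetric), hence $(b-a)^\top w = v^\top(B-A)\,w = 0$. Consequently,
\[
(b-a)^\top v \;=\; (b-a)^\top (B-A)^{\dagger}(b-a),
\]
which dispenses with the pseudoinverse and reduces the target identity to
\[
u^\top (A^{-1}-B^{-1})\,u \;=\; a^\top A^{-1} a - b^\top B^{-1} b + (b-a)^\top v.
\]
This last equality is established by substituting $u=b-Bv$ on the left and $a = b-Bv+Av$ on the right and expanding both sides; after cancellation the identity collapses to the tautology $(b-a)^\top v = v^\top(B-A)\,v$, which is once more the defining relation.

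The main obstacle is the pseudoinverse bookkeeping in the possibly singular case $B-A\in\SPSD{d}\setminus\SPD{d}$; once the reduction $(b-a)^\top v = (b-a)^\top (B-A)^{\dagger}(b-a)$ is in hand, the remaining verification is routine quadratic algebra. This careful treatment of the pseudoinverse is essential, since in the application to \Cref{lemma:thm_IS_weight_bound_ratio_formula} the matrix $B-A$ plays the role of $\Sigma_t^{-1}-\Sigma^{-1}$, which need not be invertible when $H$ lacks full row rank or the empirical covariance $\Cov^{\emp}[(f(x_{t-1}^{(i)}))_{i=1}^N]$ is singular.
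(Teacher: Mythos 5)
Your proof is correct and takes essentially the same route as the paper's: both are direct quadratic-algebra verifications resting on the two key facts that $(A^{-1}-B^{-1})u=A^{-1}a-B^{-1}b$ and that $v^\top(B-A)v=(b-a)^\top(B-A)^{\dagger}(b-a)$ (the paper obtains the latter from $XX^{\dagger}X=X$, you from the kernel decomposition of the solution set — equivalent arguments), with your coefficient matching merely reorganizing the paper's sequential completion of squares. One minor slip in your closing commentary: in the application to \cref{lemma:thm_IS_weight_bound_ratio_formula} the matrix $B-A$ is $\Sigma_t-\Sigma=\Delta_{\Sigma}$, not $\Sigma_t^{-1}-\Sigma^{-1}$, though this does not affect the proof of the lemma itself.
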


\begin{proof}
Set $M \coloneqq A^{-1} - B^{-1}$ and observe that
\[
A^{-1}(b - a)
=
A^{-1}(B - A) v
=
(A^{-1} - B^{-1}) Bv
=
M B v.
\]
It follows for every \(x \in \mathbb{R}^d\):
\begin{align*}
	|x - a|_A^2 - |x - b|_B^2 	
	&= |x - b + b - a|_A^2 - |x - b|_{B}^2
	\\
	&= (x - b)^\top (A^{-1} - B^{-1})(x - b) + 2(x - b)^\top A^{-1}(b - a) + |b - a|_{A}^2
	\\
	&= (x - b)^\top M(x - b) + 2(x - b)^\top M B v + (b - a)^{\top} A^{-1} (b-a)
	\\
	&= (x - b + B v)^\top M(x - b + B v) - v^\top B M B v + v^\top (B-A) M B v
	\\
	&= (x - u)^\top M(x - u) - v^\top A M Bv.
\end{align*}
This proves the claim since
\[
v^\top A M Bv
=
v^\top (B-A) v
=
v^\top (B-A) (B-A)^{\dagger} (B-A) v
=
(b-a)^{\top} (B-A)^{\dagger} (b-a).
\]
\end{proof}

\section*{Acknowledgments}
\addcontentsline{toc}{section}{Acknowledgements}
%Ilja Klebanov, Claudia Schillings and Dana Wrischnig
The authors acknowledge funding by the Deutsche Forschungsgemeinschaft (DFG, German Research Foundation) --- CRC/TRR 388 ``Rough Analysis, Stochastic Dynamics and Related Fields'' --- Project ID 516748464.
The authors thank Jana de Wiljes and Tim Sullivan for helpful discussions and suggestions.

\bibliographystyle{abbrvnat}
\bibliography{myBibliography}
\addcontentsline{toc}{section}{References}

\end{document}